\documentclass[a4paper,11pt]{article}
\pdfoutput=1

\usepackage{xcolor}
\usepackage{jheppub}
\usepackage{mathtools,bm,amsthm}
\usepackage{booktabs,array,enumitem,needspace}
\hypersetup{colorlinks=true,linkcolor=red!45!black,citecolor=blue!55!black,urlcolor=blue!70!black}

\newcommand{\Tr}{\operatorname{Tr}}
\newcommand{\Pf}{\operatorname{Pf}}
\newcommand{\Hs}{\mathcal H_{\sigma}}
\newcommand{\Hu}{\mathcal H_{1}}
\newcommand{\Dsig}{D_{\sigma}}
\newcommand{\Dpsi}{D_{\psi}}
\newcommand{\Leff}{L_{\mathrm{eff}}}

\newtheorem{proposition}{Proposition}
\newtheorem{theorem}{Theorem}
\newtheorem{corollary}{Corollary}

\theoremstyle{definition}

\newtheorem{remark}{Remark}

\title{Physical reduced states and continuum characters\\
 of the lattice Kramers--Wannier defect}
\author{Yi Liang}
\affiliation{Independent Researcher, Beijing, China}
\emailAdd{iantrans2042@gmail.com}
\arxivnumber{2607.01137}
\abstract{
A non-invertible topological line fixes global defect data but does not select
a reduced density matrix in a spatially twisted sector.  For the
Kramers--Wannier defect of the critical Ising chain, we choose the equal-weight
incoherent mixture of the two charge-sector ground states and reduce it to the
ordinary full spin algebra of a complete non-wrapping prefix anchored at the
defect endpoint.  The finite-size RDM is the convex average of two
charge-sector RDMs rather than a sign-zero Gaussian proxy, and an exact
descended antiunitary enforces many-body Kramers pairing without determining
the entropy.  In the strict nested limit---circumference to infinity at fixed
prefix, then the prefix enlarged---full-RDM convergence together with the
odd--even Toeplitz/Fisher--Hartwig endpoint yields an excess entropy
$\tfrac12\log2$ over the homogeneous chain, whereas the matched invertible
$\eta$ defect yields zero.
The exact joint energy--modified-translation character of the same twisted
Hamiltonian records the finite-size roots and their multiplicities, retains
chirality in the marked scaling limit, and, using the standard Ising character
identities, resolves the four Virasoro towers of the Ising duality-twisted
sector.
}
\keywords{Field Theories in Lower Dimensions, Conformal and W Symmetry,
  Lattice Integrable Models}

\begin{document}
\setcounter{tocdepth}{2}
\maketitle

\section{Introduction}
\label{sec:introduction}

Symmetry constrains dynamics, organizes quantum sectors, and relates dual
descriptions of the same physics~\cite{BhardwajEtAl2024}.  Generalized
symmetry extends the ordinary group notion to operators of different
codimensions and to fusion structures that need not form a group.  In two
dimensions, non-invertible symmetries are implemented by topological lines
with non-group-like fusion; they constrain operator sectors, correlation
functions, and partition functions
\cite{Shao2023TASI,SchaferNameki2024,BhardwajTachikawa2018}.  Whether this
global algebraic structure leaves a signature in local reduced states, and
how such a signature appears in their entanglement, is not settled by the line
data alone.  The Kramers--Wannier defect of the critical Ising chain allows
both questions to be addressed by explicit construction.

The critical Ising model makes an exact treatment possible from the lattice to
the continuum.  Its Kramers--Wannier (KW)
duality exchanges the high- and low-temperature descriptions (equivalently,
the order and disorder variables) of the two-dimensional Ising model
\cite{KramersWannier1941I,KramersWannier1941II}.  Onsager's solution determines
its exact critical point and thermodynamics~\cite{Onsager1944}.  At
criticality, KW duality is carried by a topological line in the Ising minimal
model
\cite{BelavinPolyakovZamolodchikov1984,FrohlichFuchsRunkelSchweigert2004}.
Its fusion rule $\sigma\times\sigma=1+\psi$ makes the line non-invertible.
Composing the duality with itself produces the identity and fermion sectors
rather than a single inverse.  This branching of sectors makes the KW line a
sharp test of whether, and how, non-group-like fusion is reflected in local
entanglement structure.

Rational conformal field theory (RCFT) describes Ising boundaries and conformal
defects while organizing twisted partition functions and interface fusion
\cite{Ishibashi1989,Cardy1989,PetkovaZuber2001,BachasBrunner2008,FrohlichFuchsRunkelSchweigert2004,FrohlichFuchsRunkelSchweigert2007,OshikawaAffleck1996,OshikawaAffleck1997}.
On the lattice, the same fusion algebra has been realized together with defect
motion, modified translation, and the duality-twisted spectrum
\cite{AasenMongFendley2016,AasenFendleyMong2020,Grimm2002}.  Continuum and
lattice constructions have thus established the algebraic and spectral
identity of the KW defect.  These global data do not, however, specify the
physical reduced density matrix (RDM) of a selected state in the spatially
twisted sector or the local algebra on which it is defined.

Universal CFT results characterize interval entanglement
\cite{CalabreseCardy2004,CalabreseCardy2009}, but constructing a defect RDM
also requires choosing the orientation of the line.  That choice changes the
quantum problem.  When the line runs along Euclidean
time, it acts as a defect operator on the untwisted Hilbert space.  Winding it
around the spatial circle instead produces a spatially twisted Hilbert space
\cite{AasenMongFendley2016,AasenFendleyMong2020}.  This Hilbert space fixes the
kinematics, but not the state or the subsystem.  The global charge selects the
physical sector, and a zero-mode kernel requires a prescription for the state
within it.  The subsystem is fixed separately by the local spin algebra
assigned to the entangling region.  These choices cannot be bypassed by
passing to free fermions.  A principal
submatrix of the fermionic covariance matrix need not represent the RDM of a
spin subsystem.  Likewise, a zero entanglement level at finite size does not
establish a universal entropy constant, which requires control of the full
reduced state.  The Ising KW regulator exposes both obstructions.  Its
free-fermion description contains a localized spectator together with a
delocalized zero mode.  Identifying the defect is only the starting point.  One
must still determine the physical state and subsystem that realize its
entanglement structure.

Categorical symmetry and gauging set the two-dimensional continuum framework
for defects and their fusion~\cite{BhardwajTachikawa2018}.  The passage from
this algebra to microscopic degrees of freedom is a separate problem.  In the
Majorana/Ising chain it takes the form of non-invertible translations
\cite{SeibergShao2024}.  Tensor-network formulations reach the lattice by a
different route, encoding duality in MPO intertwiners whose composition
reproduces fusion
\cite{WilliamsonEtAl2016,BultinckEtAl2017,CiracEtAl2021,LootensDelcampOrtizVerstraete2023}.
Once the defect action has been fixed microscopically, its effect on a
subsystem becomes a well-posed question.  Entropy at conformal
interfaces~\cite{SakaiSatoh2008}, fusion of critical Ising defect
lines~\cite{BachasBrunnerRoggenkamp2013}, entropy for Ising defects
\cite{RoySaleur2022}, and both entropy and negativity for them
\cite{Rogerson2022} have been studied from complementary angles.
For defects, the full reduced-state problem has only recently come into view.
Rockwood works
directly with periodic free-fermion chains, where the KW zero mode generates
nonlocal couplings in the microscopic entanglement Hamiltonian
\cite{Rockwood2025}.  Northe and Rossi begin instead in CFT.  Their
defect-dressed RDMs expose entanglement-spectrum structure and reproduce
R\'enyi and von Neumann entropies~\cite{NortheRossi2025}.  Together, these
works provide the nearest lattice and continuum points of comparison for the
reduced-state problem addressed here.

Symmetry-resolved entanglement asks how a reduced state decomposes under
conserved charges
\cite{GoldsteinSela2018,XavierAlcarazSierra2018,BonsignoriRuggieroCalabrese2019,MurcianoDiGiulioCalabrese2020}.
Recent CFT work generalizes this decomposition to categorical and twisted
sectors~\cite{Saura2024,Das2024,ChoiRayhaunZheng2024,HeymannQuella2025}.
These constructions isolate symmetry-labelled components of reduced states or
entanglement observables.  Our focus is complementary.  Rather than decomposing
an already specified RDM, we determine the complete physical RDM and its
entanglement spectrum for a selected state in the spatial KW sector.  Its
subsystem is defined by the local spin algebra.
We address this microscopic problem directly by separating the global defect
identity from the local RDM construction.

The temporal defect operator encodes
the global non-invertible fusion algebra, while the spatially twisted
Hamiltonian supplies the twisted sector and its states
\cite{AasenMongFendley2016,AasenFendleyMong2020}; neither construction by
itself specifies a local physical density matrix.  We therefore prescribe the
charge-unresolved preparation as the equal-weight incoherent mixture of the two
charge-sector ground states and assign the ordinary full spin matrix algebra to
a complete non-wrapping proper prefix anchored at site~1.  The physical prefix RDM is the
ordinary convex average of the two charge-sector RDMs.  Each sector separately
admits a Gaussian representation
\cite{ChungPeschel2001,CheongHenley2004,Peschel2003,PeschelEisler2009}, within
the broader theory of critical-chain entanglement
\cite{VidalLatorreRicoKitaev2003}; by contrast, the sector-derived sign-zero
covariances are auxiliary comparators and do not define the finite-size
physical mixture.  Figure~\ref{fig:defect-rdm-dictionary} summarizes the
passage from temporal defect algebra to the selected spatial state and its
complete-prefix spin subsystem.

This physical RDM has an exact finite-size structure.  On every complete
proper prefix, a descended antiunitary acts within the physical reduced state
and enforces many-body Kramers pairing of its spectrum.  The pairing belongs
to the charge-unresolved spin RDM itself; it is distinct from an auxiliary
single-particle zero entanglement level.  Correspondingly, the finite-size
physical mixture cannot be identified with either of the two sector-derived
sign-zero Gaussian proxies.  These statements separate the physical
many-body degeneracy from a convenient covariance-level comparison, but they
do not by themselves fix the entropy constant.

The entropy follows from control of the complete reduced state.  At fixed
prefix size, the two charge-sector RDMs converge separately in trace norm to
the same limiting Gaussian state, so their convex average may be taken only
after this sectorwise statement is established.  The resulting comparison
with the homogeneous chain reduces to neighboring odd and even
critical-Majorana Toeplitz blocks; their common critical contribution cancels,
and the independent Fisher--Hartwig endpoint supplies the finite increment.
In the strict order $L\to\infty$ first and prefix size $m\to\infty$ second,
this gives
\begin{equation}
 \lim_{m\to\infty}\left\{
 \lim_{\substack{L\to\infty\\L>m}}
 \left[S\!\left(\rho_{{\rm can},A_m}^{(L)}\right)
      -S\!\left(\rho_{{\rm hom},A_m}^{(L)}\right)\right]\right\}
 =\frac12\log2,
 \label{eq:intro-main-theorem}
\end{equation}
whereas the matched invertible $\eta$ defect has zero shift under the same
convention.  In the Affleck--Ludwig $g$-function language, this constant is the
KW defect entropy
\cite{AffleckLudwig1991,SakaiSatoh2008,BrehmBrunnerJaudSchmidtColinet2016,GutperleMiller2016}.
It is determined by the ordered full-RDM and Toeplitz analysis, not by Kramers
pairing alone.

A separate spatial diagnostic resolves the continuum content of the same
spatially twisted lattice realization.  Energy together with modified translation gives an
exact finite-size joint character: it records which translation roots
actually occur at each excitation energy and with what multiplicities.  After
centering the root labels, the eventual no-wrap scaling limit preserves the
sign of momentum and hence chirality.  The resulting Virasoro decomposition
contains the four primary pairs $(\sigma,1)$, $(\sigma,\epsilon)$,
$(1,\sigma)$, and $(\epsilon,\sigma)$.  Thus the spatial spectrum reconstructs
the four chiral towers of the Ising duality-twisted sector rather than merely
matching energy degeneracies.  The character theorem uses the spectrum and
modified translation of the twisted Hamiltonian, while the entanglement
statements use the chosen preparation and its physical prefix RDM.  Temporal
MPO fusion fixes the global defect species but is not used in either spatial
derivation.

Section~\ref{sec:microscopic} establishes the temporal--spatial dictionary and
the microscopic regulator.
Section~\ref{sec:entanglement} constructs the physical complete-prefix RDM,
distinguishes it from the auxiliary proxies, and proves its finite-size
Kramers structure.  Section~\ref{sec:entanglement-entropy} derives the ordered
entropy theorem, and Section~\ref{sec:cft} obtains the joint
energy--translation character and its four-tower scaling limit.
Section~\ref{sec:discussion} closes by separating the state--subsystem logic
that can transfer beyond Ising from the ingredients specific to the present
regulator and geometry.

\section{Microscopic realization of the defect sector}
\label{sec:microscopic}

The KW line enters the lattice calculation in two Euclidean orientations that
answer different questions.  Along Euclidean time it acts as an operator
$\Dsig:\Hu\rightarrow\Hu$ on the untwisted Hilbert space and exposes the
global non-invertible fusion algebra.  Around the spatial circle it instead
defines a twisted Hilbert space $\Hs$, whose Hamiltonian $H^{\mathrm d}$
determines the states used in the entanglement calculation and whose modified
translation $T_\sigma$ assigns their spatial quantum numbers
\cite{FrohlichFuchsRunkelSchweigert2004,BachasBrunnerRoggenkamp2013,AasenMongFendley2016}.
Their identification as the two Euclidean windings of the same topological
Ising line $\sigma$ is the established lattice-defect input of
Refs.~\cite{FrohlichFuchsRunkelSchweigert2004,BachasBrunnerRoggenkamp2013,AasenMongFendley2016};
the finite-dimensional matrices are not equal.  Here we independently fix and
verify the temporal and spatial matrices in one convention.  Keeping their domains distinct allows the
global defect identity and the local entanglement problem to be connected
without conflation; the complete orientation and Hilbert-space dictionary is
given in Appendix~\ref{app:explicit}.

Figure~\ref{fig:defect-rdm-dictionary} tracks this passage from line orientation
to physical RDM.  It identifies the operator domain in each orientation, the fixed-charge Majorana graph of the spatial Hamiltonian, and
the complete-prefix subsystem algebra on which the spin RDM is defined.

\begin{figure}[t]
 \centering
 \includegraphics[width=\textwidth]{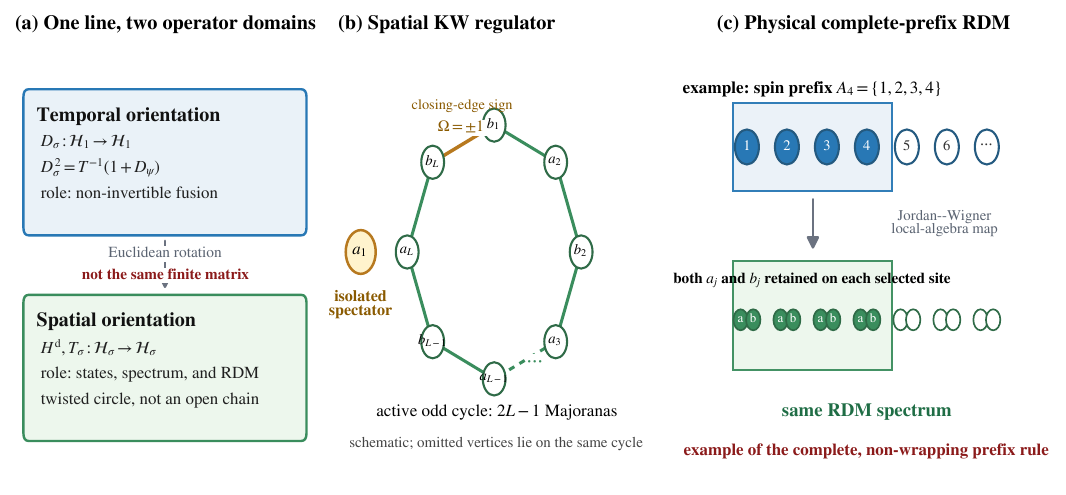}
 \caption{\textbf{Convention-complete dictionary for the physical defect
 RDM.}  (a) The temporal MPO $D_\sigma:\mathcal H_1\to\mathcal H_1$
 identifies non-invertibility, whereas $H^{\rm d}$ and $T_\sigma$ act in the
 spatially twisted space $\mathcal H_\sigma$ and define the states used below;
 the two orientations are not the same finite matrix.  (b) At fixed charge,
 the spatial quadratic form consists of the isolated spectator $a_1$ and an
 active odd cycle of $2L-1$ Majoranas; $\Omega=\pm1$ is the closing-edge sign,
 not an additional vertex.  (c) For the illustrated complete, non-wrapping
 prefix $A_4$, both Majoranas on every selected site are retained.  The
 Jordan--Wigner local-algebra map then gives equality of the spin and Gaussian
 RDM spectra.}
 \label{fig:defect-rdm-dictionary}
\end{figure}

\subsection{Temporal line operator and non-invertible algebra}\label{sec:microscopic-mpo}

We first fix the global defect identity in the same normalization and
orientation used throughout the paper.  A bond-dimension-two tensor represents
the temporal KW line, whose fusion rule is known from lattice and continuum
defect constructions
\cite{AasenMongFendley2016,FrohlichFuchsRunkelSchweigert2004}.  Deriving that
rule directly for this tensor removes any convention dependence before the
spatial RDM is introduced.  The tensor is $W_{\alpha\beta}^{st}$, where
$\alpha,\beta\in\{0,1\}$ are virtual indices and $s,t\in\{0,1\}$ label output
and input spin states.  Its nonzero entries are
\begin{align}
 W_{00}^{s0}&=\frac1{\sqrt2},
 &W_{01}^{s1}&=\frac{(-1)^s}{\sqrt2},\label{eq:MPO-W1}\\
 W_{10}^{s0}&=\frac{(-1)^s}{\sqrt2},
 &W_{11}^{s1}&=\frac1{\sqrt2}.
 \label{eq:MPO-W2}
\end{align}
On a periodic chain, closing the virtual indices in the orientation fixed in
Eq.~\eqref{eq:MPO-contraction} gives the temporal defect operator,
\begin{equation}
 \langle s_1\cdots s_L|\Dsig|t_1\cdots t_L\rangle
 =\Tr_{\mathrm v}\!
 \left(W^{s_Lt_L}\cdots W^{s_2t_2}W^{s_1t_1}\right),
 \label{eq:MPO-contraction}
\end{equation}
where $\Tr_{\mathrm v}$ traces over the virtual indices.
This reverse-written product establishes the virtual-leg orientation used
throughout.  Using the same
component array in the order $W_1W_2\cdots W_L$ would transpose that
orientation and reverse the neighboring index in the closed kernel.  Ring
reflection relates the two forms, but they cannot be mixed when deciding
whether the fusion factor is $T$ or $T^{-1}$.
The invertible defect is
\begin{equation}
 \Dpsi=\prod_{j=1}^{L}X_j,
 \label{eq:Dpsi}
\end{equation}
and $T$ translates the spin configuration one site to the right.

Contracting the virtual indices in Eq.~\eqref{eq:MPO-contraction} gives a
closed expression in the computational basis.  Write the spin configurations
as binary vectors $s=(s_1,\ldots,s_L)$ and $t=(t_1,\ldots,t_L)$ in
$\mathbb F_2^L$, where $\mathbb F_2=\{0,1\}$ is the two-element field with
addition modulo~2 and $\mathbb F_2^L$ is the set of $L$-component binary
vectors.  Take all site indices periodically.  Then
\begin{equation}
 \langle s|\Dsig|t\rangle
 =2^{-L/2}(-1)^{B(s,t)},
 \qquad
 B(s,t)=\sum_{j=1}^{L}s_j(t_j+t_{j+1})
 \quad(\mathrm{mod}\ 2).
 \label{eq:MPO-closed-kernel}
\end{equation}
The closed form follows directly from the stated reverse virtual contraction.
For fixed physical indices, a nonzero tensor entry requires its second virtual
index to equal $t_j$ and contributes
$2^{-1/2}(-1)^{s_j(\alpha_j+t_j)}$.  In
Eq.~\eqref{eq:MPO-contraction}, the periodic contraction identifies
$\alpha_j=t_{j+1}$, and multiplying the local phases gives
Eq.~\eqref{eq:MPO-closed-kernel}.  If the matrix product is instead written in
the opposite order while retaining the same component array, the exponent is
$\sum_js_j(t_{j-1}+t_j)$ and the translation orientation is reversed.  Direct
local-tensor contractions for $2\le L\le6$ agree entry by entry with
Eq.~\eqref{eq:MPO-closed-kernel}, with maximum difference below $10^{-14}$.

\begin{proposition}[KW fusion for arbitrary circumference]\label{prop:MPO-fusion}
For every $L\ge2$, the operator defined by
Eq.~\eqref{eq:MPO-contraction} obeys
\begin{align}
 \Dpsi\Dsig&=\Dsig\Dpsi=\Dsig,
 \label{eq:MPO-absorption}\\
 [T,\Dsig]&=0,
 \label{eq:MPO-translation}\\
 \Dsig^2&=T^{-1}(1+\Dpsi),
 \label{eq:MPO-fusion}\\
 \operatorname{rank}\Dsig&=2^{L-1}.
 \label{eq:MPO-rank}
\end{align}
\end{proposition}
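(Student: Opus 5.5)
The plan is to work entirely with the closed kernel of Eq.~\eqref{eq:MPO-closed-kernel}, $\langle s|\Dsig|t\rangle=2^{-L/2}(-1)^{B(s,t)}$, and to turn every identity in the proposition into a statement about the $\mathbb F_2$-bilinear form $B$. I will treat the operators one at a time: $\Dpsi$ acts on basis states by adding the all-ones vector $\mathbf 1$, and $T$ acts by the cyclic index shift $(Tu)_{j+1}=u_j$, with indices periodic throughout.

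For the absorption relations \eqref{eq:MPO-absorption}, I will use that $\Dpsi$ is real and symmetric and hence permutes the basis by $s\mapsto s+\mathbf 1$. Left multiplication therefore replaces $B(s,t)$ by $B(s+\mathbf 1,t)=B(s,t)+\sum_j(t_j+t_{j+1})$. The added sum vanishes mod~2 on the ring, since each $t_j$ occurs twice. Right multiplication replaces $t$ by $t+\mathbf 1$, which leaves every combination $t_j+t_{j+1}$ unchanged. For \eqref{eq:MPO-translation}, I will observe that $B$ is a cyclic sum. It is therefore invariant under the simultaneous shift $(s,t)\mapsto(Ts,Tt)$, which gives $T\Dsig T^{-1}=\Dsig$.

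The fusion rule \eqref{eq:MPO-fusion} is the substantive step. I will first rewrite the form with the summed variable isolated, $B(s,t)=\sum_j t_j(s_j+s_{j-1})$, and then compute
\begin{equation}
 \langle s|\Dsig^2|u\rangle=2^{-L}\sum_{t\in\mathbb F_2^L}(-1)^{\sum_j t_j\,(s_j+s_{j-1}+u_j+u_{j+1})}.
\end{equation}
Character orthogonality over $\mathbb F_2^L$ reduces this to the indicator of the system $s_j+s_{j-1}=u_j+u_{j+1}$ for all $j$. Setting $w_j=s_j+u_{j+1}$, the system becomes $w_j=w_{j-1}$, so $w$ is constant. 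The solutions are therefore exactly $s_j=u_{j+1}$ and $s_j=u_{j+1}+1$.

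It remains to read off these two solution sets as operators. The first is the configuration $u$ moved one site to the left, i.e.\ the kernel of $T^{-1}$; the second is its composition with $\Dpsi$. This gives $\Dsig^2=T^{-1}(1+\Dpsi)$. I expect the main obstacle here to be purely one of orientation bookkeeping: deciding $T$ versus $T^{-1}$ requires using the reverse-written contraction consistently, together with the stated convention that $T$ moves spins to the right. I will double-check this step against the explicit $L=2,3$ kernels.

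For the rank \eqref{eq:MPO-rank}, define $P_+=(1+\Dpsi)/2$, the projector of rank $2^{L-1}$. The lower bound follows from the fusion rule: $\Dsig^2=2T^{-1}P_+$ has rank $2^{L-1}$ because $T$ is invertible, and $\operatorname{rank}\Dsig\ge\operatorname{rank}\Dsig^2$. The upper bound follows from absorption: $\Dsig P_+=\Dsig$, so $\operatorname{rank}\Dsig\le\operatorname{rank}P_+=2^{L-1}$. Together these give $\operatorname{rank}\Dsig=2^{L-1}$. None of these steps uses a restriction on $L$ beyond $L\ge2$, so the argument covers every circumference.
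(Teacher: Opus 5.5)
Your proposal is correct and follows essentially the same route as the paper: absorption and translation from the parity and cyclic structure of $B$, fusion via character orthogonality over $\mathbb F_2^L$ with the constant difference $s_j+u_{j+1}$ read as $T^{-1}$ and $T^{-1}D_\psi$, and the rank from absorption together with invertibility of $D_\sigma^2=2T^{-1}$ on the even sector. Your two-sided rank inequality is only a repackaging of the paper's ``zero on the odd sector, injective on the even sector'' argument. The paper also proves normality, $D_\sigma^\dagger D_\sigma=1+D_\psi$, but that is needed only for the subsequent corollary, not for this proposition.
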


\begin{proof}
Adding the all-one vector to either argument of $B(s,t)$ changes its exponent by
an even sum, which proves the two absorption identities.  Simultaneously
translating $s$ and $t$ only relabels the periodic sum, proving
Eq.~\eqref{eq:MPO-translation}.

For the square, insert Eq.~\eqref{eq:MPO-closed-kernel} and sum over the
intermediate binary vector $t$:
\begin{equation}
 \langle s|\Dsig^2|u\rangle
 =2^{-L}\sum_{t\in\mathbb F_2^L}
 (-1)^{B(s,t)+B(t,u)}.
 \label{eq:fusion-fourier-sum}
\end{equation}
The coefficient of $t_j$ in the exponent is
\begin{equation}
 s_j+s_{j-1}+u_j+u_{j+1}.
 \label{eq:fusion-coefficient}
\end{equation}
Orthogonality of characters of $\mathbb F_2^L$ makes the sum vanish unless all
these coefficients are zero.  Defining $d_j=s_j+u_{j+1}$ reduces the condition
to $d_j=d_{j-1}$, so there are precisely two branches,
\begin{equation}
 s_j=u_{j+1}+c,
 \qquad c\in\{0,1\}.
 \label{eq:fusion-branches}
\end{equation}
For either branch the sum in Eq.~\eqref{eq:fusion-fourier-sum} is $2^L$ and
the normalized matrix element is one.  The branch $c=0$ is $T^{-1}$ in our
translation convention; $c=1$ is $T^{-1}\Dpsi$.  This proves the fusion
identity.

The same character-orthogonality argument, now summing the product of a kernel
and its transpose, gives
\begin{equation}
 \Dsig^\dagger\Dsig
 =\Dsig\Dsig^\dagger
 =1+\Dpsi.
 \label{eq:MPO-normality}
\end{equation}
Indeed, the sum over the output configuration is nonzero precisely when the two
input configurations differ by a constant bit.  Thus $\Dsig$ is normal for
every $L$, has singular value $\sqrt2$ on the even spin-flip sector, and
vanishes on the odd sector.

Finally, absorption shows that $\Dsig$ annihilates the odd $D_\psi$ sector.  On
the even sector, Eq.~\eqref{eq:MPO-fusion} gives
$\Dsig^2=2T^{-1}$, which is invertible.  Hence $\Dsig$ is injective on the
even sector and zero on the odd sector, proving
$\operatorname{rank}\Dsig=2^{L-1}$.
\end{proof}

\begin{corollary}[Exact MPO spectral support]\label{cor:MPO-spectrum}
Choose a common eigenvector of $T$ and $D_\psi$ with
$T|p,q\rangle=e^{ip}|p,q\rangle$ and
$D_\psi|p,q\rangle=q|p,q\rangle$.  If $q=-1$, then
$D_\sigma|p,q\rangle=0$.  If $q=+1$, every $D_\sigma$ eigenvalue on that
translation sector obeys
\begin{equation}
 \lambda_\sigma^2=2e^{-ip},
 \qquad |\lambda_\sigma|=\sqrt2.
 \label{eq:MPO-eigenvalues}
\end{equation}
\end{corollary}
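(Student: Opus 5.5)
The plan is to read the corollary directly off Proposition~\ref{prop:MPO-fusion}, using only the commutation relations and the fusion identity. First I will record that $T$, $\Dpsi$, and $\Dsig$ pairwise commute. Eq.~\eqref{eq:MPO-translation} gives $[T,\Dsig]=0$. Eq.~\eqref{eq:MPO-absorption} gives $[\Dpsi,\Dsig]=0$. Finally, $[T,\Dpsi]=0$ because $\Dpsi=\prod_jX_j$ is translation invariant. Consequently each joint eigenspace $V_{p,q}$ of $(T,\Dpsi)$ is invariant under $\Dsig$, and it makes sense to speak of the $\Dsig$ eigenvalues on that sector.

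For $q=-1$ I will apply the absorption identity $\Dsig\Dpsi=\Dsig$ to the vector $|p,q\rangle$. This gives $\Dsig|p,q\rangle=\Dsig\Dpsi|p,q\rangle=-\Dsig|p,q\rangle$, so $\Dsig|p,q\rangle=0$. The argument holds for every vector in the sector, with no eigenvector assumption on $\Dsig$.

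For $q=+1$ I will restrict the fusion identity Eq.~\eqref{eq:MPO-fusion} to $V_{p,+1}$. On this sector $T^{-1}$ acts as $e^{-ip}$ and $1+\Dpsi$ acts as $2$, so $\Dsig^2|_{V_{p,+1}}=2e^{-ip}\,\mathbb 1$. Now let $v\in V_{p,+1}$ be any eigenvector of $\Dsig|_{V_{p,+1}}$ with $\Dsig v=\lambda_\sigma v$; such vectors exist because $V_{p,+1}$ is finite dimensional and invariant. Then $\lambda_\sigma^2 v=\Dsig^2v=2e^{-ip}v$, which gives Eq.~\eqref{eq:MPO-eigenvalues}. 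Taking moduli yields $|\lambda_\sigma|=\sqrt2$. Alternatively, the modulus follows from the normality relation Eq.~\eqref{eq:MPO-normality}, since $\Dsig^\dagger\Dsig=2$ on the even sector.

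None of these steps is a real obstacle. The one point needing care is the invariance of the joint sectors, which is what justifies restricting $\Dsig$ to them. Normality is not required for the eigenvalue relation itself, but it shows that $\Dsig$ is diagonalizable on each sector. In that case the eigenvalues are exactly the two square roots $\pm\sqrt2\,e^{-ip/2}$, with some multiplicities.
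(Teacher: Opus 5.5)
Your proposal is correct and follows essentially the paper's proof: absorption gives the zero branch for $q=-1$, and the fusion identity gives $\lambda_\sigma^2=2e^{-ip}$ on the $q=+1$ sector. The one small difference is that the paper cites the normality relation for the modulus and for the simultaneous spectral decomposition, whereas you read $|\lambda_\sigma|=\sqrt2$ directly off $\lambda_\sigma^2=2e^{-ip}$ and use invariance of the finite-dimensional joint sectors. Your route is slightly more economical, since $D_\sigma^2=2e^{-ip}$ on the sector already forces diagonalizability there without invoking normality.
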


\begin{proof}
Equations~\eqref{eq:MPO-absorption} and \eqref{eq:MPO-fusion} give the zero
branch and the squared eigenvalue, while Eq.~\eqref{eq:MPO-normality} fixes
its modulus.  Since $D_\sigma$ is normal and commutes with $T$, the simultaneous
spectral decomposition exists.
\end{proof}

The fusion algebra identifies the temporal line as KW.  Its defining duality
action is seen by applying the same kernel to the Ising family
\begin{equation}
 H(J,h)=-J\sum_jZ_jZ_{j+1}-h\sum_jX_j,
 \label{eq:H-Jh}
\end{equation}
where $J$ is the nearest-neighbor Ising coupling and $h$ is the transverse
field.
The translation factor and its orientation convention are the lattice
half-translation accompanying KW fusion~\cite{AasenMongFendley2016}.  The same
kernel gives local operator relations.  Flipping the input bit $t_j$
changes $B(s,t)$ by $s_{j-1}+s_j$, whereas flipping the output bit $s_j$
changes it by $t_j+t_{j+1}$.  Therefore
\begin{equation}
 \Dsig X_j=Z_{j-1}Z_j\Dsig,
 \qquad
 X_j\Dsig=\Dsig Z_jZ_{j+1}.
 \label{eq:MPO-local-duality}
\end{equation}
Using periodicity to relabel the first bond sum, Eq.~\eqref{eq:MPO-local-duality}
implies, for every $L\ge2$,
\begin{equation}
 \Dsig H(J,h)=H(h,J)\Dsig,
 \qquad
 [\Dsig,H(1,1)]=0.
 \label{eq:MPO-duality}
\end{equation}
Thus both the fusion identity and the Hamiltonian duality intertwiner are
all-circumference analytic statements for the tensor in
Eqs.~\eqref{eq:MPO-W1}--\eqref{eq:MPO-W2}.

The all-size identities in Proposition~\ref{prop:MPO-fusion} and
Eq.~\eqref{eq:MPO-duality} were also assembled as full spin matrices for
$L=3,\ldots,8$.  The maximum relative residual was
$6.67\times10^{-16}$, every rank equaled $2^{L-1}$, and
Eq.~\eqref{eq:MPO-normality} agreed to the same precision.  These calculations
check the implementation and translation convention without entering the
analytic proof.

The temporal calculation establishes the global KW fusion algebra.  It has not
yet defined a spatial state or a reduced density matrix.  Those objects belong
to $H^{\mathrm d}$ on $\Hs$, whereas
Eq.~\eqref{eq:MPO-contraction} defines an operator on $\Hu$.  Both objects are
called the KW defect via the orientation dictionary of
Section~\ref{sec:microscopic}.  The next subsection constructs the spatial
realization; no MPO matrix is inserted into the RDM calculation.

\subsection{Spatially twisted Hamiltonian and fixed-charge Majoranas}\label{sec:microscopic-spatial}

The temporal operator fixes the global fusion algebra but does not supply the
state whose entanglement is measured.  We now construct that state from the
spatially wrapped defect.  With Pauli matrices $X_j,Y_j,Z_j$ on a periodic
chain of $L$ sites, the homogeneous critical Ising Hamiltonian is
\begin{equation}
 H^{\mathrm{hom}}
 =-\sum_{j=1}^{L}X_j-\sum_{j=1}^{L}Z_jZ_{j+1},
 \qquad Z_{L+1}\equiv Z_1.
 \label{eq:Hhom}
\end{equation}
The spatially wrapped KW defect is represented, in the dressed convention, by
\begin{equation}
 H^{\mathrm d}
 =-\sum_{j=1}^{L-1}Z_jZ_{j+1}
  -\sum_{j=2}^{L}X_j
  -Z_LY_1.
 \label{eq:Hdef}
\end{equation}
Although Eq.~\eqref{eq:Hdef} is written with a distinguished bond, its geometry
is a twisted spatial circle, not an open chain.

The following fixed-charge Jordan--Wigner representation uses the standard
spin-to-fermion transformation and transverse-field Ising solution
\cite{LiebSchultzMattis1961,Pfeuty1970}.  It agrees with the spatial
duality-twisted constructions of
Refs.~\cite{AasenMongFendley2016,Grimm2002}; all signs are nevertheless derived
below in our Pauli and right-translation conventions.

The charge operator is
\begin{equation}
 \Omega=\prod_{j=1}^{L}X_j,
 \qquad \Omega=\pm1
 \label{eq:Omega}
\end{equation}
within a fixed sector.  We introduce Majoranas
\begin{equation}
 a_j=\left(\prod_{k<j}X_k\right)Z_j,
 \qquad
 b_j=i\left(\prod_{k<j}X_k\right)Z_jX_j,
 \label{eq:JW-majoranas}
\end{equation}
ordered as
\begin{equation}
 \gamma=(a_1,b_1,a_2,b_2,\ldots,a_L,b_L)^{\mathsf T}.
 \label{eq:gamma-order}
\end{equation}
Our quadratic convention is
\begin{equation}
 H=\frac{i}{4}\gamma^{\mathsf T}A\gamma,
 \qquad A^{\mathsf T}=-A\in\mathbb R^{2L\times2L}.
 \label{eq:quadratic-convention}
\end{equation}

The remaining ordered Majoranas
\begin{equation}
 (b_1,a_2,b_2,\ldots,a_L,b_L)
 \label{eq:odd-cycle-order}
\end{equation}
are governed, in the sign convention of Eq.~\eqref{eq:Hdef}, by
\begin{equation}
 H_\Omega^{\mathrm d}
 =i\sum_{j=1}^{L-1}b_ja_{j+1}
 +i\sum_{j=2}^{L}a_jb_j
 +i\Omega b_Lb_1.
 \label{eq:odd-cycle-H}
\end{equation}

\begin{theorem}[Fixed-charge graph]\label{prop:odd-cycle}
In a fixed $\Omega$ sector, Eq.~\eqref{eq:Hdef} is exactly equal to
Eq.~\eqref{eq:odd-cycle-H}.  The Majorana $a_1$ is a spectator and the sequence
in Eq.~\eqref{eq:odd-cycle-order} forms an odd cycle of length $2L-1$.
\end{theorem}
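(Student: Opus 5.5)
The proof is a term-by-term check. I will translate each of the three families of terms in Eq.~\eqref{eq:Hdef} into Majorana bilinears using Eq.~\eqref{eq:JW-majoranas}, and then read off the graph.

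\textbf{Step 1: $\Omega$ is conserved.} I first check that $\Omega$ commutes with $H^{\mathrm d}$, so that fixing $\Omega=\pm1$ is legitimate. This is clear for the $X_j$ terms. The bond $Z_jZ_{j+1}$ anticommutes with $\Omega$ at two sites, and so does $Z_LY_1$, because both $Z_L$ and $Y_1$ anticommute with the corresponding $X$. Each such term therefore commutes with $\Omega$.

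\textbf{Step 2: the bulk terms.} These are routine. Since $P_{<j}=\prod_{k<j}X_k$ commutes with $Z_j$ and $X_j$, one finds $a_jb_j=iX_j$, so that $i a_jb_j=-X_j$ for $j=2,\dots,L$. Next, $P_{<j}P_{<j+1}=X_j$ and $X_j^2=1$ give $b_ja_{j+1}=iZ_jZ_{j+1}$, so that $i b_ja_{j+1}=-Z_jZ_{j+1}$ for $j=1,\dots,L-1$. These reproduce the first two sums of Eq.~\eqref{eq:odd-cycle-H}. Because the field term $-X_1=i a_1b_1$ is absent from Eq.~\eqref{eq:Hdef}, no bulk term contains $a_1$ at this stage.

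\textbf{Step 3: the closing bond (main obstacle).} The delicate step is the closing bond, where the Jordan--Wigner string wraps around. I will compute
\[
b_Lb_1=-\,P_{<L}Z_LX_L\,Z_1X_1 .
\]
The plan is to move $X_1$ from the string past $Z_1$, which costs one sign. I will then rewrite $P_{<L}X_L$ as $\Omega$, commuting $X_L$ past $Z_L$, which costs another sign, and finally use $iX_1Z_1=Y_1$. The expected outcome is $b_Lb_1=i\Omega\, Z_LY_1$, which is equivalent to $i\Omega\, b_Lb_1=-Z_LY_1$ once $\Omega^2=1$ is imposed and $\Omega$ is replaced by its eigenvalue inside the sector. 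Every sign here depends on operator ordering. I would therefore carry out the reordering explicitly and confirm the overall sign by an independent check on $L=2$ or $L=3$ full spin matrices.

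\textbf{Step 4: the graph.} Only $b_1$ among the site-1 Majoranas appears in the closing term, since $Y_1\propto X_1Z_1$ combined with the string yields $b_1$. Hence $a_1$ enters no term of $H^{\mathrm d}_\Omega$ and is a spectator: it commutes with $H^{\mathrm d}$ restricted to the sector. The remaining $2L-1$ Majoranas of Eq.~\eqref{eq:odd-cycle-order} are joined by $(L-1)+(L-1)+1=2L-1$ edges. Consecutive entries of the ordered list are coupled, namely $b_j$–$a_{j+1}$ and $a_j$–$b_j$, and the closing edge $b_L$–$b_1$ joins the last entry to the first. Each vertex therefore has degree two and the edges trace a single closed path, so the graph is a cycle of odd length $2L-1$. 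The sign $\Omega$ sits on an edge, not on a vertex.
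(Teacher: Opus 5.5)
Your proposal is correct and follows the paper's proof essentially step for step. It uses the same local identities $-X_j=ia_jb_j$ and $-Z_jZ_{j+1}=ib_ja_{j+1}$, then a closing-bond identity, then reads off the spectator and the $(2L-1)$-cycle. Your $b_Lb_1=i\Omega Z_LY_1$, whose sign checks out, is equivalent to the paper's $b_1=-Y_1$ and $Z_L=i\Omega b_L$.

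One wording caveat: $a_1=Z_1$ anticommutes with $\Omega$ and with $Z_LY_1$, so it does not act within a fixed sector. ``Spectator'' should therefore mean only that $a_1$ is absent from the quadratic form $H^{\mathrm d}_\Omega$. That is exactly what your term-by-term check establishes, and all the theorem claims.
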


\begin{proof}
The definitions in Eq.~\eqref{eq:JW-majoranas} give
\begin{equation}
 -X_j=i a_jb_j,
 \qquad
 -Z_jZ_{j+1}=i b_ja_{j+1}.
 \label{eq:JW-local-identities}
\end{equation}
They also give $b_1=-Y_1$.  Multiplication by the fixed charge yields
\begin{equation}
 \Omega b_L=-iZ_L,
 \qquad
 Z_L=i\Omega b_L,
 \label{eq:JW-boundary-identity}
\end{equation}
and hence
\begin{equation}
 -Z_LY_1=i\Omega b_Lb_1.
 \label{eq:JW-defect-closure}
\end{equation}
No term contains $a_1$: the transverse-field sum begins at site~2, the first
bond contains $b_1a_2$, and the closure contains $b_Lb_1$.  The remaining
terms join the sequence in Eq.~\eqref{eq:odd-cycle-order} consecutively and
Eq.~\eqref{eq:JW-defect-closure} closes the cycle.  Reversing the dressed
defect sign or the charge reverses the closing orientation but not the
single-particle spectrum.
\end{proof}

The full antisymmetric matrix is $2L$ dimensional and has nullity two.  One null
direction is the spectator $a_1$, and the other is the zero mode of the odd
cycle.  The distinction between the full $2L$-dimensional representation and its
$2L-1$ active cycle is important.  We do not remove the spectator from a
physical complete-site subsystem.

The spatial defect is mobile under a modified translation $T_\sigma$.  In the
dressed convention used here it is the Clifford circuit
\begin{equation}
 T_\sigma=V\,T\,\bigl(H_L\,\mathrm{CZ}_{L,1}\bigr)V^\dagger,
 \qquad
 V=e^{-i\pi Z_1/4},
 \label{eq:Tsigma-circuit}
\end{equation}
where $H_L$ is the Hadamard gate on site~$L$,
$\mathrm{CZ}_{L,1}$ is the controlled-$Z$ gate on sites $L$ and~1, $T$
translates one site to the right, and matrix products act from right to left.  Conjugation cycles the $2L-1$ positive local terms of
$-H^{\mathrm d}$ as
\begin{align}
 Z_jZ_{j+1}&\longmapsto Z_{j+1}Z_{j+2}
 &&(1\le j\le L-2),\nonumber\\
 Z_{L-1}Z_L&\longmapsto Z_LY_1
 \longmapsto X_2,
 \label{eq:Tsigma-term-cycle}\\
 X_j&\longmapsto X_{j+1}
 &&(2\le j\le L-1),\nonumber\\
 X_L&\longmapsto Z_1Z_2.
 \nonumber
\end{align}
Thus, for every $L\ge2$,
\begin{equation}
 T_\sigma^\dagger T_\sigma=1,
 \qquad
 [T_\sigma,H^{\mathrm d}]=0.
 \label{eq:modified-translation}
\end{equation}
This is the spin-space form, in our dressing and right-translation convention,
of the lattice modified translation/Dehn twist discussed in
Refs.~\cite{AasenMongFendley2016,HauruEtAl2016}.  Related lattice
non-invertible translations and their Majorana/Ising interpretation are
developed in Ref.~\cite{SeibergShao2024}; no identification of their operator
convention with Eq.~\eqref{eq:Tsigma-circuit} is assumed here.  The cycle has
$2L-1$ steps, fixing the effective circumference
\begin{equation}
 \Leff=L-\frac12.
 \label{eq:Leff}
\end{equation}
This quantity enters the exact spatial-circle energy and momentum formulas in
Section~\ref{sec:cft}.  It is not an adjustable fit parameter.

The fixed-charge odd-cycle Fock spectrum agrees with the spin-sector spectrum
of Eq.~\eqref{eq:Hdef} for $L=2,4,6,8$, with maximum discrepancy below
$4\times10^{-14}$.  Equations~\eqref{eq:Tsigma-circuit}--\eqref{eq:modified-translation}
are all-size Clifford identities, and explicit matrices through $L=8$ agree at
machine precision.  The spatial realization has therefore supplied a
fixed-charge Hamiltonian, its spectator-plus-odd-cycle graph, and the modified
translation fixing its effective circumference.  These global objects now
provide the state and subsystem data needed to construct the physical RDM in
the next section.

\section{Defect reduced density matrix and entanglement spectrum}\label{sec:entanglement}

The spatial Hamiltonian and its two-dimensional zero sector determine the
global ground space, but not the charge-unresolved physical state or its local
reduction.  We therefore specify both the global preparation and the
subsystem algebra.  We prescribe the preparation as the equal-weight
incoherent mixture of the two physical charge-sector ground states.  For a
complete non-wrapping proper prefix anchored at site~1, the subsystem is the
ordinary full spin matrix algebra, so the ordinary partial trace gives the
corresponding convex average of the two sector RDMs.  Each sector separately admits a Gaussian representation; the
sign-zero covariances introduced below instead serve as auxiliary finite-size
comparators and do not define the physical mixture.

\subsection{Physical state, subsystem algebra, and spin--Gaussian RDM theorem}\label{sec:entanglement-states}

We begin by fixing the state in the zero sector and the algebra retained by the
prefix.  With the global charge
$
 \Omega=\prod_{j=1}^{L}X_j
$
and projectors $P_\omega=(I+\omega\Omega)/2$, let
$\rho_{\omega,L}$ be the unique ground-state projector in the sector
$\Omega=\omega$, $\omega=\pm1$.  The uniqueness is the fixed-charge statement
that the zero-mode occupation is selected rather than doubled; its exact
fermionic selector is derived in
Eq.~\eqref{appE:zero-selector}.  The charge-unresolved state studied here is the
prescribed equal-charge preparation invariant under the exact sector-exchange
antiunitary derived below:
\begin{equation}
 \rho_{{\rm can},L}=\frac12\bigl(\rho_{+,L}+\rho_{-,L}\bigr).
 \label{eq:physical-global-ensemble}
\end{equation}
For the complete proper prefix $A_n=\{1,\ldots,n\}$, $1\leq n<L$, the
retained observable algebra is $B((\mathbb C^2)^{\otimes n})$, with no local
charge-sector direct sum.  Its physical RDM is therefore the ordinary partial
trace
\begin{equation}
 \rho_{{\rm can},A_n}^{(L)}
 =\operatorname{Tr}_{A_n^c}\rho_{{\rm can},L}
 =\frac12\left(\rho_{+,A_n}^{(L)}+\rho_{-,A_n}^{(L)}\right).
 \label{eq:physical-prefix-rdm}
\end{equation}
These definitions fix the many-body density matrix whose spectrum and entropy
we seek.  They do not identify that matrix with the Gaussian state associated
with a single covariance.  The covariance formalism below applies separately
to the fixed-charge pure states and to explicitly labeled auxiliary comparison
states.

The homogeneous chain and the matched invertible control used below are fixed
in the same spin and Jordan--Wigner conventions.  On the full spin Hilbert
space put
\begin{equation}
 H_\eta=-\sum_{j=1}^{L}X_j
 -\sum_{j=1}^{L-1}Z_jZ_{j+1}-\eta Z_LZ_1,
 \qquad \eta\in\{+1,-1\}.
 \label{eq:eta-control-Hamiltonian}
\end{equation}
Thus $\eta=+1$ is homogeneous and $\eta=-1$ reverses the spatial seam.  The
matched $\eta$ control is the unique global ground state of the $\eta=-1$
Hamiltonian: it lies in $\Omega=+1$ and, for $L\geq3$, is the
Pfaffian-$+1$ completion of the resulting periodic zero complex mode.  At
$L=2$ the same prescription is applied directly to the spin Hamiltonian,
where bulk and seam are the same bond.  This spatial seam and the temporal
operator $D_\psi=\Omega$ are the two windings of the same invertible Ising
line~\cite{AasenMongFendley2016}; they are not equal as finite matrices.  We
denote its state and prefix RDM by $\rho_{\eta,L}$ and
$\rho_{\eta,A}^{(L)}$, and its physical covariance by $\Gamma_\eta$.

For these fixed-charge, control, and auxiliary states we use the Majorana-covariance formulation of the
standard free-fermion correlation-matrix construction
\cite{ChungPeschel2001,CheongHenley2004,Peschel2003,PeschelEisler2009}.  For a
state represented in a fixed Jordan--Wigner CAR chart, the real antisymmetric
covariance is
\begin{equation}
 \Gamma_{mn}=-\langle i\gamma_m\gamma_n\rangle,
 \qquad m\ne n,
 \qquad \Gamma_{mm}=0.
 \label{eq:covariance-definition}
\end{equation}
If $iA_\omega$ has no zero eigenvalue, its pure ground-state covariance is
obtained from the spectral sign.  In either KW charge block, however,
$iA_\omega$ has a two-dimensional kernel.  We use
\begin{equation}
 \Gamma_{\omega,0}=-i\,\operatorname{sign}(iA_\omega).
 \label{eq:sector-sign-zero-covariance}
\end{equation}
as the auxiliary sign-zero covariance, with $\operatorname{sign}(0)=0$.  It is not the covariance of a physical fixed-charge state: its Pfaffian
vanishes, whereas a state supported in the sector $\Omega=\omega$ has charge
expectation $\omega$.

Let $Z_\omega\in\mathbb R^{2L\times2}$ have orthonormal columns spanning the
kernel of $A_\omega$ and let
\begin{equation}
 J=\begin{pmatrix}0&1\\-1&0\end{pmatrix}.
 \label{eq:Jzero}
\end{equation}
The two pure Gaussian completions within that same charge-block CAR chart are
\begin{equation}
 \Gamma_{\omega,\pm}=\Gamma_{\omega,0}
 \pm Z_\omega JZ_\omega^{\mathsf T},
 \qquad
 \Gamma_{\omega,\pm}^2=-1.
 \label{eq:pure-completions}
\end{equation}
Only one completion represents the physical ground state in the block: in the
ordered Majorana convention of Eq.~\eqref{eq:gamma-order}, it is the unique
choice $\Gamma_{\omega,{\rm phys}}$ satisfying
\begin{equation}
 \operatorname{Pf}(\Gamma_{\omega,{\rm phys}})=\omega.
 \label{eq:physical-sector-pfaffian}
\end{equation}
Averaging the two completions of one fixed matrix $A_\omega$ produces the
Gaussian state associated with $\Gamma_{\omega,0}$, but the physical ensemble
in Eq.~\eqref{eq:physical-global-ensemble} averages selected states from two
different charge blocks.  The two operations are not interchangeable.

This distinction is already visible on the smallest nontrivial complete
prefix.  Let $\rho_{G,\omega,0;A_n}^{(L)}$ denote the auxiliary Gaussian
restriction generated by $\Gamma_{\omega,0}$.  At $(L,n)=(3,2)$, either charge
block obeys the exact purity mismatch
\begin{equation}
 \left|\operatorname{Tr}\!\left[
   \bigl(\rho_{{\rm can},A_2}^{(3)}\bigr)^2\right]
 -\operatorname{Tr}\!\left[
   \bigl(\rho_{G,\omega,0;A_2}^{(3)}\bigr)^2\right]\right|
 =\frac{\sqrt5}{50}.
 \label{eq:physical-proxy-purity-mismatch}
\end{equation}
Thus neither sector-derived sign-zero proxy universally reproduces the
finite-size physical RDM.  In particular, a zero single-particle level of
$\Gamma_{\omega,0}$ is not a physical finite-size $\xi=0$ theorem for
$\rho_{{\rm can},A_n}^{(L)}$; Appendix~\ref{app:physical-rdm} gives the full
spectra and proof.

The physical RDM nevertheless has an exact finite-size spectral structure of
its own.  Complex conjugation in the computational basis, followed by $Z_1$,
exchanges the two global charge-sector ground states.  Combining that
antiunitary with $\Omega$ and descending it through the ordinary partial trace
gives, for every complete proper prefix,
\begin{equation}
 \Theta_{A_n}\rho_{{\rm can},A_n}^{(L)}\Theta_{A_n}^{-1}
 =\rho_{{\rm can},A_n}^{(L)},
 \qquad
 \Theta_{A_n}^{2}=-I.
 \label{eq:body-prefix-kramers}
\end{equation}
Kramers' argument therefore pairs every eigenvalue of the physical prefix RDM,
including zero.  This is the same antiunitary mechanism that underlies
symmetry-enforced entanglement-spectrum degeneracies in one-dimensional
symmetry-protected phases~\cite{PollmannEtAl2010}, although here it acts on a
charge-unresolved critical-defect RDM rather than a gapped SPT edge spectrum.  Since $\Theta_{A_n}$ exchanges the two local-parity blocks,
the characteristic polynomial is a square and one may write
\begin{equation}
 \rho_{{\rm can},A_n}^{(L)}
 \cong\frac{I_2}{2}\otimes\sigma_{L,n},
 \qquad \operatorname{Tr}\sigma_{L,n}=1.
 \label{eq:body-kramers-factor}
\end{equation}
This is an abstract many-body factorization, not a Gaussian zero mode or a
canonical localized qubit.  It proves exact finite-size pairing but does not
determine the ordered defect-entropy coefficient, which requires the separate
full-RDM and Toeplitz/Fisher--Hartwig analysis of
Section~\ref{sec:entanglement-entropy}.  Appendix~\ref{app:physical-rdm} supplies
the descent, parity-block equivalence, and scope proof.

\begin{theorem}[Parity--Pfaffian identity]\label{thm:parity-pfaffian}
For any pure Gaussian parity eigenstate in the convention of
Eq.~\eqref{eq:covariance-definition},
\begin{equation}
 \left\langle\prod_{j=1}^{L}X_j\right\rangle=\Pf(\Gamma).
 \label{eq:parity-pfaffian}
\end{equation}
\end{theorem}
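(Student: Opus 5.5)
Our approach is to rewrite the parity operator as a product of Majorana bilinears and then evaluate that product with Wick's theorem. By Eq.~\eqref{eq:JW-local-identities}, each site obeys $X_j=-i a_jb_j$. The string factors in Eq.~\eqref{eq:JW-majoranas} cancel pairwise inside each bilinear, so this identity holds site by site. Since different $a_jb_j$ commute, we obtain
\begin{equation}
 \prod_{j=1}^{L}X_j=\prod_{j=1}^{L}\bigl(-i\gamma_{2j-1}\gamma_{2j}\bigr)
\end{equation}
in the ordering of Eq.~\eqref{eq:gamma-order}.

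The expectation of this product of $2L$ distinct Majoranas in a Gaussian state is given by Wick's theorem as a Pfaffian of two-point functions. For $m\neq n$ and distinct operators, Eq.~\eqref{eq:covariance-definition} gives $\langle\gamma_m\gamma_n\rangle=i\Gamma_{mn}$. We then have
\begin{equation}
 \Bigl\langle\prod_{k=1}^{2L}\gamma_k\Bigr\rangle=\Pf\bigl(i\Gamma\bigr)=i^{L}\Pf(\Gamma),
\end{equation}
using homogeneity $\Pf(cM)=c^{L}\Pf(M)$ for a $2L\times2L$ matrix. Multiplying by $(-i)^L$ gives $(-i)^L i^L\Pf(\Gamma)=\Pf(\Gamma)$, which is Eq.~\eqref{eq:parity-pfaffian}.

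The main obstacle is checking that Wick's theorem applies with exactly this ordering and sign convention. To settle it cleanly, we would bring $\Gamma$ to real canonical form $\Gamma=O\bigl(\bigoplus_k \lambda_k J\bigr)O^{\mathsf T}$ with $O\in O(2L)$. We then pass to the rotated Majoranas $\tilde\gamma=O^{\mathsf T}\gamma$. The top-degree Clifford element satisfies $\prod\gamma_k=\det(O)\prod\tilde\gamma_k$, and likewise $\Pf(\Gamma)=\det(O)\prod_k\lambda_k$, so both sides transform by the same factor $\det O$.

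In the rotated frame the state is a product over independent modes. Each factor $-i\tilde\gamma_{2k-1}\tilde\gamma_{2k}$ has expectation $\lambda_k$ by Eq.~\eqref{eq:covariance-definition}. The expectation of the full product factorizes because the state is Gaussian, which is where the Gaussian hypothesis enters. For a pure state $\lambda_k=\pm1$, and the product $\prod_k\lambda_k$ is the parity eigenvalue.

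The parity-eigenstate hypothesis only guarantees that the left side is $\pm1$ and that the state is physical in a fixed-parity chart. The identity itself holds for any Gaussian state. We would also note that the $J$-sign convention of Eq.~\eqref{eq:Jzero} is consistent with $\Pf(J)=1$, so no extra sign arises from the canonical blocks.
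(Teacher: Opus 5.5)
Your proposal is correct and follows the paper's proof essentially verbatim. You use $X_j=-ia_jb_j$ to write the parity as $(-i)^L\gamma_1\cdots\gamma_{2L}$, apply Wick's theorem with $\langle\gamma_m\gamma_n\rangle=i\Gamma_{mn}$ to get $i^L\Pf(\Gamma)$, and the phases cancel. Your canonical-form paragraph is an extra consistency check that the paper omits; there, the parity eigenvalue is $\det(O)\prod_k\lambda_k=\Pf(\Gamma)$, which equals $\prod_k\lambda_k$ alone only when $\det O=1$.
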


\begin{proof}
Equation~\eqref{eq:JW-local-identities} gives
$X_j=-ia_jb_j$.  Wick's theorem gives
\begin{equation}
 \left\langle\gamma_1\gamma_2\cdots\gamma_{2L}\right\rangle
 =\Pf\!\left(\langle\gamma_m\gamma_n\rangle\right)
 =i^L\Pf(\Gamma),
\end{equation}
because $\langle\gamma_m\gamma_n\rangle=i\Gamma_{mn}$ for $m\ne n$.
Multiplication by $(-i)^L$ from $\prod_jX_j$ proves the identity.
\end{proof}

For the homogeneous and invertible-$\eta$ ground states used below, the
corresponding physical selector is
\begin{equation}
 \operatorname{Pf}(\Gamma)=+1.
 \label{eq:physical-parity}
\end{equation}
Together with Eq.~\eqref{eq:physical-sector-pfaffian}, this condition is
invariant under a change of basis in the numerical zero-mode subspace and
replaces any convention based on the index returned by a QR decomposition or
eigensolver.

For a prefix region containing $n$ complete spin sites,
\begin{equation}
 A_n=\{1,2,\ldots,n\},
 \label{eq:prefix-region}
\end{equation}
the fermionic restriction retains both Majoranas $(a_j,b_j)$ for every
$j\in A_n$.  In particular, if site~1 belongs to the region then the spectator
$a_1$ is retained.  Thus the physical prefix RDM is defined by the complete set
of $2n$ Majoranas associated with those sites.

Let $P_{A_n}$ denote the corresponding $2n\times2L$ coordinate projection.  It
restricts the covariance to
\begin{equation}
 \Gamma_{A_n}=P_{A_n}\Gamma P_{A_n}^{\mathsf T}.
 \label{eq:restricted-covariance}
\end{equation}
If the eigenvalues of $i\Gamma_{A_n}$ are $\pm\nu_k$, with
$0\le\nu_k\le1$, the single-particle entanglement levels are
\begin{equation}
 \xi_k=\log\frac{1+\nu_k}{1-\nu_k},
 \label{eq:single-particle-ES}
\end{equation}
and the reduced-state eigenvalues are products of
$(1\pm\nu_k)/2$.

\begin{theorem}[Prefix spin--Gaussian RDM equivalence]\label{thm:prefix-rdm}
Let $|\Psi\rangle$ be a spin state whose Jordan--Wigner image is a Gaussian
state with covariance $\Gamma$.  For the complete non-wrapping prefix $A_n$,
the spin reduced density matrix and the Gaussian reduced state determined by
$\Gamma_{A_n}$ have the same spectrum, for every $L$ and $n\le L$.
\end{theorem}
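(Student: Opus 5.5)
The plan is to show that the Jordan--Wigner map sends the full spin matrix algebra of a complete prefix onto the algebra generated by the $2n$ Majoranas of that prefix. Then the two reduced states are the same functional on the same algebra, and so have the same spectrum.

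\textbf{Step 1: the prefix Majoranas stay inside the prefix.} For $j\le n$, the string $\prod_{k<j}X_k$ in Eq.~\eqref{eq:JW-majoranas} involves only sites $k<j\le n$. Hence $a_j$ and $b_j$ lie in $B((\mathbb C^2)^{\otimes n})\otimes I_{A_n^c}$. This is exactly where the non-wrapping prefix anchored at site~1 is used: no Jordan--Wigner string leaves $A_n$. In particular the spectator $a_1=Z_1$ is retained.

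\textbf{Step 2: the prefix Majoranas generate the full spin algebra.} Eq.~\eqref{eq:JW-local-identities} gives $X_j=-ia_jb_j$. Inverting the string then gives $Z_j=\bigl(\prod_{k<j}(-ia_kb_k)\bigr)a_j$, and $Y_j=-iZ_jX_j$. So the $2n$ Majoranas $\{a_j,b_j\}_{j\le n}$ generate every Pauli on $A_n$, and hence all of $B((\mathbb C^2)^{\otimes n})$. A dimension count gives the same conclusion: the $2n$ Majoranas generate a Clifford algebra of dimension $2^{2n}$, which is simple for an even number of generators, so the inclusion is an isomorphism.

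\textbf{Step 3: the two states agree on this algebra.} The spin RDM $\rho_{A_n}$ is the unique density matrix on $(\mathbb C^2)^{\otimes n}$ with $\operatorname{Tr}(\rho_{A_n}O)=\langle\Psi|O\otimes I|\Psi\rangle$ for all $O$ in the prefix algebra. By Step 2 it suffices to check this on the monomials $\gamma_{m_1}\cdots\gamma_{m_k}$ in prefix Majoranas. Because $|\Psi\rangle$ is Gaussian, these expectations are given by Wick's theorem as Pfaffians of submatrices of $\Gamma_{A_n}$ from Eq.~\eqref{eq:restricted-covariance}. On the other side, the Gaussian state on the $2n$-Majorana Clifford algebra with covariance $\Gamma_{A_n}$ is defined by exactly these Wick values. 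The two states therefore coincide as functionals on the same matrix algebra.

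\textbf{Step 4: conclude equality of spectra.} Equal functionals on the same matrix algebra, realized in two representations related by the isomorphism of Step 2, have unitarily equivalent density matrices. By Wedderburn uniqueness, any two irreducible representations of $B(\mathbb C^{2^n})$ are unitarily equivalent. Block-diagonalizing $i\Gamma_{A_n}$ by a real orthogonal map then gives the product spectrum with factors $(1\pm\nu_k)/2$ quoted after Eq.~\eqref{eq:single-particle-ES}. The case $n=L$ is trivial.

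\textbf{Main obstacle.} I expect the delicate point to be applying Wick's theorem to the fixed-charge states in Step 3. There $|\Psi\rangle$ must genuinely be Gaussian in the global chart. For the KW sectors, a spin state in a fixed $\Omega$ sector corresponds to a physical completion $\Gamma_{\omega,{\rm phys}}$, selected by Eq.~\eqref{eq:physical-sector-pfaffian} via Theorem~\ref{thm:parity-pfaffian}.

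I would handle this by noting that the theorem assumes the Jordan--Wigner image is Gaussian. The only remaining check is that odd monomials have vanishing expectation, which holds because $|\Psi\rangle$ is a parity eigenstate. With that in hand, Steps 1 and 2 are purely algebraic and hold for every $L$.
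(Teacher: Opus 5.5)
Your proposal is correct and follows essentially the same route as the paper. You use that the Jordan--Wigner strings of a non-wrapping prefix stay inside it, that $X_j=-ia_jb_j$ and the inverted strings recover the full prefix spin algebra, and that Wick's theorem fixes the reduced state from $\Gamma_{A_n}$. Where the paper simply invokes the Jordan--Wigner algebra isomorphism and the standard Gaussian RDM construction, you spell out the Clifford-algebra dimension count and the uniqueness of the irreducible representation of $B(\mathbb C^{2^n})$.
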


\begin{proof}
For $j\le n$, both Majoranas $a_j,b_j$ in
Eq.~\eqref{eq:JW-majoranas} contain only spin operators on sites $1,\ldots,j$.
Conversely,
\begin{equation}
 X_j=-ia_jb_j,
 \qquad
 Z_j=\left(\prod_{k<j}X_k\right)a_j,
 \label{eq:inverse-prefix-JW}
\end{equation}
so the matrix algebra generated by the first $n$ spin sites equals the matrix
algebra generated by the first $2n$ Majoranas.  The two restrictions of
$|\Psi\rangle$ are therefore related by the finite-dimensional
Jordan--Wigner algebra isomorphism.  A Gaussian restriction is uniquely
determined by $\Gamma_{A_n}$ through Wick's theorem and the standard Gaussian
RDM construction~\cite{Peschel2003}, proving spectral equivalence.
\end{proof}

For the homogeneous physical ground state, the physical invertible-$\eta$
state selected by Eq.~\eqref{eq:physical-parity}, and the fixed-charge KW
states selected by Eq.~\eqref{eq:physical-sector-pfaffian}, the premise of
Theorem~\ref{thm:prefix-rdm} is fixed by the quadratic diagonalization and
physical-sector dictionary.  As an independent finite-size check, the complete
spin and Gaussian RDM spectra were compared for
$L=4,6,8,10$ across 45 regions.  The maximum
absolute difference between corresponding RDM eigenvalues is
$1.78\times10^{-15}$, and the maximum trace error is
$1.45\times10^{-15}$.

\begin{remark}[Scope]
Theorem~\ref{thm:prefix-rdm} is used only for complete, non-wrapping prefix
regions.  Wrapping intervals require a separate Jordan--Wigner/subsystem-algebra
analysis and are not inferred from entropy coincidence.
\end{remark}

\subsection{Zero-sector locality and the ordered limit}\label{sec:entanglement-locality}

The prefix spin--Gaussian theorem applies separately to each physical
fixed-charge ground state.  To control the ordered limit, we compare that
state within its own charge-block CAR chart with the auxiliary sign-zero
Gaussian state.  Fix $\omega\in\{+1,-1\}$ and suppress the charge label in
this subsection, writing $\Gamma_0=\Gamma_{\omega,0}$ and
$\Gamma_{\rm phys}=\Gamma_{\omega,{\rm phys}}$.  This comparison does not
identify either state with the charge-unresolved RDM in
Eq.~\eqref{eq:physical-prefix-rdm}.

We work in the ordered Majorana basis of Eq.~\eqref{eq:gamma-order}.  At the
critical point, $a_1$ is absent from the quadratic form and the remaining
$2L-1$ vertices form an odd cycle.  A convenient orthonormal basis of the
kernel is
\begin{equation}
 z_{\mathrm{sp}}=e_{a_1},
 \qquad
 z_{\mathrm{cyc}}=\frac{1}{\sqrt{2L-1}}
 \sum_{m\ne a_1}s_m e_m,
 \label{eq:kernel-basis}
\end{equation}
where $e_m$ is the coordinate unit vector at Majorana vertex $m$,
``sp'' denotes the spectator mode, ``cyc'' denotes the odd-cycle mode, and
$s_m=\pm1$ is fixed by the oriented-cycle recurrence.  The cycle zero vector
has uniform magnitude $1/\sqrt{2L-1}$; its charge-dependent sign pattern is
immaterial for the restriction-norm estimates below.
The two vectors exhaust the kernel: the spectator contributes one null
direction and an odd real antisymmetric cycle contributes one.

The auxiliary covariance sets the spectral sign to zero on this kernel.  The
two mathematical pure completions of the same fixed-charge matrix are
\begin{equation}
 \Gamma_\pm=\Gamma_0\pm
 \left(z_{\mathrm{sp}}z_{\mathrm{cyc}}^{\mathsf T}
 -z_{\mathrm{cyc}}z_{\mathrm{sp}}^{\mathsf T}\right).
 \label{eq:completion-explicit}
\end{equation}
Exactly one of them is $\Gamma_{\rm phys}$, selected by
Eq.~\eqref{eq:physical-sector-pfaffian}; the other has the opposite Pfaffian.
The expression fixes their rank-two displacement from the auxiliary comparator
without choosing a localized linear combination of numerical zero
eigenvectors.

Let $A_\ell$ be the prefix containing $2\ell$ complete spin sites.  It contains
$4\ell$ Majoranas: the spectator and $4\ell-1$ vertices of the odd cycle.
Consequently,
\begin{equation}
 \left\|P_{A_\ell}z_{\mathrm{cyc}}\right\|^2
 =\frac{4\ell-1}{2L-1}.
 \label{eq:restricted-zero-mass}
\end{equation}
Moreover, $P_{A_\ell}z_{\mathrm{sp}}$ is a unit vector orthogonal to
$P_{A_\ell}z_{\mathrm{cyc}}$.  Restricting Eq.~\eqref{eq:completion-explicit}
therefore gives the following exact finite-size statement.

\begin{theorem}[Zero-sector locality]\label{thm:zero-locality}
For the complete-prefix-site region $A_\ell$,
\begin{equation}
 \left\|P_{A_\ell}(\Gamma_\pm-\Gamma_0)
 P_{A_\ell}^{\mathsf T}\right\|_{\mathrm{op}}
 =\sqrt{\frac{4\ell-1}{2L-1}}.
 \label{eq:locality-opnorm}
\end{equation}
\end{theorem}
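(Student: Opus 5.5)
The plan is to compute the restricted rank-two difference directly. By Eq.~\eqref{eq:completion-explicit},
\begin{equation*}
 \Gamma_\pm-\Gamma_0=\pm\left(z_{\mathrm{sp}}z_{\mathrm{cyc}}^{\mathsf T}-z_{\mathrm{cyc}}z_{\mathrm{sp}}^{\mathsf T}\right).
\end{equation*}
Conjugating by $P_{A_\ell}$ gives $\pm(uv^{\mathsf T}-vu^{\mathsf T})$, with $u=P_{A_\ell}z_{\mathrm{sp}}$ and $v=P_{A_\ell}z_{\mathrm{cyc}}$. The sign is irrelevant for the operator norm, so both completions are handled at once.

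The first step records the geometric data of $u$ and $v$. Because the prefix is complete and anchored at site~1, it retains the vertex $a_1$, so $u=e_{a_1}$ is a unit vector. Since $z_{\mathrm{cyc}}$ has no $a_1$ component, $v$ is supported on the other $4\ell-1$ retained vertices and is therefore orthogonal to $u$. Each of those entries has magnitude $1/\sqrt{2L-1}$ by Eq.~\eqref{eq:kernel-basis}, which gives $\|v\|^2=(4\ell-1)/(2L-1)$, i.e.\ Eq.~\eqref{eq:restricted-zero-mass}. The sign pattern $s_m$ plays no role here. The one point to justify is the uniform-magnitude claim for the cycle zero vector. I would check it from the odd-cycle recurrence: every vertex has exactly two neighbours with unit couplings, so the kernel condition forces $|z_m|=|z_{m+2}|$ around the cycle. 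Because the cycle length $2L-1$ is odd, stepping by two visits every vertex, so all entries have equal magnitude. I treat this as given by the paper's description of the kernel.

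The second step is the norm of the rank-two antisymmetric matrix $M=uv^{\mathsf T}-vu^{\mathsf T}$ with $u\perp v$. Set $\hat v=v/\|v\|$, assuming $v\neq0$; this holds for $\ell\ge1$. Then
\begin{equation*}
 M=\|u\|\,\|v\|\left(\hat u\hat v^{\mathsf T}-\hat v\hat u^{\mathsf T}\right),
\end{equation*}
where $\hat u,\hat v$ are orthonormal. $M$ vanishes on $\mathrm{span}(\hat u,\hat v)^\perp$, and on that span it acts as $\|u\|\|v\|$ times the matrix $J$ of Eq.~\eqref{eq:Jzero}. Hence $M^{\mathsf T}M=\|u\|^2\|v\|^2$ times the projector onto the span, and $\|M\|_{\mathrm{op}}=\|u\|\|v\|=\sqrt{(4\ell-1)/(2L-1)}$.

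I expect no real obstacle. The only delicate bookkeeping is confirming that the prefix $A_\ell$, which consists of $2\ell$ sites and $4\ell$ Majoranas, contains $a_1$ together with exactly $4\ell-1$ cycle vertices. This needs the non-wrapping condition $2\ell<L$, which ensures that the restriction of $z_{\mathrm{cyc}}$ does not coincide with the whole cycle.
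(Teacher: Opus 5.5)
Your proof is correct and follows the paper's argument: restrict the rank-two completion to $\pm(uv^{\mathsf T}-vu^{\mathsf T})$, use $\|u\|=1$ and $u^{\mathsf T}v=0$ to get both nonzero singular values equal to $\|v\|$, and read off $\|v\|^2=(4\ell-1)/(2L-1)$ from the uniform-magnitude cycle zero vector. Your step-two recurrence check of that uniform magnitude fills in a detail the paper only asserts; note that the count needs only $2\ell\le L$, since at $2\ell=L$ both sides equal $1$.
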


\begin{proof}
Write $u=P_{A_\ell}z_{\mathrm{sp}}$ and
$v=P_{A_\ell}z_{\mathrm{cyc}}$.  The restricted difference is the
antisymmetric rank-two operator $\pm(uv^{\mathsf T}-vu^{\mathsf T})$.
Because $\|u\|=1$ and $u^{\mathsf T}v=0$, its two nonzero singular values are
both $\|v\|$.  Equation~\eqref{eq:restricted-zero-mass} then gives
Eq.~\eqref{eq:locality-opnorm}.
\end{proof}

The identity was also evaluated on 32 $(L,\ell)$ pairs through $L=512$; the
maximum discrepancy in both Eqs.~\eqref{eq:restricted-zero-mass} and
\eqref{eq:locality-opnorm} was $2.22\times10^{-16}$.

\begin{corollary}[Fixed-block equivalence]\label{cor:ordered-equivalence}
At fixed $\ell$,
\begin{equation}
 \lim_{L\to\infty}
 \left\|P_{A_\ell}(\Gamma_\pm-\Gamma_0)
 P_{A_\ell}^{\mathsf T}\right\|_{\mathrm{op}}=0,
 \label{eq:fixed-block-limit}
\end{equation}
with leading scale $L^{-1/2}$.
\end{corollary}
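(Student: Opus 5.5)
The corollary follows directly from the exact finite-size identity of Theorem~\ref{thm:zero-locality}. No further estimate is needed, only the asymptotics of an explicit closed form.

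First, fix $\ell$ and take $L>2\ell$. This keeps $A_\ell$, which has $2\ell$ complete sites, a proper non-wrapping prefix, so Eq.~\eqref{eq:locality-opnorm} applies for every such $L$. That equation gives
\begin{equation*}
 \left\|P_{A_\ell}(\Gamma_\pm-\Gamma_0)P_{A_\ell}^{\mathsf T}\right\|_{\mathrm{op}}
 =\sqrt{\frac{4\ell-1}{2L-1}} .
\end{equation*}
The numerator is independent of $L$ and the denominator diverges. The right-hand side therefore tends to zero, which proves Eq.~\eqref{eq:fixed-block-limit}.

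For the leading scale, rewrite the closed form as
\begin{equation*}
 \sqrt{\frac{4\ell-1}{2L-1}}
 =\sqrt{\frac{4\ell-1}{2}}\,L^{-1/2}\left(1-\frac{1}{2L}\right)^{-1/2}
 =\sqrt{\frac{4\ell-1}{2}}\,L^{-1/2}\bigl(1+O(L^{-1})\bigr).
\end{equation*}
Hence $L^{1/2}$ times the norm converges to the nonzero constant $\sqrt{(4\ell-1)/2}$. The decay is exactly of order $L^{-1/2}$, with no faster cancellation, because $\ell\ge1$ makes the prefactor strictly positive.

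The only point requiring care is that the premises of Theorem~\ref{thm:zero-locality} hold uniformly along the limit. The kernel basis of Eq.~\eqref{eq:kernel-basis} must exist for each $L$, and the restricted zero-mode mass must be exactly $(4\ell-1)/(2L-1)$, with no $L$-dependent change in how many cycle vertices lie in $A_\ell$. Both are fixed by the structure of Theorem~\ref{prop:odd-cycle}. The spectator $a_1$ always lies in the prefix, and every other retained Majorana is a cycle vertex of uniform weight $1/\sqrt{2L-1}$, independent of the charge $\omega$ and of the sign choice $\pm$. The bound is therefore uniform in both labels, and the limit holds for each completion, in particular for $\Gamma_{\rm phys}$.
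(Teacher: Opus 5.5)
Your proposal is correct and matches the paper's argument: the corollary is read off directly from the exact identity $\sqrt{(4\ell-1)/(2L-1)}$ of Theorem~\ref{thm:zero-locality}, which vanishes at fixed $\ell$ with leading behavior $\sqrt{(4\ell-1)/2}\,L^{-1/2}$. Your expansion of $(2L-1)^{-1/2}$ and your check that the kernel basis of Eq.~\eqref{eq:kernel-basis} is valid for every $L>2\ell$ are both sound.
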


\begin{corollary}[Entropy modulus at fixed block]\label{cor:entropy-modulus}
Let $S_A(\Gamma)$ be the Gaussian entropy of the $4\ell$-Majorana restriction,
and set
\begin{equation}
 \delta_{L,\ell}=\sqrt{\frac{4\ell-1}{2L-1}}.
\end{equation}
Whenever $\delta_{L,\ell}\le1$,
\begin{equation}
 \left|S_A(\Gamma_\pm)-S_A(\Gamma_0)\right|
 \le 2\ell\,h_2\!\left(\frac{\delta_{L,\ell}}2\right),
 \label{eq:entropy-modulus}
\end{equation}
where $h_2(x)=-x\log x-(1-x)\log(1-x)$.
At fixed $\ell$ this is $O(L^{-1/2}\log L)$.
\end{corollary}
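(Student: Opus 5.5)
The plan is to reduce the entropy difference to a perturbation bound on the single-particle spectra of the restricted covariances, and then use concavity of the binary entropy. Write $M_\pm=i\,P_{A_\ell}\Gamma_\pm P_{A_\ell}^{\mathsf T}$ and $M_0=i\,P_{A_\ell}\Gamma_0P_{A_\ell}^{\mathsf T}$. Both are Hermitian $4\ell\times4\ell$ matrices. Their spectra come in pairs $\pm\nu_k$ with $0\le\nu_k\le1$ and $k=1,\dots,2\ell$, and the Gaussian entropy is
\begin{equation*}
S_A=\sum_{k=1}^{2\ell}h_2\!\left(\tfrac{1-\nu_k}{2}\right).
\end{equation*}
By Theorem~\ref{thm:zero-locality}, $\|M_\pm-M_0\|_{\mathrm{op}}=\delta_{L,\ell}$. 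Weyl's inequality, applied to the eigenvalues of the two Hermitian matrices in decreasing order, then gives $|\nu_k^{\pm}-\nu_k^{0}|\le\delta_{L,\ell}$ for the matched nonnegative eigenvalues. The sorted nonnegative halves correspond to the top $2\ell$ eigenvalues, which are matched consistently because both spectra are symmetric.

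Next I will bound each term. Write $f(\nu)=h_2((1-\nu)/2)$ on $[0,1]$, and ask for a modulus of continuity: for $|\nu-\nu'|\le\delta\le1$, one needs
\begin{equation*}
|f(\nu)-f(\nu')|\le h_2(\delta/2).
\end{equation*}
Set $p=(1-\nu)/2$ and $p'=(1-\nu')/2$, both in $[0,1/2]$, with $|p-p'|\le\delta/2\le1/2$. I will use the standard fact that $h_2$ is concave with $h_2(0)=0$, so it is subadditive on $[0,1/2]$. This gives $|h_2(p)-h_2(p')|\le h_2(|p-p'|)$ whenever $|p-p'|\le1/2$. Concretely, suppose $p'\ge p$. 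Then $h_2(p')-h_2(p)\le h_2(p'-p)$ by subadditivity, since $h_2(p')\le h_2(p)+h_2(p'-p)$ holds for concave $h_2$ with $h_2(0)=0$. The reverse difference $h_2(p)-h_2(p')$ is nonpositive on $[0,1/2]$, because $h_2$ is increasing there. Monotonicity of $h_2$ on $[0,1/2]$ then upgrades $h_2(|p-p'|)$ to $h_2(\delta/2)$, which is where the hypothesis $\delta_{L,\ell}\le1$ is used. Summing over the $2\ell$ modes gives Eq.~\eqref{eq:entropy-modulus}.

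Finally, for the asymptotics I will expand $h_2(x)=x\log(1/x)+O(x)$ as $x\to0$. With $x=\delta_{L,\ell}/2\sim\sqrt{\ell/(2L)}$, this gives $2\ell\,h_2(\delta/2)=O(L^{-1/2}\log L)$ at fixed $\ell$.

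The step I expect to need the most care is the eigenvalue matching. Weyl's inequality must be applied to the full Hermitian matrices, and I must check that the ordered nonnegative halves are the ones paired. A secondary point is the subadditivity bound for concave $h_2$: it follows from $h_2(a+b)\le h_2(a)+h_2(b)$ for $a,b\ge0$ with $a+b\le1$, which I will verify once.
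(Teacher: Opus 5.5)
Your proposal is correct and follows essentially the same route as the paper's proof. Both use Weyl's inequality on the restricted Hermitian matrices $i\Gamma_A$, together with the operator-norm identity of Theorem~\ref{thm:zero-locality}, and then bound each mode by $|h_2(p)-h_2(q)|\le h_2(|p-q|)$, upgrade via monotonicity of $h_2$ on $[0,1/2]$ using $\delta_{L,\ell}\le1$, and sum over the $2\ell$ modes. The only cosmetic difference is that you derive the per-mode continuity bound from concavity and subadditivity with $p=(1-\nu)/2\in[0,1/2]$, whereas the paper cites the sharp inequality together with the symmetry of $h_2$.
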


\begin{proof}
Weyl's inequality pairs the $2\ell$ nonnegative singular values of the two
restricted covariances so that $|\nu_k^\pm-\nu_k^0|\le\delta_{L,\ell}$.
Each mode contributes the binary entropy
$h_2((1+\nu_k)/2)$.  The sharp binary entropy continuity inequality
$|h_2(p)-h_2(q)|\le h_2(|p-q|)$ for $|p-q|\le1/2$, together with symmetry of
$h_2$, bounds each contribution by $h_2(\delta_{L,\ell}/2)$.  Summing the
$2\ell$ modes proves Eq.~\eqref{eq:entropy-modulus}.
\end{proof}

The order of limits matters.  At fixed $\ell$, the bound proves that the
physical sector covariance and its charge-matched auxiliary comparator have
the same local Gaussian limit.  If $\ell/L$ approaches a nonzero constant,
the right-hand side of Eq.~\eqref{eq:locality-opnorm} need not vanish.  The
bound applies only to complete prefixes at fixed $\ell$; it neither extends to
fixed-ratio, arbitrary, or wrapping regions nor equates the finite-$L$
physical RDM with an auxiliary Gaussian state.

\subsection{Analytic lowest-level distinction}\label{sec:entanglement-es}

The charge-matched locality bound gives a finite-size softening ceiling for
the physical fixed-charge states relative to their auxiliary sign-zero comparators.
For a complete prefix, write $u=P_Az_{\mathrm{sp}}$ for the restricted
spectator zero vector.  Since $u\in\ker\Gamma_{\omega,0,A}$, the two
mathematical completions in one charge-block chart are related by the
orthogonal reflection
\begin{equation}
 R_A=1-2uu^{\mathsf T},
 \qquad
 R_A\Gamma_{\omega,+,A}R_A^{\mathsf T}=\Gamma_{\omega,-,A}.
 \label{eq:pure-ES-equivalence}
\end{equation}
They therefore have identical complete-prefix Gaussian single-particle
spectra.  Only the Pfaffian-selected completion is the physical state in that
charge sector.

By contrast,
Theorem~\ref{thm:zero-locality} gives a rigorous finite-size bound
on the pure lowest level.  Since $i\Gamma_{0,A}$ has a zero eigenvalue, Weyl's
inequality and Eq.~\eqref{eq:locality-opnorm} imply
\begin{equation}
 \nu_{\min}(\Gamma_{\pm,A})
 \le \delta_{L,\ell},
 \qquad
 \delta_{L,\ell}=\sqrt{\frac{4\ell-1}{2L-1}}.
 \label{eq:pure-nu-bound}
\end{equation}
Because $\xi(\nu)=2\operatorname{artanh}\nu$ is increasing,
\begin{equation}
 \xi_{\min}(\Gamma_{\pm,A})
 \le 2\operatorname{artanh}\delta_{L,\ell}.
 \label{eq:pure-xi-bound}
\end{equation}
At fixed $\ell$, the right-hand side is $O(L^{-1/2})$, giving an analytic
softening bound for the pure states.

The auxiliary covariance also has an exact restricted zero for every complete
prefix.  Its spectator row and column vanish, while its active restriction has
odd dimension $4\ell-1$ and hence at least one further null vector.  Therefore
$i\Gamma_{0,A}$ has nullity at least two, i.e. at least one exact Gaussian
single-particle level $\nu=\xi=0$.  This statement remains confined to the
auxiliary comparator.

Thus the auxiliary comparator has an exact restricted zero, whereas the
physical fixed-charge level obeys only the softening upper bound in
Eq.~\eqref{eq:pure-xi-bound}; its positivity is observed only on the finite
grid shown in Appendix~\ref{app:lowest-level-illustration}.  That appendix
retains the corresponding illustration, with all objects typed as
sectorwise physical states or auxiliary Gaussian comparators.  Neither result
supplies the many-body spectrum of the charge-unresolved physical RDM.  Its
exact finite-size pairing instead follows from the reduced antiunitary
symmetry proved in Appendix~\ref{app:physical-rdm}, while its ordered-limit
entropy depends on the full-RDM convergence established separately in
Appendix~\ref{app:entropy-bridge}.
The next section states the physical entropy theorem and reduces its limiting
constant to the difference between consecutive odd and even
critical-Majorana blocks.

\Needspace{9\baselineskip}
\section{Analytic ordered-limit defect entropy}\label{sec:entanglement-entropy}

The physical quantity in this section is the entropy of the ordinary prefix
RDM in Eq.~\eqref{eq:physical-prefix-rdm}.  At finite size that density matrix
is a convex sum of two charge-sector Gaussian states.  It is not defined by
covariance averaging and need not be Gaussian.  Its ordered limit can nevertheless be computed
because, at every fixed complete prefix, both sector RDMs converge in trace
norm to the same Gaussian density matrix.  Entropy continuity then converts
this full-RDM statement into a comparison between consecutive odd and even
critical-Majorana Toeplitz blocks.  Their common critical logarithm cancels,
and the remaining endpoint contribution gives the physical defect entropy.

\subsection{Physical entropy functional and order of limits}

For a restricted Majorana covariance $\Gamma_A$ with positive eigenvalues
$0\leq\nu_k\leq1$, the Gaussian entropy is
\begin{equation}
 S(\Gamma_A)=\sum_k \eta(\nu_k),\qquad
 \eta(x)=-\frac{1+x}{2}\log\frac{1+x}{2}
         -\frac{1-x}{2}\log\frac{1-x}{2},
 \label{eq:gaussian-entropy}
\end{equation}
with $0\log0=0$.  This is the Majorana form of the standard
correlation-matrix construction~\cite{Peschel2003,PeschelEisler2009}.  The
constant considered here is a defect/interface entropy, extending the
Affleck--Ludwig $g$-function language to a spatial defect rather than a
physical boundary~\cite{AffleckLudwig1991,SakaiSatoh2008,BrehmBrunnerJaudSchmidtColinet2016,GutperleMiller2016}.

For the complete prefix $A_\ell$ containing $2\ell$ spin sites, define the
physical entropy differences
\begin{align}
 \Delta S_\eta(L,\ell)
 &=S(\rho_{\eta,A_\ell}^{(L)})
   -S(\rho_{{\rm hom},A_\ell}^{(L)}),\label{eq:dSeta}\\
 \Delta S_{\rm can}(L,\ell)
 &=S(\rho_{{\rm can},A_\ell}^{(L)})
   -S(\rho_{{\rm hom},A_\ell}^{(L)}).
 \label{eq:dSmixed}
\end{align}
Here $\rho_{{\rm can},A_\ell}^{(L)}$ is the many-body convex sum in
Eq.~\eqref{eq:physical-prefix-rdm}.  For comparison with the finite-size
covariance diagnostics of Section~\ref{sec:entanglement-es}, we also write
\begin{equation}
 \Delta S_{\omega,{\rm phys}}^{G}(L,\ell)
 =S[\rho_G(\Gamma_{\omega,{\rm phys}},A_\ell)]
  -S(\rho_{{\rm hom},A_\ell}^{(L)}),
 \label{eq:dSpure}
\end{equation}
where the superscript $G$ records the sectorwise Gaussian representation.
The auxiliary entropy built from $\Gamma_{\omega,0}$ will be denoted
$\Delta S_{\omega,0}^{G}$ and is never identified with
$\Delta S_{\rm can}$ at finite $L$.

The order of limits is part of the physical theorem.  We first send the
circumference $L$ to infinity at fixed complete prefix $A_\ell$, and only then
let the prefix grow.  The quantities to be evaluated are
\begin{equation}
 \lim_{\ell\to\infty}\left\{
 \lim_{\substack{L\to\infty\\L>2\ell}}
 \Delta S_{\rm can}(L,\ell)\right\},
 \qquad
 \lim_{\ell\to\infty}\left\{
 \lim_{\substack{L\to\infty\\L>2\ell}}
 \Delta S_\eta(L,\ell)\right\}.
 \label{eq:ordered-limit}
\end{equation}
No fixed-ratio, simultaneous, exchanged-limit, wrapping-region, or
arbitrary-region conclusion is inferred.

\subsection{Exact fixed-prefix reduction}

The finite-cycle covariances admit a direct Fourier representation.  Let $N$
be the number of vertices in the active cycle and let
$q_r=(2\pi r+\vartheta)/N$, $r=0,\ldots,N-1$, with twist
$\vartheta=0$ or $\pi$.  At fixed separation $d=k-j$, their entries are the
discrete Fourier coefficients of
$-\operatorname{sgn}(\sin q)$,
\begin{equation}
 C_{N,\vartheta}(d)
 =\operatorname{Im}\!\left[
   \frac1N\sum_{r=0}^{N-1}
   \bigl[-\operatorname{sgn}(\sin q_r)\bigr]e^{-iq_rd}
  \right].
 \label{eq:finite-cycle-kernel}
\end{equation}
For each fixed integer $d$, the Riemann sum converges to
\begin{equation}
 \frac{1}{2\pi}\operatorname{Im}
 \int_{-\pi}^{\pi}-\operatorname{sgn}(q)e^{-iqd}\,dq
 =\begin{cases}
 \dfrac{2}{\pi d},&d\ \text{odd},\\[3pt]
 0,&d\ \text{even}.
 \end{cases}
 \label{eq:finite-cycle-kernel-limit}
\end{equation}
The periodic and antiperiodic grids are both Riemann-sum partitions with mesh
$2\pi/N$, and changing the assigned value at either jump changes the sum by
at most $O(N^{-1})$.  Because this bound is uniform for the two twists, the
limit is independent of both the twist and the zero-value convention.  A
complete prefix has fixed dimension
$4\ell$; hence entrywise convergence implies operator-norm convergence.
The physical $\eta$ completion differs from its sign-zero periodic comparator
by a rank-two zero-mode term whose entries are $O(L^{-1})$, and therefore has
the same fixed-prefix limit.  In either KW charge block, the spectator row and
column of the auxiliary covariance vanish identically, while the remaining
$4\ell-1$ selected Majoranas are consecutive vertices of the odd cycle.  Thus
\begin{align}
 \Gamma_{{\rm hom},A_\ell},\ \Gamma_{\eta,A_\ell}
   &\longrightarrow T_{4\ell},\label{eq:fixed-prefix-even}\\
 \Gamma_{\omega,0,A_\ell}
   &\longrightarrow 0\oplus T_{4\ell-1},\qquad \omega=\pm1.
 \label{eq:fixed-prefix-odd}
\end{align}
where $T_N$ is the $N\times N$ real antisymmetric Toeplitz matrix
\begin{equation}
 (T_N)_{jk}=\begin{cases}
 \dfrac{2}{\pi(k-j)},&k-j\ {\rm odd},\\[3pt]
 0,&k-j\ {\rm even},
 \end{cases}
 \qquad 0\leq j,k<N.
 \label{eq:critical-Toeplitz}
\end{equation}
Toeplitz and Fisher--Hartwig methods have a long history in spin-chain
entanglement \cite{JinKorepin2004,ItsJinKorepin2005}; here they are used for
the distinct consecutive odd--even increment forced by the KW spectator.
The leading zero in Eq.~\eqref{eq:fixed-prefix-odd} is the retained spectator.
The odd active block has exactly one zero eigenvalue, proved below.  These two
real null directions form one maximally mixed complex mode and contribute one
$\log2$, not two.

For either charge sector, Theorem~\ref{thm:zero-locality} gives
\begin{equation}
 \left\|P_{A_\ell}(\Gamma_{\omega,{\rm phys}}-\Gamma_{\omega,0})
 P_{A_\ell}^{\mathsf T}\right\|_{\rm op}
 =\sqrt{\frac{4\ell-1}{2L-1}}\longrightarrow0
 \qquad(L\to\infty\ {\rm at\ fixed}\ \ell).
 \label{eq:entropy-pure-mixed-locality}
\end{equation}
This covariance estimate applies sectorwise.  To reach the physical
charge-unresolved RDM, one must also control every correlator and reconstruct
the complete density matrix.  Appendix~\ref{app:entropy-bridge} proves, for
each fixed $\ell$ and either $\omega$,
\begin{equation}
 \left\|\rho_{\omega,A_\ell}^{(L)}
 -\rho_G(0\oplus T_{4\ell-1})\right\|_1\longrightarrow0.
 \label{eq:body-sector-full-rdm-limit}
\end{equation}
The same comparator occurs in both charge sectors, so their ordinary convex
sum has the same limit:
\begin{equation}
 \left\|\rho_{{\rm can},A_\ell}^{(L)}
 -\rho_G(0\oplus T_{4\ell-1})\right\|_1\longrightarrow0.
 \label{eq:body-physical-full-rdm-limit}
\end{equation}
Similarly,
\begin{align}
 \left\|\rho_{{\rm hom},A_\ell}^{(L)}
 -\rho_G(T_{4\ell})\right\|_1&\longrightarrow0,
 \label{eq:body-hom-full-rdm-limit}\\
 \left\|\rho_{\eta,A_\ell}^{(L)}
 -\rho_G(T_{4\ell})\right\|_1&\longrightarrow0.
 \label{eq:body-eta-full-rdm-limit}
\end{align}
For the physical $\eta$ completion, the rank-two periodic zero-mode term has
prefix entries $O(L^{-1})$.  The $\eta$ state is parity-even and quasifree, so
the same Pfaffian telescoping and complete-Majorana-monomial reconstruction as
for the homogeneous state promote Eq.~\eqref{eq:fixed-prefix-even} to the
second trace-norm limit above.  These are fixed-dimensional trace-norm limits,
not covariance averaging.  Audenaert continuity~\cite{Audenaert2007} then
promotes them to the physical entropy limit.

Let $S_N$ denote the sum of $\eta(\nu)$ over all positive eigenvalues of
$iT_N$, namely the entropy functional evaluated on the $N\times N$ Toeplitz
block.  Let $S_{2n-1}^{\circ}$ denote the same sum with the unique zero
eigenvalue of the $(2n-1)\times(2n-1)$ odd block omitted.
Equations~\eqref{eq:body-eta-full-rdm-limit},
\eqref{eq:body-physical-full-rdm-limit}, and
\eqref{eq:body-hom-full-rdm-limit} give
\begin{align}
 \lim_{L\to\infty}\Delta S_\eta(L,\ell)&=0,
 \label{eq:eta-inner-exact}\\
 \lim_{L\to\infty}\Delta S_{\rm can}(L,\ell)
 &=\log2+S_{4\ell-1}^{\circ}-S_{4\ell}.
 \label{eq:kw-inner-exact}
\end{align}
It remains to determine whether the odd block removes all, half, or none of
the explicit $\log2$ contribution as $\ell$ grows.  The answer is the
consecutive-size theorem below.

\subsection{Odd--even critical-Majorana entropy theorem}

The required limit is not the entropy of either block separately, but their
consecutive-size difference.  This distinction removes the common critical
logarithm and isolates the finite increment produced by the KW spectator.

\begin{theorem}[Consecutive Toeplitz entropy increment]
\label{thm:toeplitz-entropy-increment}
For the matrices in Eq.~\eqref{eq:critical-Toeplitz},
\begin{equation}
 \lim_{n\to\infty}\left(S_{2n}-S_{2n-1}^{\circ}\right)
 =\frac12\log2.
 \label{eq:toeplitz-entropy-increment}
\end{equation}
\end{theorem}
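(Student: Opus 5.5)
The plan is to turn the consecutive-size difference into a ratio of Toeplitz determinants and evaluate it with Fisher--Hartwig asymptotics, keeping the subleading representation that carries a factor $(-1)^N$. First I would record the structure of $T_N$. Its entries vanish for even $k-j$, so $T_N$ is bipartite between even and odd positions. Consequently the positive eigenvalues of $iT_N$ are the singular values of a Cauchy-type block $K$, with entries $2/(\pi(2b-2a\pm1))$.

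For $N=2n$ this block is $n\times n$. For $N=2n-1$ it is the $n\times(n-1)$ matrix obtained by deleting one column. This immediately gives the claimed unique zero eigenvalue of the odd block: the nonsingularity of the square Cauchy block, via its explicit Cauchy determinant, is what I would check. Cauchy interlacing then pairs the $n-1$ positive levels of $T_{2n-1}$ with the $n$ levels of $T_{2n}$. The zero level contributes $\eta(0)=\log2$, so
\begin{equation*}
 S_{2n}-S_{2n-1}^{\circ}=\log2+\bigl(S_{2n}-S_{2n-1}\bigr),
\end{equation*}
and the theorem is equivalent to $S_{2n}-S_{2n-1}\to-\tfrac12\log2$. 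This is an $O(1)$ parity oscillation of the full Toeplitz entropy.

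Second, I would write $S_N=\frac1{2\pi i}\oint e(\lambda)\,\partial_\lambda\log D_N(\lambda)\,d\lambda$, where $D_N(\lambda)=\det(\lambda-iT_N)$ and $e(\lambda)=\eta(\lambda)$ is continued off the real axis. The contour encircles $[-1,1]$, passing around the endpoints as in the Jin--Korepin treatment~\cite{JinKorepin2004,ItsJinKorepin2005}. The symbol of $iT_N$ takes the value $+1$ on one half-circle and $-1$ on the other, with jumps at $q=0$ and $q=\pi$. Hence $D_N(\lambda)$ is a Fisher--Hartwig determinant with two jump singularities with exponents $\pm\beta(\lambda)$, where $\beta(\lambda)=\frac1{2\pi i}\log\frac{\lambda+1}{\lambda-1}$. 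The leading term gives
\begin{equation*}
 \frac{D_N(\lambda)}{D_{N-1}(\lambda)}\to G(\lambda)=\sqrt{\lambda^2-1}.
\end{equation*}
Its contribution to the consecutive difference is $\frac12[e(1)+e(-1)]=0$, and the common $N^{-2\beta^2}$ critical logarithm cancels between $N$ and $N-1$.

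The finite increment must therefore come from the generalized Fisher--Hartwig representations $\beta\to\beta\pm1$ at the jump $q=\pi$. These carry a factor $e^{i\pi N}$ and become comparable to the leading term near $\lambda=\pm1$, where $\operatorname{Re}\beta\to\pm\frac12$. I would use the Deift--Its--Krasovsky asymptotics for the sum over representations. With them I would compute $\partial_\lambda\log[D_{2n}(\lambda)/D_{2n-1}(\lambda)]$, including the ratio of Barnes-$G$ constants and the $(-1)^N$ branch. I would then evaluate the contour integral, expecting residue-type contributions at $\lambda=\pm1$ that sum to $-\frac12\log2$. As a consistency check, the same computation must give $S_{2n+1}-S_{2n}\to+\frac12\log2$, so that the two consecutive differences cancel as required by $S_{2n+1}-S_{2n-1}\to0$.

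The main obstacle is this last step. Fisher--Hartwig asymptotics are not uniform as $\lambda\to\pm1$, which is precisely where the oscillating term matters. I would first try to avoid the endpoint analysis by working with the squared block $KK^{\mathsf T}$, exploiting its exact Cauchy structure. The plan there is to express the ratio $\det(\mu-K_nK_n^{\mathsf T})/\det(\mu-K_n'K_n'^{\mathsf T})$ through a bordered determinant, namely a Schur complement in the deleted column. I would then compute the large-$n$ limit of that scalar directly as a Riemann--Hilbert resolvent of the Hilbert kernel on a half-line. The dominated convergence needed to exchange the limit with the contour integral would be handled by the interlacing bound, which makes each consecutive difference a telescoping sum of bounded single-level terms.
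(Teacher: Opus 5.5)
\textbf{The gap is in the zero-mode bookkeeping.} It sends the rest of your plan after a contribution that does not exist.

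With $H_N=iT_N$, the contour integral $\frac{1}{2\pi i}\oint e(\lambda)\,\partial_\lambda\log D_N(\lambda)\,d\lambda$ equals $\operatorname{Tr}\eta(H_N)$. It counts each positive level twice (as $\pm\nu$) but counts the unique zero of the odd block only once. Hence $\operatorname{Tr}\eta(H_{2n})=2S_{2n}$, while $\operatorname{Tr}\eta(H_{2n-1})=2S_{2n-1}^{\circ}+\log2$. Equivalently, with the correct prefactor $1/(4\pi i)$ the integral is $S_{2n}$ for even size but $S^{\circ}_{2n-1}+\frac12\log2$ for odd size. It is not your $S_{2n-1}=S^{\circ}_{2n-1}+\log2$.

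Your own leading-order computation---$D_N/D_{N-1}\to\sqrt{\lambda^2-1}$, with consecutive contribution $\frac12[\eta(1)+\eta(-1)]=0$---therefore already finishes the job. With your $S_{2n-1}$ one has
\[
S_{2n}-S_{2n-1}=\tfrac12\bigl[\operatorname{Tr}\eta(H_{2n})-\operatorname{Tr}\eta(H_{2n-1})\bigr]-\tfrac12\log2 ,
\]
so $S_{2n}-S^{\circ}_{2n-1}\to\frac12\log2$ immediately. The ``parity oscillation'' of your $S_N$ is just the half weight that the zero eigenvalue carries in the trace.

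Your step ``the finite increment must therefore come from the representations $\beta\to\beta\pm1$'' is therefore false. The paper shows $\operatorname{Tr}\eta(H_N)-\operatorname{Tr}\eta(H_{N-1})\to0$ along all $N$. Any $(-1)^N$ Fisher--Hartwig terms thus contribute only $o(1)$ to consecutive differences. A computation designed to extract $-\frac12\log2$ from endpoint residues cannot succeed.

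\textbf{What actually remains} is to justify the leading-order limit for the non-smooth test function $\eta$, and this needs no endpoint asymptotics. The paper proceeds as follows.
\begin{enumerate}
\item It bounds the Schur complement $D_N/D_{N-1}=z-u_N^*(z-H_{N-1})^{-1}u_N$, with $\|u_N\|\le1$, to get local boundedness off $[-1,1]$.
\item Vitali--Porter then upgrades the pointwise Fisher--Hartwig ratio to local uniform convergence.
\item The Laurent coefficients of the log-derivative give $\operatorname{Tr}p(H_N)-\operatorname{Tr}p(H_{N-1})\to\frac12[p(1)+p(-1)]$ for every polynomial $p$.
\item Cauchy interlacing, $N_{H_N}-N_{H_{N-1}}\in\{0,1\}$, gives the uniform bound $|\operatorname{Tr}\varphi(H_N)-\operatorname{Tr}\varphi(H_{N-1})|\le|\varphi(1)|+\|\varphi'\|_{1}$.
\item Density of polynomials in that norm extends the limit to $\eta\in W^{1,1}(-1,1)$.
\end{enumerate}
You mention interlacing only as a dominated-convergence aid. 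In fact this spectral-shift bound is exactly what removes the non-uniformity at $\lambda=\pm1$ that you flagged as the main obstacle. With it and the corrected zero count, both the subleading-representation program and the half-line Riemann--Hilbert program are unnecessary.
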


\begin{proof}
Put $H_N=iT_N$ and $D_N(z)=\det(zI-H_N)$.  The Toeplitz symbol of $H_N$ is
\begin{equation}
 h(e^{i\theta})=-\operatorname{sgn}(\theta),\qquad -\pi<\theta<\pi,
 \label{eq:sign-symbol}
\end{equation}
for the Fourier convention in Eq.~\eqref{eq:critical-Toeplitz}.  The opposite
sign merely reflects the symmetric spectrum.  For
$z\in\mathbb C\setminus[-1,1]$, the characteristic symbol
$f_z=z-h$ is nonzero and has two inverse pure jumps.  Its geometric mean is
\begin{equation}
 G(f_z)=\exp\left[\frac{1}{2\pi}\int_{-\pi}^{\pi}
 \log f_z(e^{i\theta})\,d\theta\right]
 =\sqrt{z^2-1},
 \label{eq:fh-geometric-mean}
\end{equation}
where the branch satisfies $\sqrt{z^2-1}\sim z$ at infinity.

For each fixed $z$ off the cut, the scalar Fisher--Hartwig theorem for two pure
jumps~\cite{DeiftItsKrasovsky2011} gives
\begin{equation}
 D_N(z)=G(f_z)^N N^{-2\beta(z)^2}E(z)[1+o(1)],
 \qquad
 \beta(z)=\frac{1}{2\pi i}\log\frac{z+1}{z-1},
 \label{eq:fh-asymptotic}
\end{equation}
where $\log$ denotes the principal logarithm and $E(z)$ is a nonzero analytic
prefactor.  The hypotheses and local-uniformity step are recorded in
Appendix~\ref{app:technical}.  Dividing consecutive sizes gives the pointwise
limit
\begin{equation}
 \frac{D_N(z)}{D_{N-1}(z)}\longrightarrow\sqrt{z^2-1}.
 \label{eq:determinant-ratio}
\end{equation}
To upgrade it, write the principal extension and its Schur complement as
\begin{equation}
 H_N=\begin{pmatrix}H_{N-1}&u_N\\u_N^*&0\end{pmatrix},
 \qquad
 \frac{D_N(z)}{D_{N-1}(z)}
 =z-u_N^*(zI-H_{N-1})^{-1}u_N.
 \label{eq:body-schur-ratio}
\end{equation}
The spectral support $\sigma(H_N)\subset[-1,1]$ and $\|u_N\|\leq1$ make
these ratios locally bounded on $\mathbb C\setminus[-1,1]$.
Vitali--Porter therefore upgrades the pointwise limit to local-uniform
convergence.  Both the ratios and $\sqrt{z^2-1}$ are zero-free off the cut,
so Cauchy's estimates imply
\begin{equation}
 \partial_z\log\frac{D_N(z)}{D_{N-1}(z)}
 \longrightarrow\frac{z}{z^2-1}.
 \label{eq:log-derivative-ratio}
\end{equation}

Let $\lambda_j^{(N)}$ be the eigenvalues of $H_N$ and define the one-step
signed spectral measure
\begin{equation}
 \mu_N=\sum_{j=1}^{N}\delta_{\lambda_j^{(N)}}
       -\sum_{j=1}^{N-1}\delta_{\lambda_j^{(N-1)}}.
 \label{eq:one-step-measure}
\end{equation}
Its Stieltjes transform is the left-hand side of
Eq.~\eqref{eq:log-derivative-ratio}; the limiting transform is
\begin{equation}
 \frac{z}{z^2-1}=\frac{1}{2(z-1)}+\frac{1}{2(z+1)}.
 \label{eq:endpoint-transform}
\end{equation}
Because $\|\mu_N\|_{\rm TV}=2N-1$, where $\|\cdot\|_{\rm TV}$ denotes the
total-variation norm, this transform convergence alone does not imply weak
convergence of the signed measures.  Cauchy interlacing supplies the needed
control.  With $N_A(x)$ the number of eigenvalues of $A$ not exceeding
$x$, set
\begin{equation}
 F_N(x)=N_{H_N}(x)-N_{H_{N-1}}(x).
 \label{eq:spectral-shift-counting}
\end{equation}
Then $F_N(x)\in\{0,1\}$.  Since $H_N$ is a finite compression of the
self-adjoint Toeplitz operator with symbol $\pm\operatorname{sgn}\theta$,
$\sigma(H_N)\subset[-1,1]$.  For every
$\varphi\in W^{1,1}(-1,1)$, Stieltjes integration by parts gives
\begin{equation}
 \begin{split}
 L_N(\varphi)&:={\rm Tr}\,\varphi(H_N)-{\rm Tr}\,\varphi(H_{N-1})\\
 &=\varphi(1)-\int_{-1}^{1}\varphi'(x)F_N(x)\,dx,
 \end{split}
 \quad
 |L_N(\varphi)|\leq |\varphi(1)|+\|\varphi'\|_1.
 \label{eq:w11-spectral-shift}
\end{equation}
Expanding Eq.~\eqref{eq:log-derivative-ratio} at infinity on a circle
$|z|=R>1$ gives convergence of $L_N(p)$ for every polynomial $p$ to
\begin{equation}
 L_\infty(p)=\frac12p(-1)+\frac12p(1).
 \label{eq:polynomial-endpoint-functional}
\end{equation}
Polynomials are dense in the norm
$\|\varphi\|_*=|\varphi(1)|+\|\varphi'\|_1$: approximate
$\varphi'$ in $L^1$ by a polynomial $q$ and set
$p(x)=\varphi(1)+\int_1^xq(t)dt$.  The uniform bound in
Eq.~\eqref{eq:w11-spectral-shift} therefore extends
Eq.~\eqref{eq:polynomial-endpoint-functional} to all $W^{1,1}$ functions.
In particular,
\begin{equation}
 \eta'(x)=\frac12\log\frac{1-x}{1+x}\in L^1(-1,1),
 \qquad \eta(\pm1)=0,
 \label{eq:entropy-w11}
\end{equation}
so
\begin{equation}
 {\rm Tr}\,\eta(H_N)-{\rm Tr}\,\eta(H_{N-1})\longrightarrow0.
 \label{eq:consecutive-trace-entropy}
\end{equation}

Finally, reordering the even- and odd-indexed rows and columns separately
writes the off-diagonal part of $H_{2n}$ in terms of the square $n\times n$
Cauchy block
\begin{equation}
 B_{ab}=\frac{2}{\pi[2(b-a)+1]},\qquad 0\leq a,b<n,
 \label{eq:square-cauchy-block}
\end{equation}
which is nonsingular.  Thus $H_{2n}$ has no zero eigenvalue and
${\rm Tr}\,\eta(H_{2n})=2S_{2n}$.  The odd matrix contains the rectangular
$n\times(n-1)$ block with the same entries.  Its maximal minors are
nonsingular Cauchy matrices, so it has rank $n-1$ and $H_{2n-1}$ has nullity
exactly one.  Hence
\begin{equation}
 {\rm Tr}\,\eta(H_{2n-1})=2S_{2n-1}^{\circ}+\eta(0)
 =2S_{2n-1}^{\circ}+\log2.
 \label{eq:odd-zero-bookkeeping}
\end{equation}
Applying Eq.~\eqref{eq:consecutive-trace-entropy} along $N=2n$ proves
Eq.~\eqref{eq:toeplitz-entropy-increment}.
\end{proof}

\subsection{Physical ordered-limit theorem}

The explicit $\log2$ from the two real KW null directions is reduced by the asymptotic
odd--even entropy difference, leaving precisely half of that value, while the
matched $\eta$ and homogeneous blocks leave no constant difference.

Combining Eqs.~\eqref{eq:eta-inner-exact}, \eqref{eq:kw-inner-exact}, and
Theorem~\ref{thm:toeplitz-entropy-increment} yields the physical ordered-limit
theorem
\begin{equation}
 \lim_{\ell\to\infty}\left\{
 \lim_{\substack{L\to\infty\\L>2\ell}}
 \Delta S_\eta(L,\ell)\right\}=0,\qquad
 \lim_{\ell\to\infty}\left\{
 \lim_{\substack{L\to\infty\\L>2\ell}}
 \Delta S_{\rm can}(L,\ell)\right\}=\frac12\log2.
 \label{eq:boundary-entropy-conclusion}
\end{equation}
The sectorwise convergence in Eq.~\eqref{eq:body-sector-full-rdm-limit} also
gives the same KW constant in either fixed physical charge sector:
\begin{equation}
 \lim_{\ell\to\infty}\left\{
 \lim_{\substack{L\to\infty\\L>2\ell}}
 \left[S(\rho_{\omega,A_\ell}^{(L)})
 -S(\rho_{{\rm hom},A_\ell}^{(L)})\right]\right\}
 =\frac12\log2,
 \qquad \omega=\pm1.
 \label{eq:fixed-charge-entropy-corollary}
\end{equation}
The second limit in Eq.~\eqref{eq:boundary-entropy-conclusion} concerns the finite-size physical convex sum
$\rho_{{\rm can},A_\ell}^{(L)}$.  It does not identify that RDM with
$\Gamma_{\omega,0}$ at any finite $L$.  Nor does exact Kramers doubling alone
fix the coefficient: the abstract $I_2/2$ factor is partly compensated by the
active state, and the remaining $\tfrac12\log2$ follows from the independent
odd--even Toeplitz endpoint.  The theorem is restricted to complete proper
prefixes and the displayed nested order.  The next subsection tests this
analytic conclusion numerically; a later section identifies the infrared
defect sector from its spatial energy and momentum, independently of the
present entropy argument.

\subsection{Finite-size convergence}

The Fourier cycle kernel implemented in the reproducibility package illustrates
finite-size convergence independently of the proof.  Denote by $g_\eta$, $g_{\omega,0}^G$,
and $g_{\omega,{\rm phys}}^G$ the fitted outer-limit intercepts for the
$\eta$ entropy difference and the auxiliary and physical-sector Gaussian
comparisons.  The scale-separated fits use
$L\in\{12288,16384,24576,32768,49152,65536\}$ and
$\ell\in\{32,48,64,96,128\}$.  The resulting intercepts are
\begin{align}
 |g_\eta|&<10^{-11},\label{eq:geta-result}\\
 g_{\omega,0}^G&=0.3465735705443,\label{eq:gkwmixed-result}\\
 g_{\omega,{\rm phys}}^G&=0.3465736918532,
 \label{eq:gkwpure-result}
\end{align}
The reference value is
$\frac12\log2=0.3465735902800$.  The $\eta$ entry is reported as a
numerical-null tolerance because its fitted signed value is at the
numerical-cancellation scale; analytically its limit is exactly zero.
These values quantify finite-size
convergence; the logical basis of
Eq.~\eqref{eq:boundary-entropy-conclusion} is the analytic argument above.
Figure~\ref{fig:ordered-defect-entropy} displays the two numerical scales
separately.  Panel~(a) resolves the inner thermodynamic extrapolation of the
physical-sector Gaussian representation.  Rather than plotting the same
limiting constant again, panel~(b) shows the remaining finite-subsystem
corrections
\begin{align}
 \varepsilon_0(\ell)
 &=\frac12\log2-
   \lim_{L\to\infty}\Delta S_{\omega,0}^{G}(L,\ell),\nonumber\\
 \varepsilon_{\rm phys}(\ell)
 &=\frac12\log2-
   \lim_{L\to\infty}\Delta S_{\omega,{\rm phys}}^{G}(L,\ell).
 \label{eq:finite-subsystem-correction}
\end{align}
These auxiliary and physical-sector Gaussian corrections are consistent with
the same approximate $1/\ell$ decay over the fitted range and with their
common fixed-prefix limit.  They are convergence-rate checks, not finite-size
measurements of $S(\rho_{{\rm can},A_\ell}^{(L)})$.

\begin{figure}[t]
 \centering
 \includegraphics[width=\textwidth]{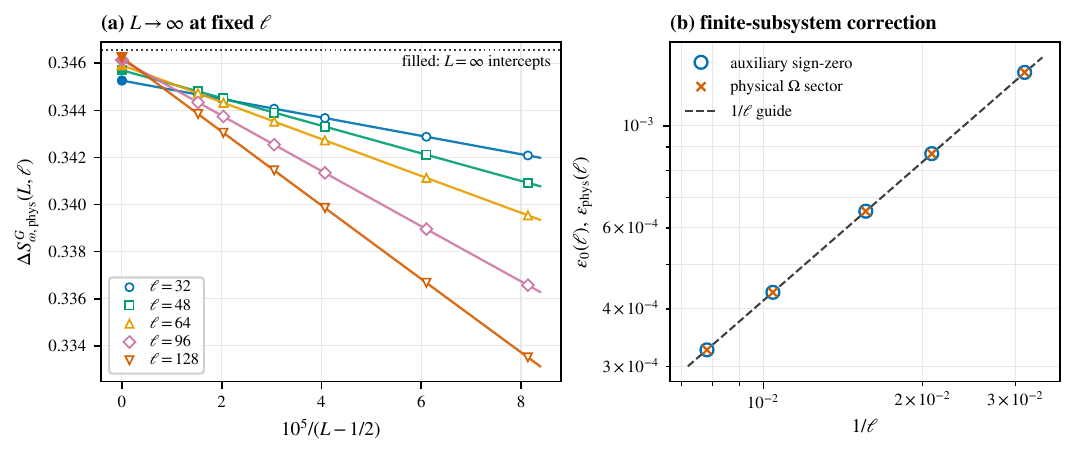}
 \caption{\textbf{Two-stage ordered defect-entropy convergence.}
 Panel~(a) takes the inner thermodynamic limit at fixed $\ell$ by fitting
 $\Delta S_{\omega,{\rm phys}}^{G}(L,\ell)$ against $1/(L-1/2)$ for the displayed
 $\ell=32,48,64,96,128$ curves.  Filled symbols are the resulting
 $L\to\infty$ intercepts.  Panel~(b) plots the residual
 $\varepsilon_0(\ell)$ and $\varepsilon_{\rm phys}(\ell)$ of
 Eq.~\eqref{eq:finite-subsystem-correction} for the auxiliary sign-zero and
 physical-sector Gaussian comparisons on logarithmic axes.  The blue open
 circles (auxiliary sign-zero) and orange crosses (physical $\Omega$ sector)
 are nested because the two residuals are indistinguishable on the plotted
 scale.  The dashed line is a $1/\ell$ reference guide anchored to the
 auxiliary residual at $\ell=64$.  The fitted grids are stated in the text.
 These extrapolations provide an independent
 finite-size check and do not enter the analytic proof.}
 \label{fig:ordered-defect-entropy}
\end{figure}

The two real KW null directions, the spectator and the odd-cycle zero mode, supply
the explicit $\log2$ in Eq.~\eqref{eq:kw-inner-exact}.  That contribution does
not survive intact, because the odd critical block carries a different finite
part from its neighboring even block.  Theorem~\ref{thm:toeplitz-entropy-increment}
shows that their common critical logarithm cancels in the consecutive-size
difference and removes exactly $\frac12\log2$.  The physical KW constant is
therefore the remaining $\frac12\log2$, while the matched $\eta$ and
homogeneous covariances approach the same even block and leave no constant
shift.  Figure~\ref{fig:ordered-defect-entropy} resolves how the finite lattice
approaches these analytic limits.  What the entropy theorem cannot determine
is which continuum defect sector carries the resulting lattice state; that
question requires its spatial energy and momentum.

\section{Defect-CFT data from the spatial sector}\label{sec:cft}

The ordered entropy excess separates the KW state from the matched invertible
defect but does not identify its continuum sector.  That identification
requires spatial energy and modified translation.  From the finite-size vacuum energy one learns its scaling
dimension through the Casimir term, while several momentum assignments remain
compatible with that number.  Modified translation removes this ambiguity by
resolving the conformal spin.  If the assignment is correct, the excitations
cannot organize arbitrarily but must inherit the same odd-cycle quantization.
Because each step is taken within the Hamiltonian whose physical RDM was
constructed above, the infrared identity can ultimately be compared with the
entropy theorem without having been assumed by it.

\subsection{Fourier derivation of the exact modes}

The exact duality-twisted Ising spectrum and its continuum organization are
known~\cite{Grimm2002,AasenMongFendley2016}.  To identify the sector realized
by the present Hamiltonian, we express its finite-size spectrum in the same
effective-length, charge, and translation conventions used to construct the
RDM.

The active KW graph in Theorem~\ref{prop:odd-cycle} is an oriented cycle of odd
length
\begin{equation}
 N=2L-1.
 \label{eq:cycle-N}
\end{equation}
The fixed-charge closing sign is a genuine cycle flux and cannot be removed by
a real diagonal Majorana sign gauge.  In the active basis used in
Appendix~\ref{app:character-theorem}, the boundary condition is
$q_{r+N}=\omega q_r$.  The $\omega=+1$ block therefore has the periodic grid
$e^{ikN}=+1$ and its active zero mode at $k=0$, whereas the $\omega=-1$ block
has the antiperiodic grid $e^{ikN}=-1$ and its zero mode at $k=\pi$.  A common
labeling of the positive modes is
\begin{equation}
 \kappa_{\omega n}=
 \begin{cases}
  -\pi n/N,&(-1)^n=\omega,\\
  \pi+\pi n/N,&(-1)^n=-\omega,
 \end{cases}
 \qquad n=1,\ldots,L-1,
 \label{eq:charge-dependent-momenta}
\end{equation}
which obeys $e^{i\kappa_{\omega n}N}=\omega$.  The oriented-cycle eigenvalue
at momentum $k$ is $-4\sin k$; hence the labels in
Eq.~\eqref{eq:charge-dependent-momenta} give, in both charge blocks, the same
unordered positive-energy multiset
\begin{equation}
 \epsilon_n(L)=4\sin\frac{\pi n}{2L-1},
 \qquad n=1,\ldots,L-1.
 \label{eq:exact-positive-modes}
\end{equation}
Thus equality of the positive spectra follows from a relabeling of two
different Fourier grids, not from gauging both sectors to one periodic grid.
Together with the spectator, the single active zero mode gives the nullity two
used in Sections~\ref{sec:entanglement-states} and
\ref{sec:entanglement-locality}.

For comparison, the homogeneous chain in fixed physical parity gives an
antiperiodic even cycle.  Its $L$ positive eigenvalues are
\begin{equation}
 \epsilon_n^{\mathrm{hom}}(L)
 =4\sin\frac{(2n-1)\pi}{2L},
 \qquad n=1,\ldots,L.
 \label{eq:hom-positive-modes}
\end{equation}
These two mode sequences determine the finite-size vacuum energies and hence
the continuum scaling dimension carried by the KW sector.

\subsection{Exact finite-size ground energies}

For a real antisymmetric quadratic matrix $A$, our many-body ground-energy
convention is
\begin{equation}
 E_0(A)=-\frac12\sum_{\epsilon_k>0}\epsilon_k,
 \qquad \{\epsilon_k\}=\operatorname{spec}(iA).
 \label{eq:ground-energy-convention}
\end{equation}
Here $\operatorname{spec}(iA)$ denotes the spectrum of $iA$, and the sum runs
over its positive eigenvalues.  The trigonometric sums needed to evaluate the
homogeneous and KW ground energies are
\begin{align}
 \sum_{n=1}^{(N-1)/2}\sin\frac{\pi n}{N}
 &=\frac12\cot\frac{\pi}{2N},\label{eq:odd-sine-sum}\\
 \sum_{n=1}^{L}\sin\frac{(2n-1)\pi}{2L}
 &=\csc\frac{\pi}{2L},\label{eq:anti-sine-sum}
\end{align}
Together with Eqs.~\eqref{eq:exact-positive-modes} and
\eqref{eq:hom-positive-modes}, these identities give
\begin{align}
 E_{\mathrm{hom}}(L)&=-2\csc\frac{\pi}{2L},
 \label{eq:Ehom-exact}\\
 E_{\mathrm{KW}}(L)&=-\cot\frac{\pi}{4L-2}
 =-\cot\frac{\pi}{4\Leff}.
 \label{eq:Ekw-exact}
\end{align}
The formulas hold at every finite size.  Direct
diagonalization at ten even system sizes with $4\leq L\leq512$ reproduces
Eqs.~\eqref{eq:Ehom-exact} and \eqref{eq:Ekw-exact} to
$1.14\times10^{-13}$.

Expanding at large circumference gives
\begin{align}
 E_{\mathrm{hom}}(L)
 &=-\frac4\pi L-\frac{\pi}{6L}
 -\frac{7\pi^3}{1440L^3}
 -\frac{31\pi^5}{241920L^5}+O(L^{-7}),
 \label{eq:Ehom-series}\\
 E_{\mathrm{KW}}(\Leff)
 &=-\frac4\pi\Leff+\frac{\pi}{12\Leff}
 +\frac{\pi^3}{2880\Leff^3}
 +\frac{\pi^5}{483840\Leff^5}+O(\Leff^{-7}).
 \label{eq:Ekw-series}
\end{align}
For a critical spatial circle in sector $a$, let $L_a$ denote its effective
circumference, $e_\infty$ the bulk energy density, and $\Delta_a$ its lowest
scaling dimension.  The finite-size vacuum energy has the form
\begin{equation}
 E_0^{(a)}(L_a)=e_\infty L_a+
 \frac{2\pi v}{L_a}\left(\Delta_a-\frac{c}{12}\right)+\cdots.
 \label{eq:circle-Casimir}
\end{equation}
where $c$ and $v$ are the central charge and velocity.  With $c=1/2$, $v=2$,
and the $\pi/(12\Leff)$ coefficient of
Eq.~\eqref{eq:Ekw-series},
\begin{equation}
 \Delta_\sigma=\frac1{16}.
 \label{eq:Delta-sigma}
\end{equation}
This is the known Ising duality-sector dimension from conformal defect
theory~\cite{FrohlichFuchsRunkelSchweigert2004,PetkovaZuber2001} and the exact twisted-chain
spectrum~\cite{Grimm2002}, now recovered in the present lattice conventions.

If Eq.~\eqref{eq:Ekw-series} is reexpanded in the bare variable $L$, the bulk
term produces a constant $2/\pi$.  The effective-length form above is the
natural parametrization for the spatial defect circle.
The Casimir coefficient fixes the scaling dimension but not the conformal
spin.  The latter is supplied by the modified translation.

\subsection{Joint energy--translation character}

The Casimir energy fixes the lowest scaling dimension but not its conformal
spin.  The missing finite-size observable is the modified translation
$T_\sigma$ of Eq.~\eqref{eq:Tsigma-circuit}.  Its exact Clifford power law is
\begin{equation}
 T_\sigma^{N}=e^{i\pi\Omega/4},
 \qquad N=2L-1.
 \label{eq:Tsigma-power}
\end{equation}
This identity constrains the possible roots of a translation eigenvalue, but
it does not by itself show which roots occur, with what multiplicity, or in
which energy eigenspaces.  Those questions require the action of $T_\sigma$
on a complete physical basis.

Restrict to the charge block $\Omega=\omega$, $\omega=\pm1$.  The bare odd
Fourier creator changes charge and therefore is not an operator within this
block.  Pairing it with the spectator Majorana gives the charge-preserving CAR
mode $\psi_{\omega n}^\dagger$ of
Appendix~\ref{app:character-theorem}, whose exact translation action is
\begin{equation}
 U_\omega\psi_{\omega n}^\dagger U_\omega^\dagger
 =e^{2\pi i\omega(-1)^n n/N}\psi_{\omega n}^\dagger,
 \qquad U_\omega=P_\omega T_\sigma P_\omega.
 \label{eq:body-mode-translation}
\end{equation}
For an occupied positive-mode set $A\subseteq\{1,\ldots,L-1\}$, the physical
zero-mode occupation is fixed rather than doubled:
\begin{equation}
 n_0(A)=|A|\pmod2.
 \label{eq:body-zero-selector}
\end{equation}
The resulting $2^{L-1}$ states form a basis of the charge block.  Comparing
the exact circuit trace with the second-quantized Fock trace fixes the vacuum
eigenvalue
\begin{equation}
 t_{0,\omega}=e^{i\pi\omega/(4N)}.
 \label{eq:body-vacuum-root}
\end{equation}
The power law in Eq.~\eqref{eq:Tsigma-power} is consistent with this scalar but
would leave an $N$th-root ambiguity if used alone.

Define the energy and cyclic translation labels
\begin{equation}
 E_{\omega,L}(A)=E_{0,L}+\sum_{n\in A}\epsilon_n(L),
 \qquad
 r_\omega(A)=\sum_{n\in A}\omega(-1)^n n.
 \label{eq:body-energy-root-labels}
\end{equation}
Then every physical Fock state has the exact relative translation root
\begin{equation}
 \frac{t_{\omega,L}(A)}{t_{0,\omega}}
 =e^{2\pi i r_\omega(A)/N}.
 \label{eq:body-root-ratio}
\end{equation}
Equivalently, the finite joint occupation--root character is
\begin{equation}
 \mathcal C_{\omega,L}(\mathbf x,z)
 =\prod_{n=1}^{L-1}\left(1+x_nz^{\omega(-1)^nn}\right),
 \qquad z^N=1.
 \label{eq:body-finite-joint-character}
\end{equation}
Its monomials record actual states, and after equal energies and roots are
grouped its coefficients give the exact joint multiplicities, including
accidental energy collisions.  The character proves occurrence of the roots
it contains; it does not claim saturation of every algebraically allowed
$N$th root.

To choose a representative of a cyclic root without assigning parity to an
equivalence class on which it is not defined,
use the centered representative
\begin{equation}
 k_{\omega,L}(A)=\operatorname{cent}_N r_\omega(A)
 \in\{-(L-1),\ldots,L-1\}.
 \label{eq:body-centered-root}
\end{equation}
With $L_{\rm eff}=N/2$, the convention-fixed spin class is
\begin{equation}
 s(\omega,k)=\frac{\omega}{16}+\frac{k}{2}\pmod1.
 \label{eq:spin-set}
\end{equation}
The centered section is not globally additive across a wrap.  Nevertheless,
the physical witnesses $A=\varnothing$ and $A=\{1\}$ occur for every
$L\geq2$ and realize all four classes
\begin{equation}
 \left\{\frac1{16},-\frac1{16},\frac7{16},-\frac7{16}\right\}\pmod1.
 \label{eq:body-four-spin-witnesses}
\end{equation}
Thus occurrence and multiplicity follow from the physical joint character,
not from the power relation alone.  Appendix~\ref{app:character-theorem}
gives the fixed-charge CAR construction, circuit-trace normalization, and
finite witnesses in detail.

\subsection{Marked scaling limit and Virasoro towers}

The same Fourier spectrum controls the excitation energies.  At fixed mode
index,
\begin{equation}
 d_{n,L}=\frac{N}{2\pi}\sin\frac{\pi n}{N}\longrightarrow\frac n2.
 \label{eq:scaled-modes}
\end{equation}
Energy counting alone therefore gives the diagonal product
\begin{equation}
 2\prod_{n\geq1}(1+x^{n/2}),
 \label{eq:Fock-product}
\end{equation}
but this specialization has erased the translation marks.  It cannot identify
chirality, the signs of conformal spin, or the placement of the primary
pairs.

Retaining the centered root before taking the limit supplies the missing
information.  For a finite occupation set $A$, define the marked excitation
coordinates
\begin{equation}
 p_{\omega,L}(A)=\frac{k_{\omega,L}(A)}2,
 \qquad
 u_{\omega,L}(A)=\frac{D_L(A)+p_{\omega,L}(A)}2,
 \qquad
 v_{\omega,L}(A)=\frac{D_L(A)-p_{\omega,L}(A)}2,
 \label{eq:body-marked-coordinates}
\end{equation}
where $D_L(A)=\sum_{n\in A}d_{n,L}$.  These are finite-size spectral markers,
not eigenvalues of separately defined finite-size chiral Hamiltonians.  Below
every fixed excitation cutoff, only finitely many mode sets occur and their
uncentered labels cannot wrap for all sufficiently large $L$.  The associated
locally finite marked measures therefore converge vaguely to coordinates
\begin{equation}
 u_\omega(A)=\frac14\sum_{n\in A}[n+\omega(-1)^nn],
 \qquad
 v_\omega(A)=\frac14\sum_{n\in A}[n-\omega(-1)^nn].
 \label{eq:body-limiting-coordinates}
\end{equation}
For $\omega=+1$, even modes populate the left chirality and odd modes the
right; for $\omega=-1$ the assignment is reversed.  Hence
\begin{align}
 G_+(q,\bar q)&=\prod_{m\geq1}(1+q^m)
                 \prod_{m\geq1}(1+\bar q^{m-1/2}),\label{eq:body-Gplus}\\
 G_-(q,\bar q)&=\prod_{m\geq1}(1+q^{m-1/2})
                 \prod_{m\geq1}(1+\bar q^m).
 \label{eq:body-Gminus}
\end{align}
Conditional on the standard Ising identification $c=1/2$ and the
effective-length, velocity, and right-translation conventions fixed above, the vacuum witnesses
combine the Casimir value $\Delta_\sigma=1/16$ with the translation orientation
to give
$(h_{+,0},\bar h_{+,0})=(1/16,0)$ and
$(h_{-,0},\bar h_{-,0})=(0,1/16)$.  Using the standard Ising/Jacobi product identities
\begin{equation}
 \chi_\sigma(q)=q^{1/24}\prod_{m\geq1}(1+q^m),
 \qquad
 \chi_1(q)+\chi_\epsilon(q)
 =q^{-1/48}\prod_{m\geq1}(1+q^{m-1/2}),
 \label{eq:body-Ising-products}
\end{equation}
and resolving the Neveu--Schwarz factor by the parity of occupied odd mode
indices in both charge blocks yields
\begin{equation}
 Z_{\rm KW}(q,\bar q)
 =\chi_\sigma(\bar\chi_1+\bar\chi_\epsilon)
  +(\chi_1+\chi_\epsilon)\bar\chi_\sigma.
 \label{eq:body-Virasoro-decomposition}
\end{equation}
The four primary pairs are
\begin{equation}
 (\sigma,1),\quad(\sigma,\epsilon),\quad
 (1,\sigma),\quad(\epsilon,\sigma),
 \label{eq:body-four-primary-pairs}
\end{equation}
with finite witnesses $A=\varnothing$ and $A=\{1\}$ in the two charge
blocks.  The one-mode witness is the character-theoretic bottom of the
$\epsilon$ module in the relevant chirality, not a descendant of the identity
primary.

Diagonal specialization recovers Eq.~\eqref{eq:Fock-product} and erases the
chiral assignment: energy multiplicities alone cannot produce
Eq.~\eqref{eq:body-Virasoro-decomposition}.  The result is a scaling
spectral-counting and Virasoro-character theorem with the precise finite-size
and scaling scope stated in Appendix~\ref{app:character-theorem}.  That
appendix also proves eventual no-wrap, vague convergence, the NS parity
projection, and the finite witnesses.

\subsection{Identification of the duality-twisted sector}

The spatial Hamiltonian now supplies three complementary pieces of continuum
data.  Its Casimir energy fixes the lowest scaling dimension, the exact joint
energy--translation character fixes occurring spin classes and finite-size
multiplicities, and the marked scaling limit fixes the chiral Virasoro
organization.  Together they recover the known Ising duality-twisted sector in
the present dressing and right-translation convention.

Ordinary prefix RDMs determine the physical entanglement statements, while
spatial energy and modified translation determine the continuum sector.
These spatial results complement the temporal fusion diagnosis of the global
non-invertible defect species.

\section{Discussion}
\label{sec:discussion}

An algebraic defect label and a continuum-sector assignment identify what
propagates around the circle, but neither specifies the reduced state seen by
a chosen microscopic subsystem.  For the KW line, the temporal MPO and spatial
Hamiltonian fix the fusion algebra and twisted Hilbert space.  Turning those
global data into a physical RDM still requires a charge-sector prescription
and the local spin algebra of the entangling region.  Only after these choices
are made does the complete non-wrapping prefix admit the physical RDM analyzed
here.  Its ordered-limit entropy is consequently derived from a microscopic
state rather than assigned from its continuum identity, giving
$\tfrac12\log2$ for KW and no shift for the matched invertible $\eta$ defect.
Energy and modified translation instead provide an exact joint character that
reconstructs, using the standard Ising character identities, the four chiral
towers of the spatial KW Hamiltonian resolved by charge block, independently
of the equal-weight preparation and without using the entropy to identify the
sector.

The microscopic origin of the entropy constant lies in the full limiting
reduced state.  The two physical charge-sector RDMs first converge separately
to the same fixed-prefix Gaussian comparator; only then can their convex
average be taken.  In the resulting Toeplitz problem, the two real null
directions---the spectator and the odd-cycle zero mode---supply an explicit
$\log2$, while the consecutive odd--even critical-block increment removes
$\tfrac12\log2$ and leaves the physical KW constant.  The matched $\eta$ and
homogeneous covariances instead approach the same even block and leave no
constant difference.  The exact many-body Kramers pairing of the finite-size physical RDM follows
from its reduced antiunitary symmetry; it organizes the spectrum but does not
determine the entropy coefficient.

This construction connects two previously separate descriptions of the Ising
duality defect.  Its lattice algebra, modified translation, and twisted
spectrum were known~\cite{AasenMongFendley2016,Grimm2002}.  Continuum studies
had characterized scalar entanglement observables for Ising defects and
interfaces~\cite{RoySaleur2022,Rogerson2022}.  At the reduced-state level, Rockwood
exposed the nonlocal entanglement Hamiltonian induced by the KW zero mode in
periodic free-fermion chains~\cite{Rockwood2025}, while Northe and Rossi
constructed defect-dressed RDMs directly in CFT and matched their R\'enyi and
von Neumann entropies~\cite{NortheRossi2025}.
The physical spin RDM derived here joins these descriptions for a selected state
in the spatial KW sector, and the invertible $\eta$ line tests the entropy mechanism against a defect
with no constant shift.  The joint energy--translation theorem sharpens the
lattice identification by retaining the translation roots that actually occur
and their multiplicities.  Its marked scaling limit reconstructs chirality
and the four Virasoro towers rather than only the energy degeneracies.

The distinction between global identification and local construction extends
beyond the free-fermion solution.
The line algebra and twisted kinematics constrain which sectors are available;
a state prescription selects a density operator, and the subsystem algebra
determines what is traced out.  Infrared data can identify the resulting
sector, but they cannot supply any of those microscopic choices.  An exact
solution of the present problem, however, depends on Gaussian Ising structure.
That structure produces the localized spectator and odd Majorana cycle,
enables the sectorwise Pfaffian selection and fixed-prefix full-RDM control,
and permits the Toeplitz reduction.
Consequently, the spin--Gaussian theorem is restricted to complete
non-wrapping prefixes, while the physical state in each charge sector converges on
each fixed prefix to its auxiliary sign-zero Gaussian comparator only in
the ordered fixed-block limit, not at fixed $\ell/L$.  The spatial result
has a different scope: the exact finite-size joint character and its marked
no-wrap scaling limit establish a spectral-counting and Virasoro-character
theorem.

Beyond Gaussian solvability, the next test is to construct a physical RDM
directly from microscopic states and determine its universal entropy or level
organization while using spatial symmetry and spectral data to recover the
continuum sector independently.  Such a construction would separate the
structural entanglement content of non-invertible symmetry from the solvable
Ising mechanism and provide a microscopic counterpart to continuum
defect-dressed RDMs and categorical symmetry resolution
\cite{NortheRossi2025,Saura2024,Das2024,ChoiRayhaunZheng2024,HeymannQuella2025}.

\appendix
\section{Conventions, explicit matrices, and special sectors}\label{app:explicit}

\subsection{Fixed-charge antisymmetric matrix in component form}

In the ordered Majorana basis $\gamma=(a_1,b_1,\dots,a_L,b_L)^{\mathsf T}$
of Eq.~(\ref{eq:gamma-order}), the homogeneous critical Hamiltonian
$H^{\mathrm{hom}}$ of Eq.~(\ref{eq:Hhom}) is represented, in a fixed
$\Omega$ sector, by the nonzero upper-triangular entries
\begin{equation}
 A^{\mathrm{hom}}_{a_j,b_j}=2,
 \qquad
 A^{\mathrm{hom}}_{b_j,a_{j+1}}=2\quad(1\le j<L),
 \qquad
 A^{\mathrm{hom}}_{a_1,b_L}=2\Omega,
 \label{appA:homogeneous-matrix}
\end{equation}
with the lower-triangular entries fixed by antisymmetry.  This is simply the
component form of $-X_j=i a_jb_j$, $-Z_jZ_{j+1}=i b_ja_{j+1}$, and
$-Z_LZ_1=-i\Omega b_La_1$.  In particular, the intersite coupling is a
single matrix element rather than a full $2\times2$ site block.

For the spatial defect Hamiltonian $H^{\mathrm d}$ of Eq.~(\ref{eq:Hdef})
in a fixed $\Omega$ sector, the active odd-cycle basis
$(b_1,a_2,b_2,\dots,a_L,b_L)$ of Eq.~(\ref{eq:odd-cycle-order}) gives the
$(2L-1)\times(2L-1)$ antisymmetric matrix $A^{\mathrm d}_{\Omega}$ with
entries
\begin{equation}
 (A^{\mathrm d}_{\Omega})_{m,m+1}=2
 \quad(1\le m<2L-1),
 \qquad
 (A^{\mathrm d}_{\Omega})_{2L-1,1}=2\Omega,
 \label{appA:defect-matrix}
\end{equation}
and all other entries zero.  The spectator $a_1$ corresponds to an all-zero
row and column in the full $2L\times2L$ matrix.  The odd cycle has exactly one
zero eigenvalue for every $L$.

\subsection{Jordan--Wigner boundary-sign table}

Table~\ref{tab:jw-signs} records the Pauli-to-Majorana identifications used
in the fixed-charge defect representation.

\begin{table}[ht]
\centering
\begin{tabular}{@{}lll@{}}
\toprule
Spin operator & Majorana form & Sites involved\\
\midrule
$-X_j$ & $i a_j b_j$ & $j$\\
$-Z_j Z_{j+1}$ & $i b_j a_{j+1}$ & $j,j+1$\\
$-Z_L Y_1$ & $i\Omega b_L b_1$ & $1,L$\\
$\Omega$ & $\prod_j X_j$ & all\\
\bottomrule
\end{tabular}
\caption{Fixed-charge Jordan--Wigner identities in the dressed defect
convention of Eq.~(\ref{eq:Hdef}).  The closing sign depends on the global
charge $\Omega=\pm1$.}
\label{tab:jw-signs}
\end{table}

Reversing the dressed defect sign (replacing $-Z_L Y_1$ by $+Z_L Y_1$)
reverses the closing orientation of the odd cycle but preserves its
single-particle spectrum.  Reversing the right-translation convention
reverses the virtual-index ordering in the MPO kernel and interchanges
$T$ and $T^{-1}$ in the fusion law.

\subsection{Orientation reversal dictionary}

The same local defect tensor admits two orientations: a temporal MPO and a
spatial defect Hamiltonian.  Their finite-lattice operator map is summarized in
Table~\ref{tab:orientation-dictionary}.

\begin{table}[ht]
\centering
\begin{tabular}{@{}lll@{}}
\toprule
Object & Temporal orientation & Spatial orientation\\
\midrule
Line wraps & Euclidean time & Spatial circle\\
Hilbert space & $\Hu$ (untwisted) & $\Hs$ (twisted)\\
Operator & $D_\sigma:\Hu\to\Hu$ & $H^{\mathrm d},T_\sigma:\Hs\to\Hs$\\
Role & Fusion algebra / identity & Spectrum and RDM\\
\bottomrule
\end{tabular}
\caption{Temporal versus spatial orientation of the lattice KW defect.
The two orientations are related by Euclidean rotation of the local tensor
but act on different Hilbert spaces and are not the same matrix.}
\label{tab:orientation-dictionary}
\end{table}

\section{Technical analytic lemmas}\label{app:technical}

\subsection{Fisher--Hartwig hypotheses for the sign symbol}

The Toeplitz symbol $h(e^{i\theta})=-\operatorname{sgn}\theta$ of
Eq.~(\ref{eq:sign-symbol}) has two pure jump discontinuities at
$\theta=0$ and $\theta=\pm\pi$.  For the characteristic symbol
$f_z(e^{i\theta})=z-h(e^{i\theta})$ with $z\in\mathbb C\setminus[-1,1]$,
the jump parameters may be chosen as
\begin{equation}
 \alpha_0=\alpha_\pi=0,
 \qquad
 \beta_0(z)=\beta(z),\quad \beta_\pi(z)=-\beta(z),
 \qquad
 \beta(z)=\frac{1}{2\pi i}\log\frac{z+1}{z-1},
 \label{appB:jump-parameters}
\end{equation}
in the notation of the scalar Fisher--Hartwig theorem
\cite{DeiftItsKrasovsky2011}.  For the principal logarithm,
$-1/2<\operatorname{Re}\beta(z)<1/2$ throughout
$\mathbb C\setminus[-1,1]$.  Hence on every compact set $K$ in that domain
there is a $\delta_K>0$ such that
\begin{equation}
 \sup_{z\in K}|\operatorname{Re}\beta(z)|\le\frac12-\delta_K,
 \qquad
 \inf_{z\in K}|E(z)|>0.
 \label{appB:domain}
\end{equation}
The Fisher--Hartwig formula therefore gives the consecutive determinant-ratio
limit for each fixed $z$ off the cut.  Local uniformity is obtained below from
the exact Schur complement rather than assumed from parameter-uniform
Fisher--Hartwig asymptotics.

\subsection{Cauchy estimates for the log-derivative ratio}

Define the unnormalized consecutive ratio
\begin{equation}
 \widehat R_N(z)=\frac{D_N(z)}{D_{N-1}(z)}.
 \label{appB:normalized-ratio}
\end{equation}
For the principal extension
$H_N=\left(\begin{smallmatrix}H_{N-1}&u_N\\u_N^*&0\end{smallmatrix}\right)$,
the Schur complement gives
\begin{equation}
 \widehat R_N(z)=z-u_N^*(zI-H_{N-1})^{-1}u_N.
 \label{appB:schur-ratio}
\end{equation}
The spectral support of every $H_N$ lies in $[-1,1]$ and $\|u_N\|\leq1$.
The Hermitian resolvent estimate therefore makes $\{\widehat R_N\}$ locally
bounded on the cut plane.  The fixed-parameter Fisher--Hartwig limit and
Vitali--Porter now imply
\begin{equation}
 \widehat R_N(z)\longrightarrow G(f_z)=\sqrt{z^2-1}
 \quad\text{locally uniformly on }\mathbb C\setminus[-1,1].
 \label{appB:vitali-ratio}
\end{equation}
Both sides are zero-free there.  On compact sets, local uniform convergence
therefore supplies a uniform lower bound for $|\widehat R_N|$ at large $N$;
Cauchy's derivative estimate gives
\begin{equation}
 \frac{\widehat R_N'(z)}{\widehat R_N(z)}
 \longrightarrow\frac{z}{z^2-1}.
 \label{appB:cauchy}
\end{equation}
This proves Eq.~(\ref{eq:log-derivative-ratio}) locally uniformly.  All Cauchy
disks stay inside the cut plane; no contour crosses the spectrum.

\subsection{Polynomial density in $W^{1,1}(-1,1)$}

Let $\|\varphi\|_*=|\varphi(1)|+\|\varphi'\|_{L^1(-1,1)}$.  For any
$\varphi\in W^{1,1}(-1,1)$, approximate $\varphi'$ in $L^1$ by a polynomial
$q$ with $\|\varphi'-q\|_{L^1}\le\varepsilon$.  Define
\begin{equation}
 p(x)=\varphi(1)+\int_1^x q(t)\,dt.
 \label{appB:poly-approx}
\end{equation}
Then $p$ is a polynomial with $p(1)=\varphi(1)$ and $p'=q$.  Moreover,
\begin{equation}
 \|p-\varphi\|_*
 =|(p-\varphi)(1)|+\|p'-\varphi'\|_{L^1}
 =\|q-\varphi'\|_{L^1}\le\varepsilon.
 \label{appB:approx-bound}
\end{equation}
Thus polynomials are dense in $(W^{1,1},\|\cdot\|_*)$.  The uniform bound
$|L_N(\varphi)|\le\|\varphi\|_*$ of Eq.~(\ref{eq:w11-spectral-shift})
then extends the limit functional $L_\infty(p)=\frac12p(-1)+\frac12p(1)$
from polynomials to all $W^{1,1}$ functions by continuity.  In particular,
Eq.~(\ref{eq:entropy-w11}) applies to $\eta$ because
$\eta'(x)=\frac12\log[(1-x)/(1+x)]\in L^1(-1,1)$ and $\eta(\pm1)=0$.

\subsection{Finite-sum trigonometric identities}

The sine sums used in the exact ground-energy formulas are derived from the
standard telescoping identity
\begin{equation}
 \sum_{n=1}^{N-1}\sin(n\theta)
 =\frac{\sin\frac{N\theta}{2}\,\sin\frac{(N-1)\theta}{2}}
      {\sin\frac{\theta}{2}}.
 \label{appB:telescope}
\end{equation}
Setting $N=(M+1)/2$ and $\theta=\pi/M$ gives the odd-cycle sum
\begin{equation}
 \sum_{n=1}^{(M-1)/2}\sin\frac{\pi n}{M}
 =\frac12\cot\frac{\pi}{2M},
 \label{appB:odd-sum}
\end{equation}
which is Eq.~(\ref{eq:odd-sine-sum}) with $M=2L-1$.  Setting
$N=L+1$ and $\theta=\pi/L$ with the antiperiodic shift gives
\begin{equation}
 \sum_{n=1}^{L}\sin\frac{(2n-1)\pi}{2L}
 =\csc\frac{\pi}{2L},
 \label{appB:even-sum}
\end{equation}
which is Eq.~(\ref{eq:anti-sine-sum}).  Both are exact for all $L\ge2$.

\subsection{Cauchy-block rank and determinant formulas}

The even critical-Majorana block $H_{2n}=iT_{2n}$ can be reordered into a
$2\times2$ Hermitian block matrix with $n\times n$ blocks
\begin{equation}
 H_{2n}=\begin{pmatrix}0&iB\\-iB^{\mathsf T}&0\end{pmatrix},
 \qquad
 B_{ab}=\frac{2}{\pi[2(b-a)+1]},\quad0\le a,b<n.
 \label{appB:even-block}
\end{equation}
The real matrix $B$ is of Cauchy type: with $x_a=2a$ and $y_b=2b+1$,
$B_{ab}=(2/\pi)/(y_b-x_a)$.  Cauchy's determinant formula gives
\begin{equation}
 \det B
 =\left(\frac{2}{\pi}\right)^n
 \frac{\prod_{a<a'}(x_{a'}-x_a)\prod_{b<b'}(y_b-y_{b'})}
      {\prod_{a,b}(y_b-x_a)}
 \neq0,
 \label{appB:cauchy-det}
\end{equation}
because the even and odd node sets are pairwise distinct.  Hence $B$ is
nonsingular and $H_{2n}$ has no zero eigenvalue.  The odd block
$H_{2n-1}$ has the analogous $n\times(n-1)$ off-diagonal block; each maximal
minor is a nonsingular Cauchy determinant, so that block has rank $n-1$ and
the full $(2n-1)\times(2n-1)$ Hermitian matrix has nullity exactly one.  This
justifies the bookkeeping in Eq.~(\ref{eq:odd-zero-bookkeeping}).

\subsection{Clifford circuit power-law derivation}

The modified-translation power identity of Eq.~(\ref{eq:Tsigma-power}) follows
directly from Clifford-circuit composition.  The ordered circuit is
\begin{equation}
 T_\sigma=V\,T\,\bigl(H_L\,\mathrm{CZ}_{L,1}\bigr)V^\dagger,
 \qquad V=e^{-i\pi Z_1/4},
 \label{appB:Tsigma-circuit}
\end{equation}
with products acting from right to left.  For the dressing rotation,
\begin{equation}
 V:\quad X_1\mapsto Y_1,\qquad
 Y_1\mapsto -X_1,\qquad Z_1\mapsto Z_1
 \label{appB:V-action}
\end{equation}
under $P\mapsto VPV^\dagger$.  The other elementary actions are
\begin{equation}
 \begin{aligned}
 H_L:&\quad Z_L\mapsto X_L,\ X_L\mapsto Z_L,\\
 \mathrm{CZ}_{L,1}:&\quad Z_L\mapsto Z_L,\ Z_1\mapsto Z_1,\
 X_L\mapsto X_LZ_1,\ X_1\mapsto Z_LX_1,\\
 T:&\quad X_j\mapsto X_{j+1},\ Z_j\mapsto Z_{j+1}.
 \end{aligned}
 \label{appB:clifford-action}
\end{equation}
Composing these maps in the order fixed by
Eq.~\eqref{appB:Tsigma-circuit} gives $T_\sigma O_jT_\sigma^\dagger=O_{j+1}$
for the ordered list of $2L-1$ positive local terms of $-H^{\mathrm d}$.
After $2L-1$ iterations, the induced Pauli automorphism agrees with
conjugation by $e^{i\pi\Omega/4}$.  Their ratio is therefore a scalar, and a
single nonzero computational-basis matrix element fixes it to one.  Thus
\begin{equation}
 T_\sigma^{2L-1}=e^{i\pi\Omega/4}
 =\frac{1+i\Omega}{\sqrt2},
\end{equation}
as used in Eq.~(\ref{eq:Tsigma-power}).

\section{Physical ensemble, exact finite-size obstruction, and Kramers descent}
\label{app:physical-rdm}

Let $L\geq2$, let the global charge be
$\Omega=\prod_{j=1}^{L}X_j$, and
$P_\omega=(I+\omega\Omega)/2$ projects onto the sector
$\Omega=\omega$, where $\omega=\pm1$.  Let $\rho_{\omega,L}$ denote the
unique ground-state projector of $H_L^{\rm d}$ in that sector.  The
physical charge-unresolved preparation is the equal-weight mixture
\begin{equation}
 \rho_{{\rm can},L}=\frac12\bigl(\rho_{+,L}+\rho_{-,L}\bigr).
 \label{appC:physical-ensemble}
\end{equation}
It is positive, normalized, stationary, and ground-supported, with
$\operatorname{Tr}(P_\omega\rho_{{\rm can},L})=1/2$ and
$\operatorname{Tr}(\Omega\rho_{{\rm can},L})=0$.

For the complete proper prefix $A_m=\{1,\ldots,m\}$, $1\leq m<L$, the
ordinary spin algebra is the full matrix algebra
$B((\mathbb C^2)^{\otimes m})$.  It has trivial center and is generated by
the first $2m$ Jordan--Wigner Majoranas.  Hence the physical prefix RDM is
the ordinary partial trace,
\begin{equation}
 \rho_{{\rm can},A_m}^{(L)}
 =\frac12\left(\rho_{+,A_m}^{(L)}+\rho_{-,A_m}^{(L)}\right).
 \label{appC:prefix-mixture}
\end{equation}
In particular, the global charge decomposition does not induce a local
direct sum or an automatic Shannon contribution to the prefix entropy.

\subsection{An exact finite-size obstruction}

For each charge sector, the sign-zero prescription defines the real
antisymmetric covariance
\begin{equation}
 \Gamma_{\omega,0}^{(L)}
 =-i\,\operatorname{sign}(iA_{\omega,L}),
 \qquad \operatorname{sign}(0)=0.
 \label{appC:sign-zero-covariance}
\end{equation}
This covariance is auxiliary: $\operatorname{Pf}\Gamma_{\omega,0}^{(L)}=0$,
whereas a state supported in the sector $\Omega=\omega$ has charge
expectation $\omega$.  The physical sector ground state instead uses the
unique pure completion with
$\operatorname{Pf}\Gamma_{\omega,{\rm phys}}^{(L)}=\omega$.
Let $\rho_{G,\omega,0;A_m}^{(L)}$ denote the restriction to $A_m$ of the
Gaussian state associated with \eqref{appC:sign-zero-covariance} in the same
Jordan--Wigner CAR chart.

The first complete-prefix mismatch occurs at $L=3$ and
$A_2=\{1,2\}$; the one-site coincidence is an accidental
low-dimensional identity.  Direct diagonalization in the spin algebra gives
\begin{equation}
 \operatorname{spec}\rho_{{\rm can},A_2}^{(3)}
 =\left\{\frac14\left(1\pm
 \frac{\sqrt{15+4\sqrt5}}5\right)\right\},
 \label{appC:physical-spectrum}
\end{equation}
with each displayed eigenvalue occurring twice.  For either
$\omega=\pm1$, the auxiliary Gaussian proxy has instead
\begin{equation}
 \operatorname{spec}\rho_{G,\omega,0;A_2}^{(3)}
 =\left\{\frac14\left(1\pm
 \frac{\sqrt{15+2\sqrt5}}5\right)\right\},
 \label{appC:proxy-spectrum}
\end{equation}
again with multiplicity two.  The respective purities are
$(10+\sqrt5)/25$ and $(20+\sqrt5)/50$, and therefore
\begin{equation}
 \left|\operatorname{Tr}\!\left[
   \bigl(\rho_{{\rm can},A_2}^{(3)}\bigr)^2\right]
 -\operatorname{Tr}\!\left[
   \bigl(\rho_{G,\omega,0;A_2}^{(3)}\bigr)^2\right]\right|
 =\frac{\sqrt5}{50}\neq0.
 \label{appC:purity-mismatch}
\end{equation}
Thus the physical mixture cannot universally collapse at finite $L$ to
either sector-derived sign-zero proxy.  In particular, no physical finite-size
single-particle entanglement level $\xi=0$ follows from the sign-zero
covariance.  This obstruction concerns finite size only; it does not affect
the separately proved common limit obtained by sending $L\to\infty$ at each
fixed prefix.

\subsection{Global sector exchange}

The finite-size obstruction coexists with an exact antiunitary symmetry.
Let $K_L$ denote complex conjugation in the computational basis and set
$\mathcal C_L=Z_1K_L$.  Complex conjugation reverses the sign of the sole
$Y_1$ in $H_L^{\rm d}$, while conjugation by $Z_1$ reverses it once more and
leaves the remaining terms invariant.  Consequently,
\begin{equation}
 \mathcal C_L H_L^{\rm d}\mathcal C_L^{-1}=H_L^{\rm d},
 \qquad
 \mathcal C_L\Omega\mathcal C_L^{-1}=-\Omega,
 \qquad \mathcal C_L^2=I.
 \label{appC:global-sector-exchange-symmetry}
\end{equation}
The uniqueness of the ground state in each charge sector then
implies
\begin{equation}
 \rho_{-,L}=Z_1\rho_{+,L}^{*}Z_1.
 \label{appC:sector-exchange}
\end{equation}
Sector support also gives
$[\rho_{{\rm can},L},\Omega]=0$.  Defining
$\Theta_L:=\Omega\mathcal C_L$, and choosing an immaterial overall phase,
we may write
\begin{equation}
 \Theta_L=\left(Y_1\prod_{j=2}^{L}X_j\right)K_L,
 \qquad
 \Theta_L\rho_{{\rm can},L}\Theta_L^{-1}
 =\rho_{{\rm can},L}.
 \label{appC:global-kramers-symmetry}
\end{equation}

\subsection{Partial-trace descent and parity spectral doubling}

To transfer this global symmetry to the entanglement spectrum, we now trace
its action to the prefix RDM.  Write $\Theta_L=U_LK_L$.  In the computational
product basis, both factors split across every proper prefix:
$U_L=U_{A_m}\otimes U_{A_m^c}$ and
$K_L=K_{A_m}\otimes K_{A_m^c}$, with
\begin{equation}
 U_{A_m}=Y_1\prod_{j=2}^{m}X_j.
 \label{appC:prefix-unitary}
\end{equation}
For every operator $M$, invariance of the partial trace under unitary
conjugation on the traced factor, together with its commutation with
entrywise conjugation in this basis, gives
\begin{align}
 &\operatorname{Tr}_{A_m^c}\!\left[
 (U_{A_m}\otimes U_{A_m^c})M^*
 (U_{A_m}\otimes U_{A_m^c})^\dagger\right]
 \nonumber\\
 &\hspace{30mm}=U_{A_m}\left(\operatorname{Tr}_{A_m^c}M\right)^*
 U_{A_m}^\dagger.
 \label{appC:partial-trace-antiunitary}
\end{align}
Applying this identity to \eqref{appC:global-kramers-symmetry} yields the
reduced antiunitary
\begin{equation}
 \Theta_{A_m}:=U_{A_m}K_{A_m},
 \qquad
 \Theta_{A_m}\rho_{{\rm can},A_m}^{(L)}\Theta_{A_m}^{-1}
 =\rho_{{\rm can},A_m}^{(L)},
 \qquad
 \Theta_{A_m}^{2}=U_{A_m}U_{A_m}^{*}=-I.
 \label{eq:prefix-kramers-symmetry}
\end{equation}
The last equality follows because $Y_1^*=-Y_1$, whereas every $X_j$ is
real.  If $v$ is an eigenvector of the Hermitian RDM, then
$\Theta_{A_m}v$ belongs to the same eigenspace and
$\langle v,\Theta_{A_m}v\rangle=0$.  Every eigenvalue, including zero,
therefore has even multiplicity.

Let $\Omega_{A_m}=\prod_{j=1}^{m}X_j$ be the local parity.  Since
$\Omega=\Omega_{A_m}\otimes\Omega_{A_m^c}$ and
$[\rho_{{\rm can},L},\Omega]=0$, unitary invariance of the partial trace
implies
\begin{equation}
 [\rho_{{\rm can},A_m}^{(L)},\Omega_{A_m}]=0,
 \qquad
 \Theta_{A_m}\Omega_{A_m}\Theta_{A_m}^{-1}=-\Omega_{A_m}.
 \label{appC:local-parity-relations}
\end{equation}
Thus $\Theta_{A_m}$ exchanges the two local-parity blocks of the RDM.  With
$P_{A_m,\pm}=(I\pm\Omega_{A_m})/2$, let
$R_+$ be the restriction of
$P_{A_m,+}\rho_{{\rm can},A_m}^{(L)}P_{A_m,+}$ to
$P_{A_m,+}\mathcal H_{A_m}$.  The two restrictions are antiunitarily
equivalent, so for every $L\geq2$ and $1\leq m<L$,
\begin{equation}
 \det\!\left(tI-\rho_{{\rm can},A_m}^{(L)}\right)
 =\det(tI-R_+)^2.
 \label{appC:parity-square}
\end{equation}
Choosing paired orthonormal bases of the two blocks gives the abstract
unitary equivalence
\begin{equation}
 \rho_{{\rm can},A_m}^{(L)}
 \cong\frac{I_2}{2}\otimes\sigma_{L,m},
 \qquad \sigma_{L,m}=2R_+,
 \qquad \operatorname{Tr}\sigma_{L,m}=1.
 \label{eq:abstract-qubit-factor}
\end{equation}
Consequently,
$S(\rho_{{\rm can},A_m}^{(L)})=\log2+S(\sigma_{L,m})$.
This is an exact many-body Kramers doubling, not a Gaussian factorization.
It does not define a canonical local qubit factorization or extraction
circuit.
Moreover, the abstract factor does not determine the ordered entropy
increment: the compensating contribution from $\sigma_{L,m}$ is fixed only
by the separate Toeplitz/Fisher--Hartwig comparison.

\section{Full-RDM bridge and Toeplitz endpoint}
\label{app:entropy-bridge}

The fixed-prefix covariance estimates determine the complete physical reduced
density matrices and their entropy in the strict order first $L\to\infty$
at fixed prefix length $m$, and only afterwards $m\to\infty$.  Throughout,
$A_m=\{1,\ldots,m\}$ is a complete proper prefix, $1\leq m<L$, with its
ordinary full spin matrix algebra.  On the $2m$ prefix Majoranas,
$\rho_G(\Gamma)$ denotes the Gaussian density matrix whose moments are fixed
by an admissible covariance $\Gamma$.

\subsection{Sectorwise correlators and complete-RDM convergence}

Let $N=2L-1$.  In the literal prefix order used in the main text, the common
limiting covariance of the two physical charge sectors is
\begin{equation}
 \Gamma_{\infty,m}=0\oplus T_{2m-1},
 \qquad
 (T_n)_{jk}=
 \begin{cases}
  \dfrac{2}{\pi(k-j)},&k-j\ \text{odd},\\[4pt]
  0,&k-j\ \text{even},
 \end{cases}
 \quad 0\leq j,k<n.
 \label{appD:limiting-covariances}
\end{equation}
Put $H_n=iT_n$.  This Hermitian matrix is the finite compression
$H_n=P_nM_hP_n$ of the $\{\pm1\}$-valued sign symbol $h$.  Consequently,
\begin{equation}
 H_n=H_n^*,\qquad \|H_n\|\leq1,
 \qquad \sigma(H_n)\subset[-1,1].
 \label{appD:toeplitz-support}
\end{equation}
Thus $T_n$ and $0\oplus T_n$ are admissible covariance matrices.  Moreover,
$J_nH_nJ_n=-H_n$ for $J_n=\operatorname{diag}(1,-1,\ldots)$, so the spectrum
of $H_n$ is symmetric about zero.

For either charge $\omega=\pm1$, the finite-cycle quadrature error and the
restriction of the physical zero-mode completion give the entrywise estimate
\begin{equation}
 \max_{1\leq a,b\leq 2m}
 \left|\bigl(\Gamma_{\omega,A_m}^{(L)}
       -\Gamma_{\infty,m}\bigr)_{ab}\right|
 \leq f_m(L),
 \qquad
 f_m(L)=\sqrt{\frac{2m-1}{2L-1}}
       +\frac{\pi(2m-1)}{2L-1}.
 \label{appD:entrywise-rate}
\end{equation}
The first term is the delocalized completion restricted to the prefix, while
the second is the Fourier-grid error.  Both terms vanish as $L\to\infty$ for
every fixed $m$; no uniformity in growing $m$ is asserted.

Fix an increasing even Majorana set $I\subset\{1,\ldots,2m\}$ with
$|I|=2r$.  Each physical sector ground state is the unique quasifree state
selected by its pure Pfaffian completion and is also a parity eigenstate.
Wick's rule therefore gives
\begin{equation}
 \langle\gamma_I\rangle_{\omega,L}
 =i^r\operatorname{Pf}\!\left[
   (\Gamma_{\omega,A_m}^{(L)})_I\right],
 \qquad
 \langle\gamma_I\rangle_{G}
 =i^r\operatorname{Pf}\!\left[(\Gamma_{\infty,m})_I\right].
 \label{appD:wick-pfaffian}
\end{equation}
All covariance entries have modulus at most one.  Expanding each Pfaffian over
perfect matchings and replacing one factor at a time therefore yields
\begin{equation}
 \left|\langle\gamma_I\rangle_{\omega,L}
       -\langle\gamma_I\rangle_G\right|
 \leq r(2r-1)!!\,f_m(L).
 \label{appD:pfaffian-telescope}
\end{equation}
Odd moments vanish in both states.  Since the Majorana monomials form an
orthogonal basis of the full prefix operator algebra, every prefix density
matrix has the exact expansion
\begin{equation}
 \rho=2^{-m}\sum_{I\subset\{1,\ldots,2m\}}
       \langle\gamma_I\rangle^*\gamma_I.
 \label{appD:density-expansion}
\end{equation}
Using $\|\gamma_I\|_1=2^m$ and summing
Eq.~\eqref{appD:pfaffian-telescope} over the even nonempty sets gives, for
each $\omega$,
\begin{equation}
 \left\|\rho_{\omega,A_m}^{(L)}
       -\rho_G(0\oplus T_{2m-1})\right\|_1
 \leq C_m f_m(L),
 \qquad
 C_m=\sum_{r=1}^{m}\binom{2m}{2r}r(2r-1)!!.
 \label{appD:trace-bound}
\end{equation}
This is an all-correlator reconstruction of the complete density matrix, not
an averaging of sector covariances.

The physical charge-unresolved prefix state is the ordinary convex sum
\begin{equation}
 \rho_{{\rm can},A_m}^{(L)}
 =\frac12\left(\rho_{+,A_m}^{(L)}+\rho_{-,A_m}^{(L)}\right).
 \label{appD:physical-prefix-mixture}
\end{equation}
Because Eq.~\eqref{appD:trace-bound} has the same comparator and bound in both
sectors, the triangle inequality gives
\begin{equation}
 \left\|\rho_{{\rm can},A_m}^{(L)}
       -\rho_G(0\oplus T_{2m-1})\right\|_1
 \leq C_m f_m(L)
 \xrightarrow[\substack{L\to\infty\\L>m}]{}0.
 \label{eq:physical-full-rdm-limit}
\end{equation}
The physical homogeneous and invertible-$\eta$ ground states are both
parity-even quasifree states.  If
$\Gamma_{{\rm hom},A_m}^{(L)}$ and
$\Gamma_{\eta,A_m}^{(L)}$ denote their prefix covariances, the antiperiodic and
periodic Fourier quadratures have the same $T_{2m}$ limit.  The physical
$\eta$ completion also contains a rank-two periodic zero-mode term whose
entries on a fixed prefix are $O(L^{-1})$.  Hence
\begin{equation}
 \max\!\left\{
 \max_{1\leq a,b\leq2m}
 \left|(\Gamma_{{\rm hom},A_m}^{(L)}-T_{2m})_{ab}\right|,
 \max_{1\leq a,b\leq2m}
 \left|(\Gamma_{\eta,A_m}^{(L)}-T_{2m})_{ab}\right|
 \right\}
 \xrightarrow[\substack{L\to\infty\\L>m}]{}0
 \qquad (m\ \text{fixed}).
 \label{appD:even-controls-entrywise-limit}
\end{equation}  Their odd
moments vanish, while the same Pfaffian telescoping controls every even moment.
Reconstruction in the complete monomial basis therefore gives
\begin{align}
 \left\|\rho_{{\rm hom},A_m}^{(L)}-\rho_G(T_{2m})\right\|_1
 &\xrightarrow[\substack{L\to\infty\\L>m}]{}0,
 \label{eq:hom-full-rdm-limit}\\
 \left\|\rho_{\eta,A_m}^{(L)}-\rho_G(T_{2m})\right\|_1
 &\xrightarrow[\substack{L\to\infty\\L>m}]{}0.
 \label{eq:eta-full-rdm-limit}
\end{align}
Equations~\eqref{eq:physical-full-rdm-limit},
\eqref{eq:hom-full-rdm-limit}, and \eqref{eq:eta-full-rdm-limit} concern
complete prefixes, rather than wrapping
or otherwise disconnected regions.  Their constants are not uniform in the
prefix size, so these limits cannot be promoted to growing or fixed-ratio
prefixes, nor can the two limits below be exchanged.  In particular, neither
statement identifies the finite-size physical mixture with a Gaussian state.

\subsection{Finite-cycle kernel and the fixed-prefix rate}

For a cycle of length $N$, choose the periodic or antiperiodic grid
$q_s=(2\pi s+\vartheta)/N$, with $\vartheta=0$ or $\pi$, and set the spectral
sign to zero at its jumps.  The real covariance kernel can be written
\begin{equation}
 C_{N,\vartheta}(d)=\operatorname{Im}\!\left[
 \frac1N\sum_{s=0}^{N-1}
 \bigl(-\operatorname{sgn}(\sin q_s)\bigr)e^{-iq_sd}\right].
 \label{appD:cycle-kernel}
\end{equation}
After centering the grid on $(-\pi,\pi]$, a cell-by-cell comparison with the
integral gives, for every fixed integer $d$,
\begin{equation}
 \left|C_{N,\vartheta}(d)-C_\infty(d)\right|
 \leq\frac{\pi(|d|+1)}{N},
 \qquad
 C_\infty(d)=
 \begin{cases}
  2/(\pi d),&d\ \text{odd},\\
  0,&d\ \text{even}.
 \end{cases}
 \label{appD:kernel-limit}
\end{equation}
At a jump, the displayed imaginary kernel is insensitive to the assigned
real sign value.  The largest separation inside $A_m$ is at most $2m-2$.
Combining Eq.~\eqref{appD:kernel-limit} with the restricted completion mass
and the all-correlator estimate gives the explicit fixed-prefix rate
\begin{equation}
 \left\|\rho_{\omega,A_m}^{(L)}
       -\rho_G(0\oplus T_{2m-1})\right\|_1
 \leq C_m\!\left[
 \sqrt{\frac{2m-1}{2L-1}}
 +\frac{\pi(2m-1)}{2L-1}\right].
 \label{eq:full-rdm-rate}
\end{equation}
The prefactor $C_m$ grows with $m$.  Thus
Eq.~\eqref{eq:full-rdm-rate} is deliberately a fixed-dimensional estimate and
supplies no simultaneous or fixed-ratio limit.

\subsection{Entropy continuity and the ordered physical theorem}

Let $d_m=2^m$ and
$\delta(\rho,\sigma)=\tfrac12\|\rho-\sigma\|_1$.  For fixed $m$, the
Audenaert bound
\begin{equation}
 |S(\rho)-S(\sigma)|
 \leq \delta\log(d_m-1)+h_2(\delta),
 \qquad 0\leq\delta\leq1-d_m^{-1},
 \label{appD:audenaert}
\end{equation}
applies separately to the physical KW, homogeneous, and $\eta$ fixed-prefix
trace-norm limits above.  In each case, the corresponding trace distance tends
to zero at fixed $m$ and therefore satisfies the displayed domain condition for
all sufficiently large $L$.  The bound then yields the three associated
entropy limits.

Define
\begin{equation}
 \eta(x)=-\frac{1+x}{2}\log\frac{1+x}{2}
         -\frac{1-x}{2}\log\frac{1-x}{2},
 \qquad \eta(0)=\log2.
 \label{appD:entropy-function}
\end{equation}
Let $S_n$ be the sum of $\eta(\nu)$ over the positive eigenvalues of $H_n$,
and let $S_{2m-1}^{\circ}$ use only the positive nonzero members of the odd
spectrum.  The exact Gaussian--Toeplitz dictionary is
\begin{equation}
 S\!\left[\rho_G(0\oplus T_{2m-1})\right]
 =\log2+S_{2m-1}^{\circ},
 \qquad
 S\!\left[\rho_G(T_{2m})\right]=S_{2m},
 \label{appD:entropy-dictionary}
\end{equation}
where $S_{2m-1}^{\circ}$ omits the unique zero eigenvalue of the odd active
block.  Therefore, at each fixed $m$,
\begin{equation}
 \lim_{\substack{L\to\infty\\L>m}}
 \left[S(\rho_{{\rm can},A_m}^{(L)})
      -S(\rho_{{\rm hom},A_m}^{(L)})\right]
 =\log2+S_{2m-1}^{\circ}-S_{2m}.
 \label{appD:fixed-prefix-entropy-limit}
\end{equation}
The endpoint derivation below establishes
$S_{2m}-S_{2m-1}^{\circ}\to\tfrac12\log2$ and hence proves the physical
statement
\begin{equation}
 \lim_{m\to\infty}\left\{
 \lim_{\substack{L\to\infty\\L>m}}
 \left[S(\rho_{{\rm can},A_m}^{(L)})
      -S(\rho_{{\rm hom},A_m}^{(L)})\right]\right\}
 =\frac12\log2.
 \label{appD:ordered-physical-entropy}
\end{equation}
The order displayed in Eq.~\eqref{appD:ordered-physical-entropy} is essential:
the available estimate does not control a prefix that grows with $L$.
Furthermore, the exact many-body Kramers factor from
Appendix~\ref{app:physical-rdm} gives
$S(\rho_{{\rm can},A_m}^{(L)})=\log2+S(\sigma_{L,m})$ at finite size, but this
factor alone does not determine the ordered constant.  The active component
supplies a compensating $-\tfrac12\log2$ through the separate odd--even
Toeplitz comparison.

\subsection{Independent endpoint derivation and exact nullity}

We derive the consecutive Toeplitz input directly from the Fisher--Hartwig
determinant ratio.  Define
\begin{equation}
 D_N(z)=\det(zI-H_N),\qquad
 R_N(z)=\frac{D_N(z)}{D_{N-1}(z)},\qquad
 G(z)=\sqrt{z^2-1}\sim z
 \label{appD:determinant-ratio-definitions}
\end{equation}
on $\Omega_c=\mathbb C\setminus[-1,1]$.  The fixed-parameter
Fisher--Hartwig formula for the two inverse jumps gives
\begin{equation}
 R_N(z)\longrightarrow G(z)
 \qquad (z\in\Omega_c\ \text{fixed}).
 \label{appD:ratio-pointwise}
\end{equation}
To upgrade this pointwise statement, write the principal extension as
\begin{equation}
 H_N=\begin{pmatrix}H_{N-1}&u_N\\u_N^*&0\end{pmatrix},
 \qquad
 R_N(z)=z-u_N^*(zI-H_{N-1})^{-1}u_N.
 \label{appD:schur-ratio}
\end{equation}
Equation~\eqref{appD:toeplitz-support} implies $\|u_N\|\leq1$.  The Hermitian
resolvent estimate then makes $\{R_N\}$ locally bounded on $\Omega_c$, so
Vitali--Porter upgrades Eq.~\eqref{appD:ratio-pointwise} to local uniform
convergence.  Both $R_N$ and $G$ are zero-free off the cut.  Cauchy's estimates
on a slightly larger compact set, together with the resulting uniform lower
bound, give
\begin{equation}
 \frac{R_N'(z)}{R_N(z)}\longrightarrow\frac{G'(z)}{G(z)}
 =\frac{z}{z^2-1}
 \quad\text{locally uniformly on }\Omega_c.
 \label{appD:log-derivative-ratio}
\end{equation}

Let
\begin{equation}
 F_N(x)=N_{H_N}(x)-N_{H_{N-1}}(x),
 \label{appD:counting-shift}
\end{equation}
where $N_H(x)$ counts eigenvalues not exceeding $x$.  Since $H_{N-1}$ is a
principal compression of $H_N$, Cauchy interlacing gives
$F_N(x)\in\{0,1\}$.  For every $\varphi\in W^{1,1}(-1,1)$, Stieltjes
integration by parts yields
\begin{equation}
 \operatorname{Tr}\varphi(H_N)-\operatorname{Tr}\varphi(H_{N-1})
 =\varphi(1)-\int_{-1}^{1}\varphi'(x)F_N(x)\,dx,
 \label{appD:spectral-shift}
\end{equation}
and hence the uniform bound
$|\varphi(1)|+\|\varphi'\|_1$.  The Laurent coefficients of
Eq.~\eqref{appD:log-derivative-ratio} at infinity give, for every polynomial
$p$,
\begin{equation}
 \operatorname{Tr}p(H_N)-\operatorname{Tr}p(H_{N-1})
 \longrightarrow\frac{p(-1)+p(1)}2.
 \label{appD:polynomial-endpoint}
\end{equation}
Polynomials are dense in the norm
$\|\varphi\|_*=|\varphi(1)|+\|\varphi'\|_1$: approximate $\varphi'$ in
$L^1$ by a polynomial $q$ and integrate $q$ from the endpoint $1$.  Thus
Eq.~\eqref{appD:polynomial-endpoint} extends to every
$\varphi\in W^{1,1}(-1,1)$.  The entropy function obeys
\begin{equation}
 \eta'(x)=\frac12\log\frac{1-x}{1+x}\in L^1(-1,1),
 \qquad \eta(\pm1)=0,
 \label{appD:entropy-w11}
\end{equation}
so
$\operatorname{Tr}\eta(H_N)-\operatorname{Tr}\eta(H_{N-1})\to0$.

It remains to account for the zero eigenvalue of the odd block.  After the
even and odd indices are grouped, $H_{2n}$ has off-diagonal block equal to the
following Cauchy matrix up to invertible diagonal phase and sign factors:
\begin{equation}
 B_{ab}=\frac{2}{\pi[2(b-a)+1]},
 \qquad 0\leq a,b<n.
 \label{appD:cauchy-block}
\end{equation}
Cauchy's determinant formula makes $B$ nonsingular, so $H_{2n}$ has nullity
zero.  For $H_{2n-1}$ the corresponding block is $n\times(n-1)$, and its
maximal minors are nonsingular Cauchy matrices.  It therefore has rank $n-1$,
which leaves exactly one zero eigenvalue in $H_{2n-1}$.  Spectral symmetry and
Eq.~\eqref{appD:entropy-function} now give
\begin{equation}
 \operatorname{Tr}\eta(H_{2n})=2S_{2n},
 \qquad
 \operatorname{Tr}\eta(H_{2n-1})=2S_{2n-1}^{\circ}+\log2.
 \label{appD:nullity-bookkeeping}
\end{equation}
Applying the endpoint limit along $N=2n$ proves
$S_{2n}-S_{2n-1}^{\circ}\to\tfrac12\log2$.  This is the independent Toeplitz
input used in Eq.~\eqref{appD:ordered-physical-entropy}.

\section{Finite joint character and Virasoro scaling limit}
\label{app:character-theorem}

The modified translation $T_\sigma$ refines the exact odd-cycle energies.  We
first formulate the finite-size statement inside a fixed
physical charge block, where the relevant fermionic generators preserve
charge.  Its low-energy scaling limit is then taken through centered marked
measures.  The finite-lattice roots determine the occurring translation
classes, centered lifts retain the conformal-spin branches, and the limiting
marked measure gives the Virasoro-character decomposition.

\subsection{Fixed-charge algebra and physical CAR modes}

Put $N=2L-1$ and let $U_\omega=P_\omega T_\sigma P_\omega$ be the modified
translation restricted to the charge block $\Omega=\omega$, $\omega=\pm1$.
In the Jordan--Wigner convention of Eq.~\eqref{eq:gamma-order}, write
\begin{equation}
 q=(q_1,\ldots,q_N)=(b_1,a_2,b_2,\ldots,a_L,b_L),
 \qquad C=Z_1Z_2=-iq_1q_2,
 \label{appE:registered-chart}
\end{equation}
with $a_1$ the spectator.  Direct composition of the circuit in
Eq.~\eqref{eq:Tsigma-circuit} gives, for
$\Phi(O)=T_\sigma OT_\sigma^\dagger$,
\begin{align}
 \Phi(a_1)&=a_1C,&
 \Phi(q_r)&=Cq_{r+2} &&(1\leq r\leq N-2),\nonumber\\
 \Phi(q_{N-1})&=-i\Omega q_2,&
 \Phi(q_N)&=+i\Omega q_1.
 \label{appE:registered-Majorana-action}
\end{align}
This action is nonlinear on bare odd Majoranas, so replacing $\Omega$ by a
number inside $q_r$ would not be defined within the physical fixed-charge
algebra.  In a fixed charge block the even edges
\begin{equation}
 e_r=-iq_rq_{r+1}\quad(1\leq r<N),
 \qquad e_N=-i\omega q_Nq_1
 \label{appE:fixed-charge-edges}
\end{equation}
generate the full matrix algebra on $P_\omega\mathcal H$.  Substitution into
Eq.~\eqref{appE:registered-Majorana-action} shows that their adjoint action is
implemented by the orthogonal shift
\begin{equation}
 R_\omega(a_1)=a_1,
 \qquad R_\omega(q_r)=q_{r+2},
 \qquad q_{r+N}=\omega q_r.
 \label{appE:orthogonal-shift}
\end{equation}
Consequently $\operatorname{Ad}_{U_\omega}$ is the Spin implementation of
$R_\omega$ on the complete fixed-charge algebra, up to one scalar.

The boundary condition in Eq.~\eqref{appE:orthogonal-shift} requires
$e^{i\kappa N}=\omega$.  Choose the printed positive-energy Fourier operators
\begin{equation}
 f_{\omega n}^\dagger=\frac1{\sqrt{2N}}
 \sum_{r=1}^{N}e^{i\kappa_{\omega n}r}q_r,
 \qquad
 \epsilon_n(L)=4\sin\frac{\pi n}{N},
 \qquad 1\leq n\leq L-1,
 \label{appE:fourier-creators}
\end{equation}
where
\begin{equation}
 \kappa_{\omega n}=
 \begin{cases}
  -\pi n/N,&(-1)^n=\omega,\\
  \pi+\pi n/N,&(-1)^n=-\omega.
 \end{cases}
 \label{appE:allowed-momenta}
\end{equation}
The bare $f_{\omega n}^\dagger$ is odd and therefore does not act within one
charge block.  With $a_1$ the spectator Majorana, the charge-preserving
physical CAR generators are instead
\begin{equation}
 \psi_{\omega n}^\dagger=a_1f_{\omega n}^\dagger,
 \qquad
 \psi_{\omega n}=-a_1f_{\omega n}.
 \label{appE:physical-CAR}
\end{equation}
They commute with $\Omega$ and satisfy the $(L-1)$-mode CAR on
$P_\omega\mathcal H$.  Substituting the Fourier sum into
Eq.~\eqref{appE:orthogonal-shift} gives
$R_\omega(f_{\omega n}^\dagger)=e^{-2i\kappa_{\omega n}}
 f_{\omega n}^\dagger$.  The two cases in
Eq.~\eqref{appE:allowed-momenta} obey
$e^{-2i\kappa_{\omega n}}=e^{2\pi i\omega(-1)^nn/N}$, while $a_1$ is fixed.
Therefore the exact physical adjoint action is
\begin{equation}
 U_\omega\psi_{\omega n}^\dagger U_\omega^\dagger
 =e^{2\pi i\ell_{\omega n}/N}\psi_{\omega n}^\dagger,
 \qquad
 \ell_{\omega n}=\omega(-1)^n n.
 \label{appE:mode-action}
\end{equation}
This fixed-charge statement is the well-defined projection of the spin-space
Clifford action in Eq.~\eqref{appE:registered-Majorana-action}; it is not
obtained by replacing $\Omega$ with a
number inside a bare odd Majorana operator.

The active real zero mode $z_\omega$ and $a_1$ form a complex zero mode
$c_{\omega0}=(a_1+iz_\omega)/2$.  To fix its physical occupation, introduce
the real Fourier pairs
\begin{equation}
 x_{\omega n}=f_{\omega n}+f_{\omega n}^\dagger,
 \qquad
 y_{\omega n}=i(f_{\omega n}^\dagger-f_{\omega n})
 =-\sqrt{\frac2N}\sum_{r=1}^{N}
   \sin(\kappa_{\omega n}r)q_r.
 \label{appE:real-Fourier-pairs}
\end{equation}
Let $B_\omega$ be the orthogonal change of basis whose ordered rows are
$(z_\omega,x_{\omega1},y_{\omega1},\ldots)$.  For $\omega=+1$ these rows are
the positively oriented real Fourier Vandermonde.  The $\omega=-1$ basis is
obtained by multiplying the $q_r$ columns by $(-1)^r$ and reversing the sign
of each of the $L-1$ printed sine rows.  Their determinant ratio is
\begin{equation}
 \prod_{r=1}^{N}(-1)^r\,(-1)^{L-1}
 =(-1)^{N(N+1)/2+L-1}=-1,
\end{equation}
and hence
\begin{equation}
 \det B_\omega=\omega.
 \label{appE:Fourier-orientation}
\end{equation}
Using
\begin{equation}
 \Omega=(-i)^La_1q_1\cdots q_N,
 \qquad
 q_1\cdots q_N=\omega z_\omega
 \prod_{n=1}^{L-1}x_{\omega n}y_{\omega n},
 \label{appE:volume-form}
\end{equation}
together with the occupation convention defined by
$c_{\omega0}$ and $f_{\omega n}$ gives the exact charge identity
\begin{equation}
 \Omega=\omega(-1)^{n_0+\sum_{n=1}^{L-1}n_n},
 \label{appE:charge-selector}
\end{equation}
where $n_0$ and $n_n$ are the zero- and positive-mode occupations.  Hence a
physical state with occupied positive-mode set
$A\subseteq\{1,\ldots,L-1\}$ has the unique zero-mode occupation
\begin{equation}
 n_0(A)=|A|\pmod2.
 \label{appE:zero-selector}
\end{equation}
There is no additional zero-mode degeneracy in a fixed charge sector.  The
CAR relations make the $2^{L-1}$ states
\begin{equation}
 |A;\omega\rangle=
 \prod_{n\in A}^{\nearrow}\psi_{\omega n}^\dagger
 |\varnothing;\omega\rangle
 \label{appE:fock-basis}
\end{equation}
linearly independent.  Their number equals
$\dim P_\omega\mathcal H=2^{L-1}$, so they form a basis of the charge block.

Let $W_\omega$ be the second-quantized implementation of
Eq.~\eqref{appE:mode-action}, normalized to fix
$|\varnothing;\omega\rangle$.  Since the physical CAR generate the full
fixed-charge matrix algebra, $U_\omega=t_{0,\omega}W_\omega$ for one
state-independent scalar.  The scalar is fixed by traces, which can be
evaluated directly in the computational $Z$ basis.  For
$\operatorname{Tr}T_\sigma$, the shift constraint leaves only the two constant
bit strings; the seam and outer-$V$ phases cancel and each diagonal amplitude
is $2^{-1/2}$.  For $\operatorname{Tr}(\Omega T_\sigma)$, the output must
instead be the bitwise complement of the input.  The shift condition then
leaves only the two alternating strings, and the Hadamard, seam, and $V$
phases give each amplitude $i(-1)^{L+1}/\sqrt2$.  Thus
\begin{equation}
 \operatorname{Tr}T_\sigma=\sqrt2,
 \qquad
 \operatorname{Tr}(\Omega T_\sigma)=i(-1)^{L+1}\sqrt2,
 \label{appE:circuit-traces}
\end{equation}
so
$\operatorname{Tr}U_\omega=e^{i\pi\omega(-1)^{L+1}/4}$.  On the other hand,
\begin{equation}
 \operatorname{Tr}W_\omega=
 \prod_{n=1}^{L-1}
 \left(1+e^{2\pi i\omega(-1)^nn/N}\right).
 \label{appE:Fock-trace}
\end{equation}
Using
$\prod_{n=1}^{L-1}2\cos(\pi n/N)=1$ and
$\sum_{n=1}^{L-1}(-1)^nn=[(-1)^{L+1}N-1]/4$ shows that
$\operatorname{Tr}W_\omega\neq0$.  Since
$U_\omega=t_{0,\omega}W_\omega$, the quotient
$\operatorname{Tr}U_\omega/\operatorname{Tr}W_\omega$ fixes the vacuum
eigenvalue
\begin{equation}
 t_{0,\omega}=\alpha_{\omega,L}
 =e^{i\pi\omega/(4N)}.
 \label{appE:vacuum-scalar}
\end{equation}
The power relation $U_\omega^N=e^{i\pi\omega/4}$ is consistent with this
value, but by itself would determine only an $N$th-root class, not the
occurring scalar.

For a subset $A$, define
\begin{equation}
 E_{\omega,L}(A)=E_{0,L}+\sum_{n\in A}\epsilon_n(L),
 \qquad
 r_\omega(A)=\sum_{n\in A}\omega(-1)^nn,
 \qquad
 k_{\omega,L}(A)=\operatorname{cent}_N r_\omega(A),
 \label{appE:energy-root-labels}
\end{equation}
where
$\operatorname{cent}_N:\mathbb Z_N\to\{-(L-1),\ldots,L-1\}$ is the
centered representative.  Equations~\eqref{appE:mode-action} and
\eqref{appE:vacuum-scalar} give the exact relative root
\begin{equation}
 \frac{t_{\omega,L}(A)}{t_{0,\omega}}
 =\exp\!\left(\frac{2\pi i}{N}r_\omega(A)\right)
 =\exp\!\left(\frac{2\pi i}{N}k_{\omega,L}(A)\right).
 \label{eq:exact-root-ratio}
\end{equation}
With a formal variable $z$ satisfying $z^N=1$, the finite occupation--root
character is
\begin{equation}
 \mathcal C_{\omega,L}(\mathbf x,z)=
 \prod_{n=1}^{L-1}\left(1+x_nz^{\omega(-1)^nn}\right).
 \label{eq:finite-joint-character}
\end{equation}
Every monomial records an actual physical Fock state.  The exact joint
multiplicity is
\begin{equation}
 M_{\omega,L}(E,k)=\#\left\{A\subseteq\{1,\ldots,L-1\}:
 E_{\omega,L}(A)=E,\ k_{\omega,L}(A)=k\right\}.
 \label{appE:joint-multiplicity}
\end{equation}
This grouping remains valid through accidental energy collisions because both
energy and centered root are fixed.  The product proves occurrence and
multiplicity of its roots; it does not claim that every algebraically allowed
$N$th root occurs.

\subsection{Branch-safe spin classes and finite witnesses}

The effective circumference determined independently by
Eq.~\eqref{eq:Ekw-series} is
$L_{\rm eff}=L-\tfrac12=N/2$.  Together with the velocity normalization in
Eq.~\eqref{eq:circle-Casimir}, this fixes the scaling cylinder used below.
One application of the dressed right shift $R_\omega$ in
Eq.~\eqref{appE:orthogonal-shift} advances the local generators by one
original-lattice spacing.
A scaling state of conformal spin $s=h-\bar h$ therefore acquires the phase
\begin{equation}
 e^{2\pi i s/L_{\rm eff}}=e^{4\pi i s/N}.
 \label{appE:translation-spin-normalization}
\end{equation}
Combining this normalization with the exact eigenphase
$\alpha_{\omega,L}e^{2\pi ik/N}$ gives the convention-fixed spin class
\begin{equation}
 s(\omega,k)=\frac{\omega}{16}+\frac{k}{2}\pmod1.
 \label{appE:spin-lift}
\end{equation}
Equation~\eqref{appE:spin-lift} is the spin class obtained after the declared
centered-section choice for an occurring finite-size label.  It does not
define a global additive lift of cyclic momentum through centered wraps.

The physical subsets $A=\varnothing$ and $A=\{1\}$ exist in each charge block
for every $L\geq2$.  Their labels are
\begin{equation}
\begin{array}{c|c|c|c}
 \omega&A&k_{\omega,L}(A)&s(\omega,k)\\ \hline
 +1&\varnothing&0&1/16\\
 +1&\{1\}&-1&-7/16\\
 -1&\varnothing&0&-1/16\\
 -1&\{1\}&+1&7/16
\end{array}
\label{appE:four-witnesses}
\end{equation}
Thus all four classes
$\{1/16,-1/16,7/16,-7/16\}\pmod1$ occur at finite lattice size.  This
conclusion uses the explicit Fock basis and scalar, not merely the power
identity for $T_\sigma$.

\subsection{Eventual no-wrap and the marked scaling measure}

For scaling excitations, write
\begin{equation}
 d_{n,L}=\frac{N}{2\pi}\sin\frac{\pi n}{N},
 \qquad
 D_L(A)=\sum_{n\in A}d_{n,L}.
 \label{appE:scaled-energy}
\end{equation}
The centered excitation-spin marker and its two chiral coordinates are
\begin{equation}
 p_{\omega,L}(A)=\frac{k_{\omega,L}(A)}2,
 \qquad
 u_{\omega,L}(A)=\frac{D_L(A)+p_{\omega,L}(A)}2,
 \qquad
 v_{\omega,L}(A)=\frac{D_L(A)-p_{\omega,L}(A)}2.
 \label{appE:chiral-coordinates}
\end{equation}
These are finite-size spectral markers, not eigenvalues of separately defined
finite-size chiral Hamiltonians.  Define the locally finite marked measure
\begin{equation}
 \mu_{\omega,L}=\sum_{A\subseteq\{1,\ldots,L-1\}}
 \delta_{(u_{\omega,L}(A),v_{\omega,L}(A))}.
 \label{appE:marked-measure}
\end{equation}

For $1\leq n\leq L-1$, the chord bound
$\sin(\pi n/N)\geq2n/N$ gives
\begin{equation}
 d_{n,L}\geq\frac n\pi.
 \label{appE:energy-lower-bound}
\end{equation}
For a compact set $K\subset\mathbb R^2$, put
$R=\max\{0,\sup_{(u,v)\in K}(u+v)\}$.  Since
$D_L(A)=u_{\omega,L}(A)+v_{\omega,L}(A)$, every atom in $K$ satisfies
$D_L(A)\leq R$.  Equation~\eqref{appE:energy-lower-bound} then gives
$\sum_{n\in A}n\leq\pi R$.  Only finitely many subsets can occur in that
window, uniformly in $L$, and their uncentered labels $r_\omega(A)$ are
uniformly bounded.  For all sufficiently large $L$,
\begin{equation}
 k_{\omega,L}(A)=r_\omega(A)
 \label{appE:eventual-no-wrap}
\end{equation}
for every such subset.  Since $d_{n,L}\to n/2$ mode by mode,
integration against a compactly supported continuous function becomes a
uniformly finite sum of convergent terms.  Therefore
\begin{equation}
 \mu_{\omega,L}\xrightarrow[L\to\infty]{\rm vague}\mu_\omega,
 \label{appE:vague-convergence}
\end{equation}
where
\begin{equation}
 \mu_\omega=\sum_{A\Subset\mathbb N}
 \delta_{(u_\omega(A),v_\omega(A))},
 \quad
 \begin{aligned}
 u_\omega(A)&=\frac14\sum_{n\in A}
     [n+\omega(-1)^nn],\\
 v_\omega(A)&=\frac14\sum_{n\in A}
     [n-\omega(-1)^nn].
 \end{aligned}
 \label{appE:limiting-measure}
\end{equation}
In the limiting measure, each summand in $u_\omega$ and $v_\omega$ is either
zero or $n/2$.  Hence both coordinates are nonnegative and
$u_\omega(A)+v_\omega(A)=\tfrac12\sum_{n\in A}n$, which proves local
finiteness.  The convergence is stabilization below every fixed excitation
cutoff, not coefficientwise convergence of finite products with moving sine
exponents in one formal-series ring.

For a single occupied mode, Eq.~\eqref{appE:limiting-measure} allocates
\begin{equation}
(\delta h,\delta\bar h)=
\begin{cases}
 (n/2,0),&\omega(-1)^n=+1,\\
 (0,n/2),&\omega(-1)^n=-1.
\end{cases}
\label{appE:single-mode-chirality}
\end{equation}
For $\omega=+1$, the even modes $n=2m$ occupy the left chirality and the odd
modes $n=2m-1$ occupy the right chirality; for $\omega=-1$ the allocation is
reversed.  Thus the right shift $R_\omega$, before any CFT character is named,
fixes the limiting products
\begin{align}
 G_+(q,\bar q)&=\prod_{m\geq1}(1+q^m)
                 \prod_{m\geq1}(1+\bar q^{m-1/2}),
 \label{appE:Gplus}\\
 G_-(q,\bar q)&=\prod_{m\geq1}(1+q^{m-1/2})
                 \prod_{m\geq1}(1+\bar q^m).
 \label{appE:Gminus}
\end{align}
Replacing the right shift by its inverse would exchange the two chiralities
and would describe a different operator.

\subsection{Virasoro characters and four primary pairs}

The vacuum witnesses in Eq.~\eqref{appE:four-witnesses}, together with the
independent Casimir result $\Delta_{\rm def}=1/16$ from
Eq.~\eqref{eq:Delta-sigma}, select
$(h_{+,0},\bar h_{+,0})=(1/16,0)$ and
$(h_{-,0},\bar h_{-,0})=(0,1/16)$.  Restoring the $c=1/2$ vacuum powers gives
\begin{align}
 Z_+(q,\bar q)&=q^{1/16-c/24}\bar q^{-c/24}G_+(q,\bar q),
 \label{appE:Zplus}\\
 Z_-(q,\bar q)&=q^{-c/24}\bar q^{1/16-c/24}G_-(q,\bar q).
 \label{appE:Zminus}
\end{align}
Jacobi's products for the Ising characters read
\begin{align}
 \chi_\sigma(q)&=q^{1/24}\prod_{m\geq1}(1+q^m),
 \label{appE:sigma-product}\\
 \chi_1(q)+\chi_\epsilon(q)
 &=q^{-1/48}\prod_{m\geq1}(1+q^{m-1/2}).
 \label{appE:NS-product}
\end{align}
There is no additional factor of two in a fixed charge sector.  The Ramond
normalization is already contained in Eq.~\eqref{appE:sigma-product}.  To
resolve the Neveu--Schwarz modules, set
\begin{equation}
 P_{\rm NS}^{\pm}(q)=\prod_{m\geq1}(1\pm q^{m-1/2}).
\end{equation}
Then
\begin{equation}
 \chi_1(q)=q^{-1/48}\frac{P_{\rm NS}^{+}(q)+P_{\rm NS}^{-}(q)}2,
 \qquad
 \chi_\epsilon(q)=q^{-1/48}
 \frac{P_{\rm NS}^{+}(q)-P_{\rm NS}^{-}(q)}2.
 \label{appE:NS-parity-projection}
\end{equation}
Define the finite-state NS parity in either charge block by
$|A\cap(2\mathbb N-1)|\pmod2$, because the half-integer factors in both
Eqs.~\eqref{appE:Gplus} and~\eqref{appE:Gminus} arise from the odd mode
indices $n=2m-1$.  Even NS occupation parity gives the identity module,
whereas odd NS parity gives the energy module.  Combining these identities with
Eqs.~\eqref{appE:Zplus}--\eqref{appE:Zminus} yields
\begin{equation}
 \begin{aligned}
 Z_{\rm KW}(q,\bar q)&=Z_+(q,\bar q)+Z_-(q,\bar q)\\
 &=\chi_\sigma(q)
   [\bar\chi_1(\bar q)+\bar\chi_\epsilon(\bar q)]\\
 &\quad+[\chi_1(q)+\chi_\epsilon(q)]
   \bar\chi_\sigma(\bar q).
 \end{aligned}
 \label{appE:Virasoro-decomposition}
\end{equation}
The four primary pairs and their finite witnesses are therefore
\begin{equation}
\begin{array}{c|c|c|c|c}
 \omega&\text{NS parity}&(h,\bar h)&\text{primary pair}&
 \text{finite witness}\\ \hline
 +1&\text{even}&(1/16,0)&(\sigma,1)&\varnothing\\
 +1&\text{odd}&(1/16,1/2)&(\sigma,\epsilon)&\{1\}\\
 -1&\text{even}&(0,1/16)&(1,\sigma)&\varnothing\\
 -1&\text{odd}&(1/2,1/16)&(\epsilon,\sigma)&\{1\}
\end{array}
\label{appE:four-towers}
\end{equation}
These four primary pairs realize
Eq.~\eqref{appE:Virasoro-decomposition}.  The one-mode witness occupies the
lowest NS half-integer mode and changes NS parity.  It is the
character-theoretic bottom of the $\epsilon$ module in that chirality, not a
descendant of the identity primary.

Diagonal specialization erases the chiral assignment:
\begin{equation}
 G_+(x,x)+G_-(x,x)=2\prod_{n\geq1}(1+x^{n/2}).
 \label{appE:diagonal-control}
\end{equation}
The diagonal identity shows that energy multiplicities alone cannot determine
chirality, the signs of conformal spin, or the placement of
the four primary pairs.  The finite-size joint polynomial and the marked
scaling-counting and chirality statements above are unconditional in the
lattice convention fixed above.  Naming the limiting products as Virasoro
characters and decomposing them into the four primary pairs additionally use
the standard $c=1/2$ Ising/Jacobi identities and the effective-length,
velocity, and translation conventions fixed by Eqs.~\eqref{eq:Ekw-series},
\eqref{eq:circle-Casimir}, and \eqref{appE:orthogonal-shift}.  No statement
constructs finite-size
Virasoro operators or asserts global additive lifts, root saturation, or
formal-series convergence.

\section{Finite-size Gaussian single-particle levels}\label{app:lowest-level-illustration}

Table~\ref{tab:state-prescriptions} records the sectorwise physical Gaussian
states and auxiliary Gaussian comparators used to illustrate the distinction
proved in Section~\ref{sec:entanglement-es}; none is the charge-unresolved
physical RDM.  The tabulated $\nu_1$ and $\xi_1$ are Gaussian single-particle
levels, not the many-body entanglement spectrum of
Ref.~\cite{LiHaldane2008}; they are obtained by the correlation-matrix
method~\cite{Peschel2003,PeschelEisler2009}.
\begin{table}[ht]
\centering
\begin{tabular}{@{}lll@{}}
\toprule
system & covariance object & status and finite-$L$ lowest level\\
\midrule
homogeneous & physical ground covariance & physical; $\nu_1>0$ on the grid\\
$\eta$ defect & $\operatorname{Pf}(\Gamma)=+1$ completion & physical; $\nu_1>0$ on the grid\\
KW, $\Omega=\omega$ & $\Gamma_{\omega,{\rm phys}}$ & physical; infrared-soft level\\
KW, $\Omega=\omega$ & $\Gamma_{\omega,0}$ & auxiliary; exact $\nu_1=\xi_1=0$\\
\bottomrule
\end{tabular}
\caption{Finite-size covariance objects used in the Gaussian single-particle
illustration.  Here $\nu_1$ is the smallest nonnegative eigenvalue of the
restricted covariance and $\xi_1=\log[(1+\nu_1)/(1-\nu_1)]$.  The sign-zero
KW row is auxiliary and is not the physical charge-unresolved RDM.}
\label{tab:state-prescriptions}
\end{table}

The finite grid, on which every restriction retains a complete non-wrapping
prefix of $2\ell$ spin sites, is
\begin{equation}
 L\in\{64,128,256\},\qquad \ell\in\{4,8,16\}.
 \label{eq:ES-grid}
\end{equation}
Every physical pure-state spectrum is computed from the complete site
restriction; no Majorana is deleted.  The auxiliary sign-zero comparator uses
the same complete CAR restriction.  Figure~\ref{fig:level-resolved-es}
compares the resulting lowest single-particle level across the three physical
systems and the auxiliary comparator on this finite grid.  Line style identifies
the covariance object, while marker shape identifies the block size $\ell$.

\begin{figure}[ht]
 \centering
 \includegraphics[width=0.76\textwidth]{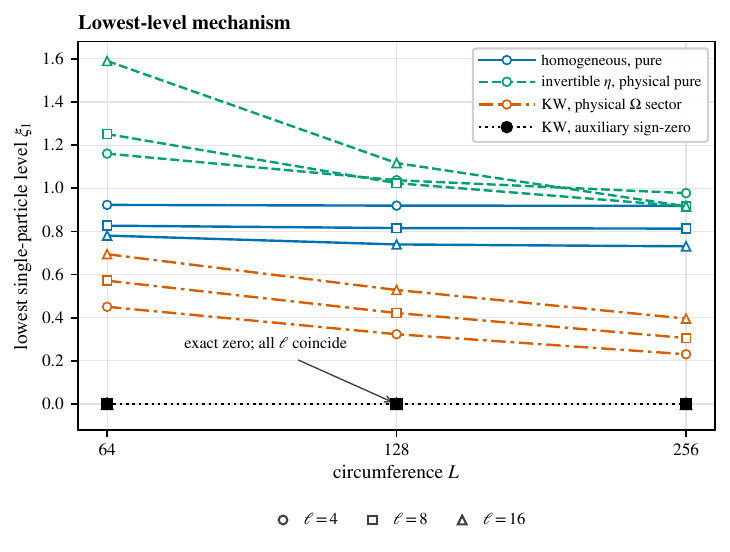}
 \caption{\textbf{Lowest Gaussian single-particle level.}  The level $\xi_1$
 for the homogeneous physical ground state (solid blue), the invertible-$\eta$
 physical pure state (dashed green), the Pfaffian-selected physical state in
either KW charge block (dash-dotted orange), and the auxiliary sign-zero KW
covariance (dotted black).  Circles, squares, and triangles denote
$\ell=4,8,16$, respectively; all three auxiliary sign-zero series coincide
exactly at $\xi_1=0$.  The auxiliary comparator has exact $\xi_1=0$,
whereas either physical KW sector obeys the analytic softening bound in
Eq.~\eqref{eq:pure-xi-bound}; the two mathematical completions within one
charge block are Gaussian-isospectral on complete prefixes.}
 \label{fig:level-resolved-es}
\end{figure}

At fixed block, the lines in Figure~\ref{fig:level-resolved-es} are guides
connecting the three sampled circumferences.  On this finite grid, the
homogeneous, physical $\eta$, and physical KW levels are positive, while the
auxiliary sign-zero level is exactly zero.  The physical KW level also obeys
the analytic softening ceiling in Eq.~\eqref{eq:pure-xi-bound}; no nonzero
finite-$L$ lower bound is claimed.  Figure~\ref{fig:level-resolved-es} is
therefore an auxiliary finite-grid illustration, not an input to any theorem.
It makes no statement about higher entanglement levels or about the many-body
spectrum of the physical convex mixture.

\paragraph{Code availability.}
Python/NumPy code for the finite-size reduced-state, ordered-entropy, and joint energy--modified-translation calculations is available in the Zenodo software record at \href{https://doi.org/10.5281/zenodo.21523746}{\texttt{doi:10.5281/zenodo.21523746}}.

\acknowledgments

During the preparation of this manuscript, the author used ChatGPT (version 5.5) to improve its linguistic clarity and readability. The author subsequently reviewed and edited the entire manuscript and takes full responsibility for its final content.

\bibliographystyle{unsrt}
\bibliography{refs}

\begin{thebibliography}{10}

\bibitem{BhardwajEtAl2024}
Lakshya Bhardwaj, Lea~E. Bottini, Ludovic Fraser-Taliente, Liam Gladden, Dewi
  S.~W. Gould, Arthur Platschorre, and Hannah Tillim.
\newblock Lectures on generalized symmetries.
\newblock {\em Phys. Rept.}, 1051:1--87, 2024.

\bibitem{Shao2023TASI}
Shu-Heng Shao.
\newblock What's done cannot be undone: {TASI} lectures on non-invertible
  symmetries, 2023.
\newblock {TASI} 2023 lecture notes.

\bibitem{SchaferNameki2024}
Sakura Sch\"{a}fer-Nameki.
\newblock {ICTP} lectures on (non-)invertible generalized symmetries.
\newblock {\em Phys. Rept.}, 1063:1--55, 2024.

\bibitem{BhardwajTachikawa2018}
Lakshya Bhardwaj and Yuji Tachikawa.
\newblock On finite symmetries and their gauging in two dimensions.
\newblock {\em J. High Energy Phys.}, 03:189, 2018.

\bibitem{KramersWannier1941I}
H.~A. Kramers and G.~H. Wannier.
\newblock Statistics of the two-dimensional ferromagnet. part i.
\newblock {\em Phys. Rev.}, 60(3):252--262, 1941.

\bibitem{KramersWannier1941II}
H.~A. Kramers and G.~H. Wannier.
\newblock Statistics of the two-dimensional ferromagnet. part ii.
\newblock {\em Phys. Rev.}, 60(3):263--276, 1941.

\bibitem{Onsager1944}
Lars Onsager.
\newblock Crystal statistics. {I}. {A} two-dimensional model with an
  order-disorder transition.
\newblock {\em Phys. Rev.}, 65:117--149, 1944.

\bibitem{BelavinPolyakovZamolodchikov1984}
A.~A. Belavin, A.~M. Polyakov, and A.~B. Zamolodchikov.
\newblock Infinite conformal symmetry in two-dimensional quantum field theory.
\newblock {\em Nucl. Phys. B}, 241:333--380, 1984.

\bibitem{FrohlichFuchsRunkelSchweigert2004}
J\"{u}rg Fr\"{o}hlich, J\"{u}rgen Fuchs, Ingo Runkel, and Christoph Schweigert.
\newblock Kramers-wannier duality from conformal defects.
\newblock {\em Phys. Rev. Lett.}, 93:070601, 2004.

\bibitem{Ishibashi1989}
Nobuyuki Ishibashi.
\newblock The boundary and crosscap states in conformal field theories.
\newblock {\em Mod. Phys. Lett. A}, 4:251--264, 1989.

\bibitem{Cardy1989}
John~L. Cardy.
\newblock Boundary conditions, fusion rules and the {Verlinde} formula.
\newblock {\em Nucl. Phys. B}, 324:581--596, 1989.

\bibitem{PetkovaZuber2001}
Valentina~B. Petkova and Jean-Bernard Zuber.
\newblock Generalised twisted partition functions.
\newblock {\em Phys. Lett. B}, 504:157--164, 2001.

\bibitem{BachasBrunner2008}
Constantin Bachas and Ilka Brunner.
\newblock Fusion of conformal interfaces.
\newblock {\em J. High Energy Phys.}, 02:085, 2008.

\bibitem{FrohlichFuchsRunkelSchweigert2007}
J\"{u}rg Fr\"{o}hlich, J\"{u}rgen Fuchs, Ingo Runkel, and Christoph Schweigert.
\newblock Duality and defects in rational conformal field theory.
\newblock {\em Nucl. Phys. B}, 763:354--430, 2007.

\bibitem{OshikawaAffleck1996}
Masaki Oshikawa and Ian Affleck.
\newblock Defect lines in the {Ising} model and boundary states on orbifolds.
\newblock {\em Phys. Rev. Lett.}, 77(13):2604--2607, 1996.

\bibitem{OshikawaAffleck1997}
Masaki Oshikawa and Ian Affleck.
\newblock Boundary conformal field theory approach to the critical
  two-dimensional {Ising} model with a defect line.
\newblock {\em Nucl. Phys. B}, 495(3):533--582, 1997.

\bibitem{AasenMongFendley2016}
David Aasen, Roger S.~K. Mong, and Paul Fendley.
\newblock Topological defects on the lattice: I. the {Ising} model.
\newblock {\em J. Phys. A: Math. Theor.}, 49(35):354001, 2016.

\bibitem{AasenFendleyMong2020}
David Aasen, Paul Fendley, and Roger S.~K. Mong.
\newblock Topological defects on the lattice: Dualities and degeneracies, 2020.

\bibitem{Grimm2002}
Uwe Grimm.
\newblock Spectrum of a duality-twisted {Ising} quantum chain.
\newblock {\em J. Phys. A: Math. Gen.}, 35(3):L25--L30, 2002.

\bibitem{CalabreseCardy2004}
Pasquale Calabrese and John Cardy.
\newblock Entanglement entropy and quantum field theory.
\newblock {\em J. Stat. Mech.}, 0406:P06002, 2004.

\bibitem{CalabreseCardy2009}
Pasquale Calabrese and John Cardy.
\newblock Entanglement entropy and conformal field theory.
\newblock {\em J. Phys. A: Math. Theor.}, 42:504005, 2009.

\bibitem{SeibergShao2024}
Nathan Seiberg and Shu-Heng Shao.
\newblock Majorana chain and {Ising} model---(non-invertible) translations,
  anomalies, and emanant symmetries.
\newblock {\em SciPost Phys.}, 16:064, 2024.

\bibitem{WilliamsonEtAl2016}
Dominic~J. Williamson, Nick Bultinck, Micha\"{e}l Mari\"{e}n, Mehmet~B.
  \c{S}ahino\u{g}lu, Jutho Haegeman, and Frank Verstraete.
\newblock Matrix product operators for symmetry-protected topological phases:
  Gauging and edge theories.
\newblock {\em Phys. Rev. B}, 94:205150, 2016.

\bibitem{BultinckEtAl2017}
Nick Bultinck, Micha\"{e}l Mari\"{e}n, Dominic~J. Williamson, Mehmet~B.
  \c{S}ahino\u{g}lu, Jutho Haegeman, and Frank Verstraete.
\newblock Anyons and matrix product operator algebras.
\newblock {\em Ann. Phys. (N.Y.)}, 378:183--233, 2017.

\bibitem{CiracEtAl2021}
J.~Ignacio Cirac, David P\'{e}rez-Garc\'{i}a, Norbert Schuch, and Frank
  Verstraete.
\newblock Matrix product states and projected entangled pair states: Concepts,
  symmetries, theorems.
\newblock {\em Rev. Mod. Phys.}, 93:045003, 2021.

\bibitem{LootensDelcampOrtizVerstraete2023}
Laurens Lootens, Cl{\'e}ment Delcamp, Gerardo Ortiz, and Frank Verstraete.
\newblock Dualities in one-dimensional quantum lattice models: Symmetric
  hamiltonians and matrix product operator intertwiners.
\newblock {\em PRX Quantum}, 4:020357, 2023.

\bibitem{SakaiSatoh2008}
Kazuhiro Sakai and Yuji Satoh.
\newblock Entanglement through conformal interfaces.
\newblock {\em J. High Energy Phys.}, 12:001, 2008.

\bibitem{BachasBrunnerRoggenkamp2013}
Constantin Bachas, Ilka Brunner, and Daniel Roggenkamp.
\newblock Fusion of critical defect lines in the {2D} {Ising} model.
\newblock {\em J. Stat. Mech.}, 1308:P08008, 2013.

\bibitem{RoySaleur2022}
Ananda Roy and Hubert Saleur.
\newblock Entanglement entropy in the {Ising} model with topological defects.
\newblock {\em Phys. Rev. Lett.}, 128:090603, 2022.

\bibitem{Rogerson2022}
David Rogerson, Frank Pollmann, and Ananda Roy.
\newblock Entanglement entropy and negativity in the {Ising} model with
  defects.
\newblock {\em J. High Energy Phys.}, 06(06):165, 2022.

\bibitem{Rockwood2025}
Gavin Rockwood.
\newblock Entanglement {H}amiltonians for periodic free fermion chains with
  defects.
\newblock {\em J. Stat. Mech.}, 2025:073101, 2025.

\bibitem{NortheRossi2025}
Christian Northe and Paolo Rossi.
\newblock Entanglement through topological defects: Reconciling theory with
  numerics, 2025.

\bibitem{GoldsteinSela2018}
Moshe Goldstein and Eran Sela.
\newblock Symmetry-resolved entanglement in many-body systems.
\newblock {\em Phys. Rev. Lett.}, 120:200602, 2018.

\bibitem{XavierAlcarazSierra2018}
Jos\'{e}~C. Xavier, Francisco~C. Alcaraz, and Germ\'{a}n Sierra.
\newblock Equipartition of the entanglement entropy.
\newblock {\em Phys. Rev. B}, 98:041106(R), 2018.

\bibitem{BonsignoriRuggieroCalabrese2019}
Riccarda Bonsignori, Paola Ruggiero, and Pasquale Calabrese.
\newblock Symmetry resolved entanglement in free fermionic systems.
\newblock {\em J. Phys. A: Math. Theor.}, 52:475302, 2019.

\bibitem{MurcianoDiGiulioCalabrese2020}
Sara Murciano, Giuseppe Di~Giulio, and Pasquale Calabrese.
\newblock Entanglement and symmetry resolution in two dimensional free quantum
  field theories.
\newblock {\em J. High Energy Phys.}, 08:073, 2020.

\bibitem{Saura2024}
Pablo Saura-Bastida, Arpit Das, Germ\'an Sierra, and Javier Molina-Vilaplana.
\newblock Categorical-symmetry resolved entanglement in conformal field theory.
\newblock {\em Phys. Rev. D}, 109:105026, 2024.

\bibitem{Das2024}
Arpit Das, Javier Molina-Vilaplana, and Pablo Saura-Bastida.
\newblock Generalized symmetry resolution of entanglement in conformal field
  theory for twisted and anyonic sectors.
\newblock {\em Phys. Rev. D}, 110:125005, 2024.

\bibitem{ChoiRayhaunZheng2024}
Yichul Choi, Brandon~C. Rayhaun, and Yunqin Zheng.
\newblock Noninvertible symmetry-resolved affleck-ludwig-cardy formula and
  entanglement entropy from the boundary tube algebra.
\newblock {\em Phys. Rev. Lett.}, 133:251602, 2024.

\bibitem{HeymannQuella2025}
Jared Heymann and Thomas Quella.
\newblock Revisiting the symmetry-resolved entanglement for noninvertible
  symmetries in $1{+}1$d conformal field theories.
\newblock {\em Phys. Rev. D}, 112:025004, 2025.

\bibitem{ChungPeschel2001}
Ming-Chiang Chung and Ingo Peschel.
\newblock Density-matrix spectra of solvable fermionic systems.
\newblock {\em Phys. Rev. B}, 64:064412, 2001.

\bibitem{CheongHenley2004}
Siew-Ann Cheong and Christopher~L. Henley.
\newblock Many-body density matrices for free fermions.
\newblock {\em Phys. Rev. B}, 69:075111, 2004.

\bibitem{Peschel2003}
Ingo Peschel.
\newblock Calculation of reduced density matrices from correlation functions.
\newblock {\em J. Phys. A: Math. Gen.}, 36(14):L205, 2003.

\bibitem{PeschelEisler2009}
Ingo Peschel and Viktor Eisler.
\newblock Reduced density matrices and entanglement entropy in free lattice
  models.
\newblock {\em J. Phys. A: Math. Theor.}, 42(50):504003, 2009.

\bibitem{VidalLatorreRicoKitaev2003}
Guifr\'{e} Vidal, Jos\'{e}~I. Latorre, Enrique Rico, and Alexei Kitaev.
\newblock Entanglement in quantum critical phenomena.
\newblock {\em Phys. Rev. Lett.}, 90:227902, 2003.

\bibitem{AffleckLudwig1991}
Ian Affleck and Andreas W.~W. Ludwig.
\newblock Universal noninteger ``ground-state degeneracy'' in critical quantum
  systems.
\newblock {\em Phys. Rev. Lett.}, 67:161--164, 1991.

\bibitem{BrehmBrunnerJaudSchmidtColinet2016}
Enrico~M. Brehm, Ilka Brunner, Daniel Jaud, and Cornelius Schmidt-Colinet.
\newblock Entanglement and topological interfaces.
\newblock {\em Fortschr. Phys.}, 64:516--535, 2016.

\bibitem{GutperleMiller2016}
Michael Gutperle and John~D. Miller.
\newblock A note on entanglement entropy for topological interfaces in {RCFT}s.
\newblock {\em J. High Energy Phys.}, 04:176, 2016.

\bibitem{LiebSchultzMattis1961}
Elliott Lieb, Theodore Schultz, and Daniel Mattis.
\newblock Two soluble models of an antiferromagnetic chain.
\newblock {\em Ann. Phys. (N.Y.)}, 16:407--466, 1961.

\bibitem{Pfeuty1970}
Pierre Pfeuty.
\newblock The one-dimensional {Ising} model with a transverse field.
\newblock {\em Ann. Phys. (N.Y.)}, 57:79--90, 1970.

\bibitem{HauruEtAl2016}
Markus Hauru, Glen Evenbly, Wen~Wei Ho, Davide Gaiotto, and Guifre Vidal.
\newblock Topological conformal defects with tensor networks.
\newblock {\em Phys. Rev. B}, 94:115125, 2016.

\bibitem{PollmannEtAl2010}
Frank Pollmann, Ari~M. Turner, Erez Berg, and Masaki Oshikawa.
\newblock Entanglement spectrum of a topological phase in one dimension.
\newblock {\em Physical Review B}, 81:064439, 2010.

\bibitem{JinKorepin2004}
B.-Q. Jin and V.~E. Korepin.
\newblock Quantum spin chain, toeplitz determinants and the fisher--hartwig
  conjecture.
\newblock {\em J. Stat. Phys.}, 116:79--95, 2004.

\bibitem{ItsJinKorepin2005}
A.~R. Its, B.-Q. Jin, and V.~E. Korepin.
\newblock Entanglement in the {XY} spin chain.
\newblock {\em J. Phys. A: Math. Gen.}, 38:2975--2990, 2005.

\bibitem{Audenaert2007}
Koenraad M.~R. Audenaert.
\newblock A sharp continuity estimate for the von neumann entropy.
\newblock {\em Journal of Physics A: Mathematical and Theoretical},
  40(28):8127--8136, 2007.

\bibitem{DeiftItsKrasovsky2011}
Percy Deift, Alexander Its, and Igor Krasovsky.
\newblock Asymptotics of toeplitz, hankel, and toeplitz+hankel determinants
  with fisher--hartwig singularities.
\newblock {\em Annals of Mathematics}, 174(2):1243--1299, 2011.

\bibitem{LiHaldane2008}
Hui Li and F.~D.~M. Haldane.
\newblock Entanglement spectrum as a generalization of entanglement entropy:
  Identification of topological order in non-abelian fractional quantum hall
  effect states.
\newblock {\em Phys. Rev. Lett.}, 101:010504, 2008.

\end{thebibliography}
\end{document}